\documentclass[openright,twoside]{iitbthesis}

\usepackage{graphicx}
\usepackage{amssymb}
\usepackage{amsmath}
\usepackage{amsfonts}
\usepackage{amsthm}
\usepackage{verbatim}
\usepackage{url}
\usepackage{hyperref}
\usepackage{breakurl}
\usepackage[numbers]{natbib}
\usepackage{wrapfig}
\usepackage{algorithmic}
\usepackage{hyphenat}
\usepackage{appendix}
\usepackage{color}
\usepackage{pdfpages}

\newtheorem{theorem}{Theorem}[chapter]
\newtheorem{lemma}[theorem]{Lemma}
\newtheorem{corollary}[theorem]{Corollary}
\newtheorem{observation}{Observation}[chapter]

\newtheorem{algorithm}{Algorithm}[chapter]
\newtheorem{definition}{Definition}[chapter]

\begin{document}
%

\clearpage\pagenumbering{roman}

\title{Wireless Coverage Area Computation and Optimization}
\author{Prateek R.~Kapadia}
\date{2015}

\rollnum{05429702}

\iitbdegree{Doctor of Philosophy}

\thesis

\department{DEPARTMENT OF COMPUTER SCIENCE \& ENGINEERING}

\setguide{Prof.~Om Damani}
\setcoguide{Prof.~Animesh Kumar}


\maketitle

\begin{dedication}
To my family, for giving me selfless support for my selfish endeavor.
\end{dedication}

  \begin{abstract}
A wireless network's design must include the optimization of the area of
coverage of its wireless transmitters - mobile and base stations in cellular 
networks, wireless access points in WLANs, or nodes on a transmit schedule in a 
wireless ad-hoc network. Typically, the coverage
optimization for the common channels is ``solved'' by spatial multiplexing,
i.e. keeping the access networks far apart. However, with increasing densities 
of wireless network deployments (including the {\em Internet-of-Things}) and 
paucity of spectrum, and new developments like whitespace devices and 
self-organizing, cognitive networks, there is a need to manage interference and 
optimize coverage by efficient algorithms that correctly set the transmit 
powers to ensure that transmissions only use the power necessary.

In this work we study methods for computing and optimizing interference-limited
coverage maps of a set of transmitters. We progress successively through 
increasingly realistic network scenarios. We begin with a disk model with a 
fixed set of transmitters and present an optimal algorithm for computing the 
coverage map. We then enhance the model to include updates to the network, in 
the form of addition or deletion of one transmitter. In this dynamic setting, 
we present an optimal algorithm to maintain updates to the coverage map. We 
then move to a more realistic interference model - the SINR model. For the SINR 
model we first show geometric bases for coverage maps. We then present a method 
to approximate the measure of the coverage area. Finally, we present an 
algorithm that uses this measure to optimize the coverage area with minimum 
total transmit power.
\end{abstract}


\tableofcontents
\listoftables
\listoffigures

\cleardoublepage\pagenumbering{arabic} 

\chapter{Introduction}
A wireless network is a communication network in which network nodes 
communicate with each other over wireless media. Wireless communication between 
two nodes involves the transmission of radio signals from one node (the 
transmitter) which are then decoded by the intended node (the receiver). 
Successful decoding of a radio signal requires sufficient receive signal 
energy at the receiver. Radio signals from simultaneous transmissions also 
typically interfere with each other, and receivers are unable to decode 
signals that are disrupted by interfering transmissions.

The set of locations at which potential receivers in the network are able to 
decode transmissions intended for them is the {\em coverage map} of the 
wireless network. For example, Figure \ref{figure:CoverageIntro} shows a 
typical wireless network with a fixed set of transmitters. The shaded area in 
this figure shows the network's coverage map.

\begin{figure}
\centering
\includegraphics[width=3in]{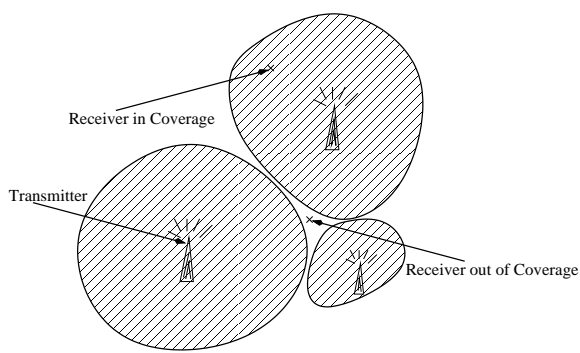}
\caption{Example of Coverage Map with 3 Transmitters}
\label{figure:CoverageIntro}
\end{figure}
\section{Motivation and goals}
A key service quality indicator of a wireless network is the size or measure of 
its coverage map, i.e.~the {\em coverage area}. The maintenance of an 
adequately large coverage area is achieved by coverage area maximization. The 
objective function for this optimization is the coverage area, and the 
optimization variables and constraints are drawn from the network topology and 
energy profile, such as the transmitter and receiver location constraints, 
channels available for transmission, receive sensitivity and environmental 
noise, transmission schedules, MAC parameters (for example, the backoff 
algorithm and retransmit policy), and available and allowed transmit power for 
individual transmitters and the network.

Both speed and accuracy of coverage area measurement and maximization are 
necessary to manage the phenomenal expanse of applications on the wireless 
medium - cell sites, mobile subscribers, access points, WiFi-connected portable 
devices, and on-demand mobile software.

Infrastructure wireless networks, like WLANs and cellular networks, have their
wireless hops anchored to access points and base stations. In these networks,
much of the network coverage optimization is done before the infrastructure is
deployed. In ad-hoc networks, in contrast, coverage optimization happens while 
the network is in operation. In both cases coverage is limited by the RF 
environment - due to path loss, fading and shadowing - and by interference from 
simultaneous transmissions on conflicting bands. In cellular networks, the 
optimization begins as early as designing the regulatory framework - 
apportioning RF bands by country, region and network operator - to limit 
interference and thereby, improve coverage. WLANs use non-regulated bands, and 
hence the interference management for coverage optimization happens at network 
deployment. Some of the coverage optimization happens during the signalling 
between device and access point (or base station): the device is informed of 
specific channels in the band to use for each transmission in order to avoid 
interference.

Transmitter locations may be restricted due to availability of shelter, power
source, cooling, or other infrastructure restrictions. The schedule of
transmissions too may not be within the control of the designer, given the large
variety of applications. Once a network deployment location is chosen the only 
variables available for coverage optimization possibly are the channels, number 
of transmitters and their transmit powers.

Furthermore, optimization methods that allow the designer to experiment
with configurations quickly are desirable. Such optimization methods are
also required for self-management of wireless networks, where the transmitters
dynamically vary their parameters to increase coverage.


\section{Coverage Maps}
An intermediate step in designing to optimize coverage is the computation of
the `coverage map' for a given set of transmitters. By coverage map we mean the
set of feasible locations for placement of receivers such that each receiver
can decode correctly a transmission intended for it. Our rationale for 
suggesting this step is that computing the coverage map may validate an optimum 
choice of the network's parameters - for example, the transmitter locations.

Computing and then optimizing the coverage map requires consideration of some 
basic constraints of wireless transmissions. The constraints we focus on are 
the following: wireless transmission energy reduces with distance, and 
simultaneous transmissions in close proximity of the receiver cause 
interference. Interference at a receiver may be avoided by ensuring that
transmissions are only on a single `channel' - fundamentally: frequency,
time-slot, or orthogonal code. This may be feasible in infrastructure networks,
since link control protocols are available to transmit out-of-band channel
information to the receiver. However, in less regulated networks, like wireless
ad-hoc networks, simultaneous transmission and reception on the same channel may
be unavoidable, and automated self-management is required.

The first step in the computation of the coverage map is the computation of the
coverage map for a single `channel'. Suppose the coverage map of a given set of
transmitters for one channel is known. Then the designer can attempt to optimize
coverage by varying the location of the transmitters or their transmission
power, or by employing more channels.

While an exact coverage map can be computed for disk models, we shall see that
for the SINR model, we can only estimate the area in coverage - since the
coverage boundaries are not defined by closed-form equations. Extending 
coverage estimation beyond the geometric SINR is further complicated by 
irregular coverage areas, coverage areas that change shape due to fading and 
shadowing, interference caused by transmitters outside the administrative 
control of the network, and asymmetric channel conditions between the 
communicating pair.
\section{Statement of Work}
This thesis contributes new algorithms that: 1) compute the coverage map of a
wireless network for different wireless topology models, and 2) find an optimal 
assignment of transmission parameters that maximizes the coverage area. The 
algorithms for computing the coverage map report the boundary of the set of 
points in coverage. For wireless models in which geometric computation of the 
boundary is inefficient, we show an algorithm that reports a bounded-accuracy 
estimate of the coverage area. Further, we show an algorithm for optimizing 
(maximizing) the coverage area of a given network under topology and energy 
constraints.

We first consider the problem of computing the coverage map for a single
channel, modeling the network with the `protocol model' of Gupta et
al.~\cite{GK00}.
\begin{enumerate}
\item Transmissions occur only in the 2-dimensional plane.
\item Each transmitter's transmission range is a circular disk centered around 
it, and its interference range is a larger concentric disk.
\item The coverage region of a transmitter is the set of points lying on its 
transmission disk and outside every other transmitter's interference disk.
\end{enumerate}
The coverage map is thus the union of the coverage regions of all transmitters.

Figure \ref{figure:CoverageMap} shows an example of a coverage map in the 
protocol model.

\begin{figure}[htbp]
\begin{center}
\includegraphics{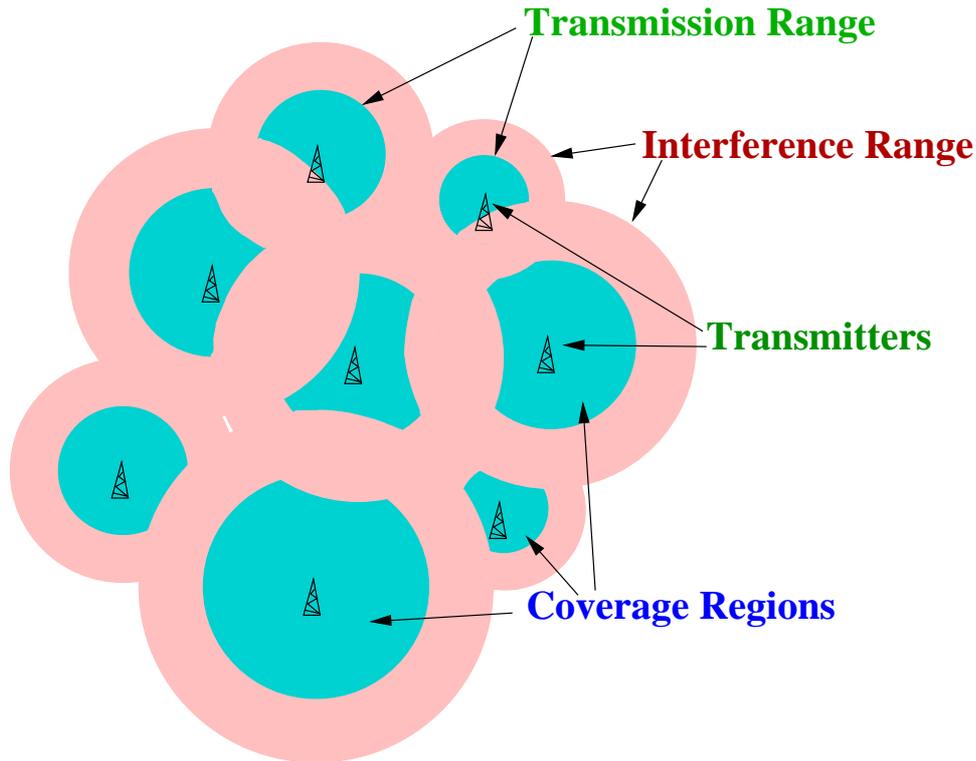}
\end{center}
\caption{A Coverage Map in the Protocol Model}
\label{figure:CoverageMap}
\end{figure}

In the protocol model, coverage is decided by set-membership alone - coverage
regions being points inside appropriate transmission and interference
ranges. This allows us to compute the coverage map efficiently using simple
computational geometric primitives. However, this model precludes {\em path
loss} - the loss of transmission energy with distance from the transmitter.

In Chapter \ref{chapter:Fixed} we demonstrate a method for computing the
coverage map (for a single channel) under our model. This method requires the
entire set of transmitters to be known {\it a priori}. However, the designer
would have to re-calculate the coverage map for an incremental change in the
network, like for example, addition of a new transmitter. In Chapter
\ref{chapter:dynamicCoverage} we extend these results to allow maintenance of
the coverage map when one transmitter is added or removed at a time.

In Chapter \ref{chapter:SINR}, we explore the coverage problem for a more
`realistic' model - the SINR (Signal-to-Interference-plus-Noise-Ratio) model,
also called {\it Physical Model} by Gupta et al.~\cite{GK00}. We observe that
partitioning methods similar to the protocol model studied earlier can be
employed - each coverage region corresponds to a partition. However, there are
differences in the shapes of the curves enclosing the coverage regions, and
hence their representation in coverage computation is different.

In Chapter \ref{chapter:Optimization} we propose a simple algorithm that 
finds an optimal transmit power assignment that maximizes the coverage area of 
a set of transmitters operating on the same channel. This algorithm uses a 
coverage estimation function for the coverage area that may be adapted to any 
deterministic coverage model. We demonstrate the efficiency of this algorithm 
experimentally using the SINR model to compare it with other algorithms. We 
also note how the algorithm can be extended to a generic coverage model.

\chapter{Literature Review}
A recent survey by Phillips et al.~\cite{phillips} covers various coverage 
mapping methods and transmission models reported over the past few decades.
This work considers the problems of computing and optimizing the coverage map 
for a single channel. It is inspired by the asymptotic capacity limits for 
wireless networks espoused in the {\it protocol} and {\it physical} models by 
Gupta et al.~\cite{GK00}.

Our work in coverage map computation generalizes the coverage map computation 
reported by So et al.~\cite{Ye}. They compute the coverage map for a static 
wireless sensor network without considering interference. Our ideas for 
coverage map computation are inspired by generalizations of Voronoi diagrams 
reported by Aurenhammer et al.~in \cite{DisksBalls}, \cite{AurenhammerPowerDia},
 and \cite{AurenhammerNotes}. The extensions of these ideas to dynamic coverage 
maps has been influenced by the works on randomized geometric algorithms from 
Mulmuley (\cite{Ketan}), Aragon et al.~\cite{Seidel}, and Clarkson et 
al.~\cite{RS2}. Our work on the coverage map in the SINR model is related to 
research by Avin et al.~\cite{Chen09}, who report an approximation algorithm 
that decides membership of a point in an SINR coverage map.

Our approach to coverage optimization extends that of `successive refinement' 
by Ahmed et al.~\cite{Keshav} that reports optimum transmit power assignments 
to access points assuming a protocol model.  A recent work by Plets et 
al.~\cite{plets2014calculation} describes a tool for optimal design for indoor 
wireless LANs. Optimal placement of transmitters in bounded geometric areas 
(modeling typical room shapes) has been recently reported by Yu et 
al.~\cite{yu2013wireless}. Our optimization procedure is inspired by Mudumbai 
et al.~\cite{madhow}. They report a randomized procedure to synchronize 
multiple transmissions to send a common message coherently in a distributed 
beamforming system.

Some optimization problems in link scheduling and power control in the SINR 
model are closely related to our problem. Goussevskaia et al.~\cite{gous} show 
that a discrete problem of `single-shot scheduling' with weighted links is 
NP-hard. Lotker et al.~\cite{lotker} and Zander et al.~\cite{zander} give 
efficient algorithms for optimizing the maximum achievable SINR in a set of 
links. Yates et al.~\cite{yates} report an algorithm to optimize the total 
uplink transmit power for users served by a base station, assuming that all 
users meet the minimum SINR constraint. A recent study by Altman et 
al.~\cite{Altman} considers {\em SINR games} played co-operatively and 
co-optively between base stations to maximize coverage area for mobile 
receivers or determine optimal placement for base stations themselves.

A recent study of coverage optimization algorithms for indoor coverage 
appears in Reza et al.~\cite{reza2014comprehensive}. A new optimization model 
based on extrapolation of data collected from measurement tools is given by 
Kazakovtsev in \cite{kazakovtsev2013wireless}.

In a survey of practical tools for coverage optimization, we found recent tools 
appearing in the research literature, like those by Kim et 
al.~\cite{kim2014cell}, Chen et al.~\cite{chen2013placement}, and Zhang et 
al.~\cite{zhang2014m2m}, discuss coverage management using measurements from 
wireless devices in the network. These, along with the use of modern data 
analytics tools - for example, Kim et al.~\cite{kim2014cell} and Kazakovtsev 
(\cite{kazakovtsev2013wireless}) who demonstrate methods for estimating and 
optimizing wireless coverage by analyzing radio ``fingerprints'' - appear to be 
candidate tools of the future.

Extended summaries of our work appear in two publications: Kapadia and Damani 
\cite{kap} and Kapadia, Damani and Kumar \cite{kap2}.

\chapter{Coverage in the Protocol Model: Fixed Set of Transmitters}
\label{chapter:Fixed}
Our work generalizes the coverage map computed by So et al.~\cite{Ye}. They
compute the coverage map for a wireless sensor network without considering
interference.
\section{Problem Statement}
We are given $n$ transmitter locations (points) in the plane. We are also given
the transmission and interference ranges of each transmitter. All transmitters
share the same wireless channel. We need to compute the set of points that lie
within the transmission range of one transmitter, and outside the interference
range of every other. We call this set the `coverage region' of a
transmitter. The union of all $n$ coverage regions is the `coverage map' of the
network.

The shaded region in figure \ref{figure:Coverage} shows the coverage region of 
one transmitter surrounded by 7 other transmitters. The interference ranges 
(disks) are shown in solid perimeter and the transmission disk of transmitter 
$p$ is shown with a dotted perimeter. Note that the transmission disk of other 
transmitters is a subset of the interference disk, and hence is not shown in the 
figure.
\begin{figure}[htbp]
\begin{center}
\includegraphics{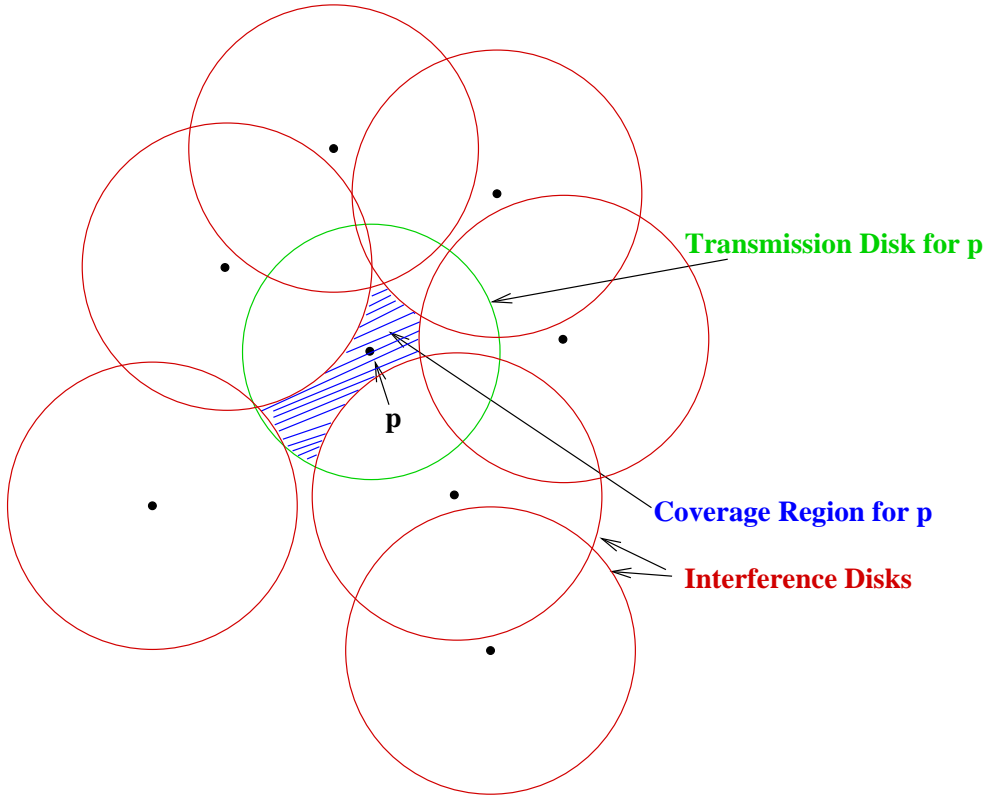}
\end{center}
\caption{Coverage Region of a Transmitter}
\label{figure:Coverage}
\end{figure}
\section{Our Approach}
\label{section:approach}
We show that for an appropriate choice of {\it distance measure}, coverage at 
each point can be computed by considering only certain {\it nearby} 
transmitters. The distance measure imposes a partition of the plane which our 
algorithm uses to compute the coverage map efficiently.

For equal ranges, the partition corresponds to the division of the plane into 
{\em cells} or {\em Voronoi regions} in the well-known {\em closest point 
Voronoi Diagram}\cite{AurenhammerNotes}. For unequal ranges, we use a closely 
related structure called the {\em Power Diagram}. The partition by a power 
diagram is very similar in shape, structure, representation, and properties to 
the Voronoi Diagram\cite{AurenhammerPowerDia}.

The Voronoi Diagram (see, for example Figure \ref{figure:Voronoi}) of a given 
set of points $T$ in 2-D partitions the plane into Voronoi regions - each 
point $p \in T$ corresponds to one region $\triangle(p, T)$. Every point in 
$\triangle(p, T)$ is closer, in Euclidean distance, to $p$ than to any other 
point in $T$.

Aurenhammer et al.~\cite{AurenhammerNotes} give a detailed treatment of the
Voronoi Diagram, its properties and algorithms for its construction.

\begin{figure}[htbp]
\begin{center}
\includegraphics[height=2in]{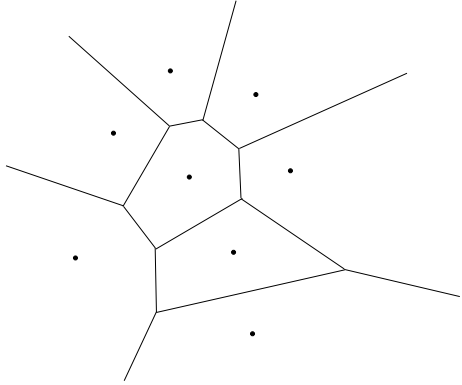}
\end{center}
\caption{Voronoi Diagram of 8 Points}
\label{figure:Voronoi}
\end{figure}
\newcommand{\itemref}[1]{\ref{chapter:Fixed}.{#1}}

Two key claims to justify our approach -
\begin{description}
\item{\bf Claim \itemref{1}:} The coverage region for a transmitter lies
{\it entirely} in its Voronoi (or power) cell.
\item{\bf Claim \itemref{2}:} The Voronoi (or power) cell can be further 
partitioned such that coverage in each (sub-)partition can be decided by 
considering interference from only the {\it nearest} transmitter.
\end{description}
We develop our arguments starting with equal ranges and Voronoi diagrams, and
later extend these to unequal ranges and power diagrams.
\section{Equal Ranges and the Voronoi Diagram}
\label{section:Equal}
We propose an augmentation to the Voronoi diagram of the point set
corresponding to transmitter locations. This augmentation yields an efficient 
algorithm for computing the coverage map.

Let $T$ be the set of transmitter locations, and let $\mathbb{V}(T)$ denote
their Voronoi diagram. Voronoi diagrams are classically used for reasoning
about point sets. We show that we can apply Voronoi diagrams to our problem
concerning interference disks with equal radii.

Let $\bigcirc(c, r)$ be a disk with center at $c$ and radius $r$. We write
$\bigcirc$ when the parameters $c$ and $r$ are clear from the context. We 
define $\delta$, a {\em signed} distance measure between a point and a disk, as 
follows: $\delta(x, \bigcirc(c, r)) = d(x, c) - r$, where $d(x, c)$ denotes 
the Euclidean distance between points $x$ and $c$. $\delta$ is a signed 
distance, since it is negative if $x \in \bigcirc$ and non-negative otherwise.

\begin{observation}
Let $\bigcirc_1(c_1, r)$ and $\bigcirc_2(c_2, r)$ be two disks of equal radii,
$r$, with centers $c_1$ and $c_2$ respectively. For a given point $x$, let
$\delta_1$ and $\delta_2$ be the {\it signed} distances of $x$ from 
$\bigcirc_1$ and $\bigcirc_2$, respectively. $x$ is equidistant from the 
centers $c_1$ and $c_2$ if, and only if, $\delta_1 = \delta_2$.
\label{obs:equal}
\end{observation}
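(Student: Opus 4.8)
The plan is to prove this by a direct algebraic unfolding of the definition of $\delta$, observing that the equal-radii hypothesis is precisely what makes the proof go through. First I would write out both signed distances from their definitions: $\delta_1 = \delta(x, \bigcirc_1) = d(x, c_1) - r$ and $\delta_2 = \delta(x, \bigcirc_2) = d(x, c_2) - r$. The crucial feature here is that the subtracted radius is the same constant $r$ in both expressions, since the two disks share a common radius.

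Next I would establish the biconditional by a chain of equivalences. Setting $\delta_1 = \delta_2$ gives $d(x, c_1) - r = d(x, c_2) - r$, and adding $r$ to both sides yields $d(x, c_1) = d(x, c_2)$, which is exactly the statement that $x$ is equidistant from the centers $c_1$ and $c_2$. Each step in this chain is reversible, so the forward and backward directions are handled simultaneously; there is no need to argue the two implications separately.

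The main thing worth emphasizing, rather than any genuine obstacle, is the role of the equal-radii assumption: it is exactly the cancellation of the identical constant $-r$ on both sides that produces the clean equivalence. If the radii differed, say $r_1 \neq r_2$, then $\delta_1 = \delta_2$ would instead reduce to $d(x, c_1) - d(x, c_2) = r_1 - r_2$, a fixed nonzero offset, and the locus would no longer coincide with the perpendicular bisector of $c_1 c_2$. I would close by remarking that this is precisely why the ordinary (closest-point) Voronoi diagram, whose bisectors are the equidistant loci $d(x,c_1)=d(x,c_2)$, is the correct partitioning structure for the equal-range case, while the unequal-range case later forces the move to the power diagram.
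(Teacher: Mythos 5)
Your proof is correct and is essentially identical to the paper's own argument: both unfold $\delta_i = d(x,c_i) - r$ and cancel the common radius $r$ through a reversible chain of equivalences. Your closing remark on why unequal radii break the equivalence is a nice addition that the paper itself only makes later, when motivating the move to power diagrams in the unequal-range case.
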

\begin{proof}
$\delta_1 = \delta_2 \Leftrightarrow d(x, c_1) - r = d(x, c_2) - r$\\
$\Leftrightarrow d(x, c_1) = d(x, c_2)$
\end{proof}
Similarly, \[ \delta(x, \bigcirc(c, r)) > \delta(y, \bigcirc(c, r)) 
\Leftrightarrow d(x, c) > d(y, c) \]

We make an observation that relates the distance measure to interference disks.
\begin{observation}
\label{obs:closestInterferes}
Consider a point $x$ on the interference disk for $p$. $x$ is also on the
interference disk of every transmitter closer than $p$.
\end{observation}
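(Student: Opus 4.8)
The plan is to reduce the statement to a one-line comparison of Euclidean distances, exploiting the fact that in the equal-ranges setting every interference disk shares the common radius $r$. First I would fix the reading of the statement: a point $x$ lies on (within) the interference disk of a transmitter $q$ exactly when $\delta(x, \bigcirc(q, r)) \le 0$, equivalently $d(x, q) \le r$; and a transmitter $q$ is \emph{closer than $p$} precisely when $d(x, q) \le d(x, p)$. By the defining identity $\delta(x, \bigcirc(c, r)) = d(x, c) - r$ the equal radii cancel, so this last condition is the same as $\delta(x, \bigcirc(q, r)) \le \delta(x, \bigcirc(p, r))$ (this is the same cancellation used in Observation \ref{obs:equal} and the display that follows it).

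With these two translations in hand the argument is immediate. Since $x$ is on the interference disk of $p$, we have $\delta(x, \bigcirc(p, r)) \le 0$. For any transmitter $q$ closer to $x$ than $p$, the previous step gives $\delta(x, \bigcirc(q, r)) \le \delta(x, \bigcirc(p, r)) \le 0$, hence $d(x, q) \le r$, so $x$ lies on the interference disk of $q$ as well. In plain terms this is nothing more than the chain $d(x, q) \le d(x, p) \le r$ read through the signed distance $\delta$.

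I do not expect a genuine obstacle: once the common radius is fixed, the claim is pure monotonicity of Euclidean distance. The one point I would state explicitly—and the only thing doing real work—is the reliance on equal radii, because the cancellation of $r$ in the $\delta$-comparison is exactly what lets ``closer center'' stand in for ``deeper inside the interference disk.'' When the development later passes to unequal ranges and power diagrams this step is no longer free, and the corresponding nearest-transmitter claim will have to be recast in terms of a power distance rather than plain Euclidean distance; flagging that dependence here is the main value of writing the proof out at all.
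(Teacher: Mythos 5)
Your proposal is correct and follows essentially the same route as the paper's own proof: both translate ``closer than $p$'' into a comparison of the signed distances $\delta$, use the cancellation of the common radius $r$, and chain $\delta(x,\bigcirc_q) \le \delta(x,\bigcirc_p) \le 0$ (the paper writes the same chain with strict inequalities). Your explicit flagging of the equal-radii dependence, and of why the unequal-range case later requires the power distance, is a sound observation but does not change the argument itself.
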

\begin{proof}
Let $q \in T \setminus \{p\}$. Let $x$ be closer to $q$ than $p$. Let
$\bigcirc_p$ and $\bigcirc_q$ be the interference disks of $p$ and $q$,
respectively.

Since $d(x, q) < d(x, p) \Rightarrow \delta(x, \bigcirc_q) < \delta(x,
\bigcirc_p)$, if $x$ is in the interference range of $p$, $\delta(x, \bigcirc_p)
< 0$. Which means that $\delta(x, \bigcirc_q) < 0$, i.e. $x$ must also be in
the interference range of $q$.
\end{proof}
The Voronoi diagram of the centers of disks tells us which center is
closest to a given point, in Euclidean distance. Given disks with equal radii,
let us have a diagram which partitions points in the plane according to
the disk nearest to them by the distance measure $\delta$; i.e. a `Voronoi'
diagram of disks. The above Observation \ref{obs:equal} shows that the
`Voronoi' diagram of disks of equal radii using the signed distance measure
$\delta$ is the same as the Voronoi diagram of the centers of the respective
disks. Thus we can compute the classical Voronoi diagram, using the Euclidean
distance measure, instead of the signed distances $\delta$. The ensuing
discussion in this section thus refers directly to the Voronoi diagram of the
transmitter locations and Euclidean point distances from them.

We now state notation and expressions for the points corresponding to a Voronoi
region, its extreme points, and the Voronoi diagram.

The Voronoi region corresponding to a point $p \in T$ is given by:
\[\triangle(p) = \{x\ |\ d(x, p) < d(x, q), \forall q \in T \setminus \{p\}\}\]
The extreme points of the Voronoi region for $p$ are given by:
\[\partial(p) = \{x\ |\ d(x, p) \leq d(x, q), \forall q \in T \setminus \{p\}\}
\setminus \triangle(p)\]
The Voronoi diagram is given by:
\[\mathbb{V}(T) = \displaystyle\bigcup_{p \in T} \partial(p)\]
We now prove an important property of the Voronoi region - it shows that the 
coverage region for a transmitter is confined to that transmitter's Voronoi 
region.

\begin{lemma}[Coverage in Voronoi Region]

\mbox{}
Consider a point $x$ outside $\triangle(p)$ that is on the transmission disk
for $p$. $x$ is also on some interference disk other than that of $p$.
\label{lemma:onlyNeighbors}
\end{lemma}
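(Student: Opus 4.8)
The plan is to argue directly with Euclidean distances, combining the definition of the Voronoi region $\triangle(p)$ with the fact that each transmitter's interference disk strictly contains its transmission disk. Write $r_t$ for the common transmission radius and $r_i$ for the common interference radius; by hypothesis the interference disk is the larger concentric disk, so $r_t < r_i$.

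First I would unpack the hypothesis ``$x$ outside $\triangle(p)$''. Since $\triangle(p) = \{x \mid d(x,p) < d(x,q)\ \forall q \in T \setminus \{p\}\}$, a point not in $\triangle(p)$ must violate the defining strict inequality for at least one competitor: there is some $q \in T \setminus \{p\}$ with $d(x,q) \le d(x,p)$. In signed-distance language this is just the equal-radius case of Observation \ref{obs:equal} (and the monotonicity used in Observation \ref{obs:closestInterferes}), namely $d(x,q) \le d(x,p) \Leftrightarrow \delta(x, \bigcirc_q) \le \delta(x, \bigcirc_p)$ for the two interference disks.

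Next I would use the remaining hypothesis, that $x$ lies on the transmission disk of $p$, which means $d(x,p) \le r_t$. Chaining the inequalities gives $d(x,q) \le d(x,p) \le r_t < r_i$, so $x$ lies inside the interference disk of $q$. As $q \ne p$, this produces an interference disk other than that of $p$ containing $x$, which is precisely the claimed conclusion; and since this is exactly the failure of the ``outside every other interference disk'' condition, it also establishes Claim \itemref{1}, that the coverage region of $p$ is confined to $\triangle(p)$.

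The argument has no genuinely hard step; the only points to get right are the direction of the containment $r_t < r_i$ (so that a transmitter at distance at most $r_t$ necessarily puts $x$ within interference range) and the reading of ``outside $\triangle(p)$'' as the negation of a \emph{strict} inequality, which supplies a competitor at distance $\le d(x,p)$ rather than $< d(x,p)$. The real work is deferred to the unequal-range setting, where $\triangle(p)$ must be replaced by a power-diagram cell and the signed distance $\delta$ no longer reduces to ordinary Euclidean distance from the centers.
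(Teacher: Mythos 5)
Your proof is correct and takes essentially the same route as the paper's: both extract from the failure of the Voronoi membership a competitor $q$ with $d(x,q) \le d(x,p)$, then use the containment of the transmission disk inside the interference disk to place $x$ within $q$'s interference range --- the paper packages the final distance-monotonicity step as an appeal to Observation \ref{obs:closestInterferes}, which is exactly the inequality chain you write out explicitly. If anything, your reading is slightly more careful: the paper assumes $x$ lies in some open region $\triangle(q)$ and asserts a strict inequality, glossing over points on Voronoi edges, whereas your non-strict bound $d(x,q) \le d(x,p)$ handles those boundary points as well.
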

\begin{proof}
Let $q \in T \setminus \{p\}$, and $x \in \triangle(q)$. The Voronoi property
implies that $p$ is farther away from $x$ than $q$. The transmission disk of
$p$ is a subset of its interference disk; thus $x$ lies on the interference
disk of $p$. Hence, by Observation \ref{obs:closestInterferes}, since $x$ is in
the interference range of $p$, $x$ must also be in the interference range of
$q$.
\end{proof}
Extreme points of $\triangle(p)$ appear as either Voronoi edges or vertices.
These in turn correspond to some other transmitters in $T$ that we call the set
of Voronoi neighbors of $p$, denoted by $\Gamma(p)$. Lets assume that the
Voronoi diagram for $T$ has been built by some classical algorithm 
(see \cite{AurenhammerNotes}). Suppose we delete the point $p$ from $T$, and build
$\mathbb{V}(T \setminus \{p\})$. Figure \ref{figure:DeleteP} shows the effect
of these changes. :
\begin{enumerate}
\item Some existing edges are extended - see dotted portions in Figure
\ref{figure:DeleteP}
\item Some existing edges are deleted - see dashed edges in Figure
\ref{figure:DeleteP}
\item Some new vertices are added - see intersections of dotted
extensions in Figure \ref{figure:DeleteP}
\item Some new edges are added - see edges between new vertices in
Figure \ref{figure:DeleteP}
\end{enumerate}
Note that in figure \ref{figure:DeleteP}, only the neighborhood of $p$ changes
to form $\mathbb{V}(T \setminus \{p\})$. We will formally show later, in Lemma
\ref{lemma:closestNeighbor}, that this is true for every Voronoi diagram.

\begin{figure}[htbp]
\begin{center}
\includegraphics[height=2.5in]{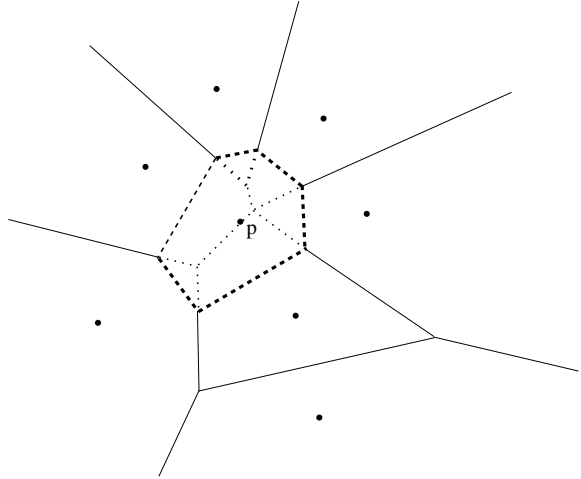}
\end{center}
\caption[Deleting a Transmitter]{Effect of deleting \em{p}}
\label{figure:DeleteP}
\end{figure}

We call the new set of vertices, edges, and extension edges the {\em Voronoi
Frame} corresponding to $p$. The dotted portions in Figure \ref{figure:DeleteP}
are the Voronoi Frame. Edges in the frame correspond to the set of points
equidistant from two transmitters adjacent to $p$ in $\mathbb{V}(T)$. Vertices
in the frame correspond to points equidistant from three (or more) transmitters
adjacent to $p$.

Before we delve into our discussion, we define certain terms in the context of
what has been said so far.
\begin{description}
\item{\textbf{Feasible Coverage Map}:} For a given transmitter $p$, this is
the set of points lying in the Voronoi region of $p$, but outside every
interference disk other than $p$. We reiterate that due to Lemma
\ref{lemma:onlyNeighbors}, to compute coverage, we can restrict our attention
only to the Voronoi region of $p$.
\item{\textbf{Actual Coverage Map}:} For a given transmitter $p$, this is the 
set of points that lie in its Feasible Coverage Map and also on its 
transmission disk. The Actual Coverage Map of $p$ is the intersection of the 
transmission disk of $p$ with the Feasible Coverage Map. An Actual Coverage 
Area is demarcated by arcs, each of which correspond to the portion of the 
periphery of either the transmission disk of $p$, or an interference disk of a 
Voronoi neighbor of $p$ that bounds the Feasible Coverage Map.
\item{\textbf{Contiguous Feasible Region}:} The Feasible Coverage Map may be
composed of several disjoint maximal simply-connected subsets (see, for
example, the two shaded sets in figure \ref{figure:LargeCircles}). These
subsets are called Contiguous Feasible Regions. In the subsequent text we use
the term `Feasible Region' when the word `Contiguous' is obvious from the
context.
\item{\textbf{Voronoi Frame}:} For a given transmitter $p$, this is the set of
points on extension edges, new edges and vertices in $\triangle(p)$ obtained
from deleting the point $p$ and adding new extreme points from the Voronoi
diagram of $T \setminus \{p\}$. The only extreme points in $\mathbb{V}(T)$ that
do not belong in $\mathbb{V}(T \setminus \{p\})$ are the extreme points on the
edges in $\triangle(p)$. The Voronoi Frame is thus the set of points in
$\mathbb{V}(T \setminus \{p\}) \setminus \mathbb{V}(T)$.
\end{description}

\begin{figure}
\centering
\includegraphics[height=3in]{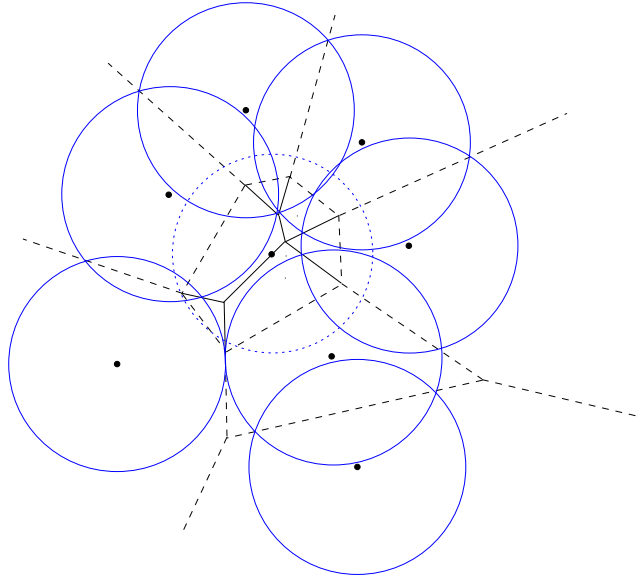}
\caption[Closed Feasible Region]{Feasible Coverage Map: Closed Feasible Region}
\label{figure:MediumCircles}
\end{figure}

\begin{figure}
\centering
\includegraphics[height=3in]{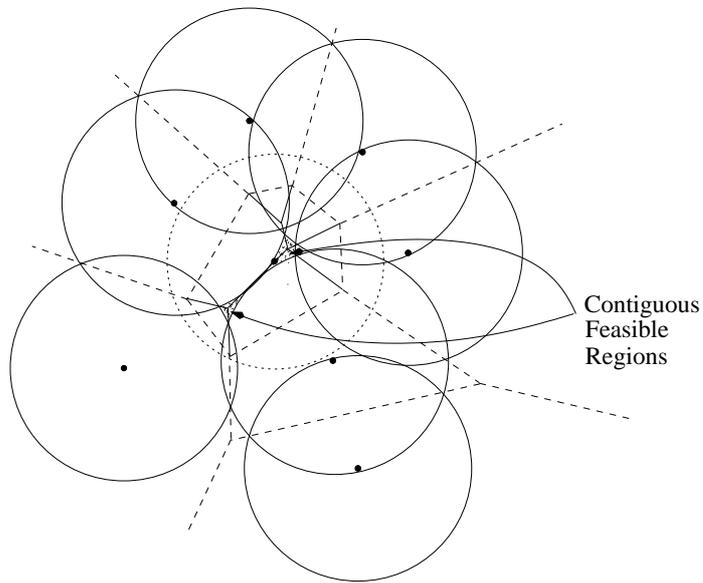}
\caption[Multiple Contiguous Feasible Regions]{Feasible Coverage Map: Multiple Contiguous Feasible Regions}
\label{figure:LargeCircles}
\end{figure}

\begin{figure}
\centering
\includegraphics[height=3in]{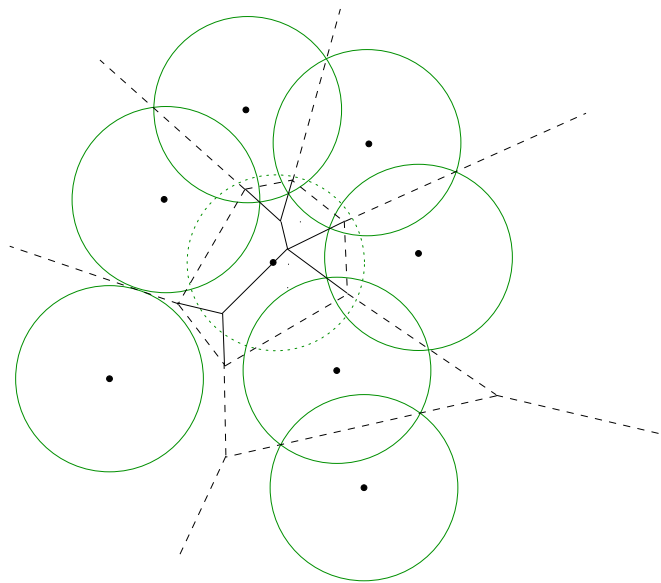}
\caption[Open Feasible Region]{Feasible Coverage Map: Open Feasible Region}
\label{figure:SmallCircles}
\end{figure}

Our goal in this remainder of this section is to demonstrate how the Voronoi 
Frame aids in reasoning about the Feasible and Actual Coverage Maps.

Figures \ref{figure:MediumCircles}, \ref{figure:LargeCircles}, and
\ref{figure:SmallCircles} will help the reader visualize the concepts being
discussed. Each figure corresponds to the same set of transmitter
locations. The interference radii, however, are different.
\begin{enumerate}
\item Interference disks are shown with solid perimeters.
\item Only one transmission disk is shown. It is shown by a dotted perimeter.
\item The original Voronoi diagram is shown by dashed lines.
\item The Voronoi Frame is shown by solid lines.
\item Figure \ref{figure:MediumCircles} illustrates a closed Feasible Region; 
i.e. one that is bounded on all sides by interference disks.
\item Figure \ref{figure:LargeCircles} illustrates two Contiguous Feasible 
Regions in one Feasible Coverage Map.
\item Figure \ref{figure:SmallCircles} illustrates an open Feasible Region.
\end{enumerate}

We now show that only neighbors of $p$ in $\mathbb{V}(T)$ contribute edges
in the Voronoi Frame for $p$. We actually observe a more general result - that
all points in a Voronoi region are closer to Voronoi neighbors than to any
other point -
\begin{lemma}
\label{lemma:closestNeighbor}
Let $x \in \triangle(p)$. The closest point to $x$ in $T \setminus \{p\}$ is a
Voronoi neighbor of $p$ in $\mathbb{V}(T)$.
\end{lemma}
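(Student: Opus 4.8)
The plan is to take a nearest neighbor $q$ of $x$ among the other sites and show, via a convexity argument on the segment $\overline{xq}$, that the cell $\triangle(q)$ must abut $\triangle(p)$, so that $q \in \Gamma(p)$. First I would fix $q \in T \setminus \{p\}$ to be a point closest to $x$, so that $d(x,q) \le d(x,s)$ for every $s \in T \setminus \{p\}$. Since $x \in \triangle(p)$ and $q \ne p$, we also have $d(x,p) < d(x,q)$; thus $p$ is strictly nearest at the endpoint $x$, while $q$ is strictly nearest at the endpoint $q$ itself.

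The key step is to show that along $\overline{xq}$ no site other than $p$ or $q$ is ever the nearest. Fix any $s \in T \setminus \{p,q\}$ and let $H_{qs} = \{z : d(z,q) \le d(z,s)\}$ be the closed half-plane bounded by the perpendicular bisector of $q$ and $s$. Both endpoints lie in $H_{qs}$: the point $x$ because $d(x,q) \le d(x,s)$ by the choice of $q$, and the point $q$ because $d(q,q) = 0 < d(q,s)$. Since $H_{qs}$ is convex, the entire segment $\overline{xq}$ lies in $H_{qs}$, giving $d(z,q) \le d(z,s)$ for every $z$ on the segment. Intersecting over all such $s$ shows that at each $z \in \overline{xq}$ the nearest site of $T$ is either $p$ or $q$.

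Then I would locate the crossing. Because $p$ is strictly nearest at $x$ and $q$ is strictly nearest at $q$, continuity of $z \mapsto d(z,p) - d(z,q)$ yields a point $y$ on $\overline{xq}$ with $d(y,p) = d(y,q)$; by the previous step $d(y,p) = d(y,q) \le d(y,s)$ for all remaining $s$. Hence $y$ lies on the common boundary of $\triangle(p)$ and $\triangle(q)$, so these cells are adjacent and $q$ is a Voronoi neighbor of $p$, as claimed.

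The main obstacle I anticipate is the degenerate case in the final step: the equidistant point $y$ could in principle be a Voronoi vertex at which a third site $r$ is also tied, so that $\triangle(p)$ and $\triangle(q)$ meet only at an isolated point rather than along an edge. I would handle this either by assuming $T$ is in general position, or by arguing that since the open segment comes out of the interior of $\triangle(p)$ and enters the interior of $\triangle(q)$ across $y$, the shared boundary is one-dimensional there; this uses the convexity of Voronoi cells to rule out mere vertex contact in the interior of the segment, and thereby secures genuine edge adjacency rather than just a single shared point.
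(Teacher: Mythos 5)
Your main line of argument is correct, and it is genuinely different from the paper's. The paper proceeds by contradiction: assuming $q$ is not a neighbor, the segment $\overline{xq}$ must cross the boundary $\partial(p)$ at a point $x_0$ lying on an edge contributed by some neighbor $q_0$; then $d(x_0,q) > d(x_0,p) = d(x_0,q_0)$, and collinearity plus the triangle inequality give $d(x,q) = d(x,x_0) + d(x_0,q) > d(x,x_0) + d(x_0,q_0) \ge d(x,q_0)$, contradicting the minimality of $q$. Your proof is direct: the half-plane convexity step confines $\overline{xq}$ to the (closed) cell of $q$ in $\mathbb{V}(T \setminus \{p\})$, and the intermediate-value argument produces a point $y \in \overline{\triangle(p)} \cap \overline{\triangle(q)}$ at which no other site is closer. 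Your route is more self-contained (it does not presuppose knowing which neighbor's edge is crossed) and in effect re-derives the containment the paper later records as Observation \ref{obs:framePartition}; the paper's route is shorter and, because its definition of $\Gamma(p)$ counts every transmitter corresponding to an extreme point of $\triangle(p)$ (vertices included), it never has to confront the degeneracy you worry about.

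Your fallback argument for the degenerate case, however, is false as stated. Convexity of the cells does \emph{not} rule out a segment passing from the interior of $\triangle(p)$ through a single contact point into the interior of $\triangle(q)$: take four cocircular sites at the corners of a square; the cells of two opposite sites are opposite cones meeting only at the center, and any segment through the center from one cone immediately enters the other. So ``interior-to-interior crossing at $y$'' does not force the shared boundary to be one-dimensional. What actually rescues you is the part of your hypothesis you did not use: the far endpoint of the segment is the site $q$ itself. Suppose a third site $s$ were also tied at the crossing point, i.e.\ $d(y,s) = d(y,q) = d(y,p) = r$. Place $y$ at the origin with $q = (r,0)$; since $x$, $y$, $q$ are collinear with $y$ between $x$ and $q$, we have $x = (-t,0)$ for some $t>0$, and writing $s = (r\cos\phi, r\sin\phi)$ with $\phi \neq 0$ gives $d(x,q)^2 - d(x,s)^2 = 2tr(1-\cos\phi) > 0$. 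Thus $s \in T \setminus \{p\}$ would be strictly closer to $x$ than $q$, contradicting the choice of $q$. Hence no third site ties at $y$, so $y$ lies in the relative interior of a genuine Voronoi edge between $\triangle(p)$ and $\triangle(q)$, and edge adjacency follows with no general-position assumption. With that paragraph replaced, your proof is complete even under the strict (edge-sharing) notion of neighbor; under the paper's looser definition it was already complete after your intermediate-value step.
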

\begin{proof}
Let $q$ be the closest point to $x$ in $T \setminus \{p\}$. Assume that $q$ is
not a neighbor of $p$. We show that this leads to a contradiction. Note that
$p$ and $q$ have distinct Voronoi regions, and each of them is a partition of
the plane. Thus the line segment joining $x$ and $q$ must intersect an
edge of $\partial(p)$. Let this intersection point be $x_0$, and the Voronoi
neighbor of $p$ on this edge be $q_0$. The position of the points is shown in
figure \ref{figure:Closest}.

Since $p$ and $q_0$ are neighbors, $d(x_0, p) = d(x_0, q_0)$ \\
$\Rightarrow$ \{Since $p$ and $q$ are not neighbors, \} $d(x_0, q) > d(x_0,
p)$ \\
$\Rightarrow$ \{Adding $d(x, x_0)$ to both sides, \} $d(x, x_0) + d(x_0, q) =
d(x, q) > d(x, x_0) + d(x_0, q_0)$ \\
$\Rightarrow$ \{Since $x$, $x_0$, and $q$ are collinear, \} $d(x, q) > d(x, x_0)
+ d(x_0, q_0)$ \\
$\Rightarrow$ \{By the triangle inequality, \} $d(x, x_0) + d(x_0, q_0) \geq
d(x, q_0)$ \\
$\Rightarrow d(x, q) > d(x, q_0)$\\
This contradicts the assumption that $q$ is the closest point to $x$ in $T
\setminus \{p\}$.
\end{proof}
\begin{figure}
\centering
\includegraphics{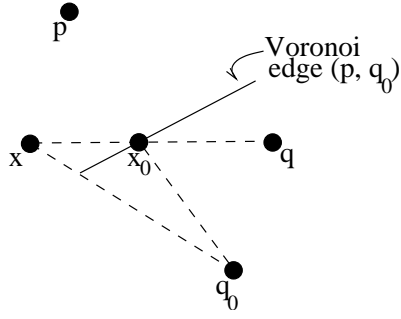}
\caption{Illustration of Lemma \ref{lemma:closestNeighbor}}
\label{figure:Closest}
\end{figure}

\begin{corollary}
\label{cor:neighborFrame}
If a point $x \in \triangle(p)$ is in the interference range of some
transmitter $t \neq p$, it is in the interference range of a neighbor of $p$.
\end{corollary}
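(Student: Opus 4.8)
The plan is to chain together the two results just established: Lemma~\ref{lemma:closestNeighbor}, which identifies the closest transmitter in $T \setminus \{p\}$ as a Voronoi neighbor of $p$, and Observation~\ref{obs:closestInterferes}, which propagates interference membership to closer transmitters. The key idea is to single out the \emph{nearest} transmitter to $x$ (other than $p$) and let it do all the work, so that both prior results apply to the same point.

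First I would let $q$ be the closest point to $x$ in $T \setminus \{p\}$. Since $x \in \triangle(p)$ by hypothesis, Lemma~\ref{lemma:closestNeighbor} immediately tells us that $q$ is a Voronoi neighbor of $p$. This settles the ``neighbor'' half of the conclusion before the interference hypothesis is even invoked.

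Next I would bring in the interference hypothesis. We are given that $x$ lies in the interference range of some $t \neq p$. By the choice of $q$ as the nearest transmitter in $T \setminus \{p\}$, we have $d(x, q) \le d(x, t)$. Because all interference disks share a common radius, this Euclidean comparison translates, via the signed distance $\delta$, into $\delta(x, \bigcirc_q) \le \delta(x, \bigcirc_t) < 0$, so Observation~\ref{obs:closestInterferes} places $x$ in the interference range of $q$. Combining the two steps, $q$ is simultaneously a neighbor of $p$ and a transmitter whose interference disk contains $x$, which is exactly the claim.

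I do not anticipate a genuine obstacle here; the corollary is essentially a one-line consequence once the right transmitter has been named. The only point requiring a moment's care is the boundary case $d(x, q) = d(x, t)$: Observation~\ref{obs:closestInterferes} is phrased for a transmitter \emph{strictly} closer, but since interference membership is the \emph{open} condition $\delta < 0$, equality of distances still yields $\delta(x, \bigcirc_q) = \delta(x, \bigcirc_t) < 0$, so the argument goes through without needing the inequality to be strict.
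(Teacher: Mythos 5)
Your proposal is correct and follows exactly the paper's own argument: name the nearest transmitter $q \in T \setminus \{p\}$, apply Lemma~\ref{lemma:closestNeighbor} to conclude $q$ is a Voronoi neighbor of $p$, then apply Observation~\ref{obs:closestInterferes} to transfer interference membership from $t$ to $q$. Your extra remark on the boundary case $d(x,q) = d(x,t)$ is a small refinement the paper glosses over, but it does not change the approach.
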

\begin{proof}
Lemma \ref{lemma:closestNeighbor} states that the closest point to $x$ in $T
\setminus \{p\}$ is a neighbor of $p$. Thus, using Observation
\ref{obs:closestInterferes} we infer that $x$ is in the interference range of a
neighbor of $p$.
\end{proof}
We denote by $\nu(x, S)$ the set of points in $S$ closest to $x$. We can now 
give an expression for the Voronoi Frame corresponding to $p$. Due to 
Corollary \ref{cor:neighborFrame}, the Voronoi Frame has edges only from
Voronoi neighbors :
\[
\bot(p) = \{x \in \triangle(p)\ |\ \exists \{q_1, q_2\} \subseteq \nu(x, 
\Gamma(p)) \}
\]
Our aim is to compute the Actual Coverage Map for $p$. We will use the Voronoi
Frame of $p$ to do so. Note that not all points on the Voronoi Frame are in the
Feasible Coverage Map. This is because some interference disk corresponding to
a neighbor may include part of an edge on the Voronoi Frame. Our first task is
to exclude points on the Voronoi Frame that are on some interference disk in $T
\setminus \{p\}$. We call the resultant subset of the Voronoi Frame the
\textit{Feasible Coverage Frame}. Corollary \ref{cor:neighborFrame} shows that
to obtain the Feasible Coverage Frame it is sufficient to exclude points on
interference disks adjacent to edges in the Voronoi Frame. For each edge in the
frame, we remove the portion of the edge on one of its adjacent interference
disks. This results in zero, one, or two corresponding edges - depending on
whether the edge has {\em no points} in any Feasible Coverage Region, is {\em
partially} in a Feasible Coverage Region, or is {\em entirely} in a Feasible
Coverage Region. We illustrate this operation in figure
\ref{figure:clipVoronoiFrame}.

\begin{figure}
\centering
\includegraphics[height=3in]{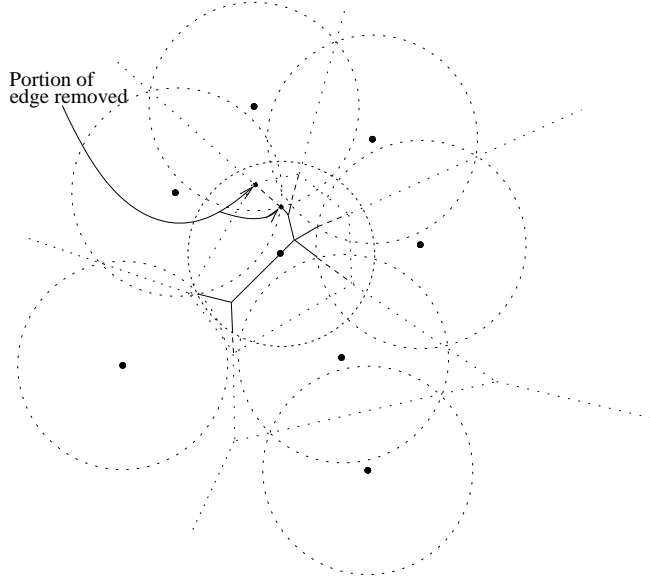}
\caption{Voronoi Frame Yields Feasible Coverage Frame}
\label{figure:clipVoronoiFrame}
\end{figure}

\begin{definition}[Feasible Coverage Frame]
For a given transmitter $p$, this is the set of points that lie on its Voronoi 
Frame and outside the union of interference disks of its neighbors.
\end{definition}
Formally, the Feasible Coverage Frame for transmitter $p$ is given as:
\[
\bot_{g}(p) = \bot(p) \setminus \{x\ :\ x \in \bigcirc^{i}(q), \forall
q \in \Gamma(p)\}
\]
where $\bigcirc^{i}(q)$ denotes the interference range of the transmitter at
$q$.

\setlength{\parskip}{1ex plus 0.5ex minus 0.2ex}

We now have a procedure for building a Voronoi Frame for a transmitter, and for
using this Voronoi Frame to find the transmitter's Feasible Coverage Frame.
We note a property of the Voronoi Frame that we will use to show its
correlation with the Feasible Coverage Region.
\begin{observation}
$\bot(p)$ partitions $\triangle(p)$, and each partition corresponds to exactly 
one neighbor in $\Gamma(p)$.
\label{obs:framePartition}
\end{observation}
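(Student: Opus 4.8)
The plan is to realise $\triangle(p)$ as a union of nearest-neighbour cells and then check that the frame $\bot(p)$ is exactly the set of interfaces between these cells. For each neighbour $q \in \Gamma(p)$ I would define the cell
\[
R(q) = \{x \in \triangle(p)\ |\ \nu(x, \Gamma(p)) = \{q\}\},
\]
the points of $\triangle(p)$ whose unique closest neighbour is $q$. Since $\Gamma(p)$ is finite and non-empty, every $x \in \triangle(p)$ has a non-empty closest-neighbour set $\nu(x,\Gamma(p))$; the point lies in exactly one $R(q)$ when this set is a singleton, and otherwise lies in $\bot(p)$ by the defining condition $\exists\,\{q_1,q_2\}\subseteq\nu(x,\Gamma(p))$. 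Consequently
\[
\triangle(p) = \bot(p) \cup \bigcup_{q \in \Gamma(p)} R(q),
\]
with the $R(q)$ pairwise disjoint and disjoint from $\bot(p)$. This is precisely the asserted partition: deleting the frame leaves the cells $R(q)$, each labelled by a single neighbour.

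The step that makes this legitimate, and which I expect to be the crux, is justifying that restricting to $\Gamma(p)$ loses nothing -- that is, that the cells $R(q)$ really coincide with the restrictions to $\triangle(p)$ of the Voronoi regions of $\mathbb{V}(T \setminus \{p\})$. Here I would invoke Lemma \ref{lemma:closestNeighbor} directly: for every $x \in \triangle(p)$ the closest point of $T \setminus \{p\}$ is a Voronoi neighbour of $p$, so $\nu(x, T \setminus \{p\}) = \nu(x, \Gamma(p))$. Without this guarantee a point might be nearest to some non-neighbour transmitter, and a decomposition built only from $\Gamma(p)$ would both misclassify interior points and miss interface points; Lemma \ref{lemma:closestNeighbor} is exactly what rules this out and certifies that $\bot(p)$ captures every inter-cell boundary inside $\triangle(p)$.

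Finally, to secure the phrase \emph{``each partition corresponds to exactly one neighbour''} -- that no neighbour contributes two disconnected pieces and no single piece serves two neighbours -- I would appeal to convexity. Each $R(q)$ is the intersection of $\triangle(p)$ with the Voronoi region of $q$ in $\mathbb{V}(T \setminus \{p\})$; both are convex, so $R(q)$ is convex and hence connected. Thus the connected components of $\triangle(p) \setminus \bot(p)$ are in bijection with the non-empty cells $R(q)$, giving the one-to-one correspondence between partitions and neighbours. The one routine loose end is to note that each shared Voronoi edge $\partial(p) \cap \partial(q)$ forces $R(q)$ to be non-empty, since points just inside $\triangle(p)$ along that edge have $q$ as their unique nearest neighbour, confirming that every neighbour indeed receives a cell.
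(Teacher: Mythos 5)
Your proof is correct and, at its core, it is the same argument as the paper's: both hinge on Lemma \ref{lemma:closestNeighbor} to confine proximity questions inside $\triangle(p)$ to the neighbor set $\Gamma(p)$, and both identify the pieces of the partition with the traces in $\triangle(p)$ of the Voronoi regions of $\mathbb{V}(T \setminus \{p\})$. The directions differ, though: the paper argues top-down --- $\mathbb{V}(T\setminus\{p\})$ partitions the plane, the frame is by definition its restriction to $\triangle(p)$, hence the frame partitions $\triangle(p)$, and the lemma then labels each piece by a single neighbor --- whereas you argue bottom-up from the algebraic definition $\bot(p)=\{x\in\triangle(p) \mid \exists\{q_1,q_2\}\subseteq\nu(x,\Gamma(p))\}$, building the cells $R(q)$ first and then invoking the lemma to get the identity $\nu(x, T\setminus\{p\})=\nu(x,\Gamma(p))$; you correctly flag this identity as the crux (the lemma's proof makes every non-neighbor strictly farther than some neighbor, so ties are also confined to $\Gamma(p)$, which is what reconciles your cells with the global diagram). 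Your route also buys two refinements the paper leaves implicit: the convexity argument shows each $R(q)$ is connected, so the connected components of $\triangle(p)\setminus\bot(p)$ are genuinely in bijection with neighbors rather than one neighbor possibly owning several pieces (the paper never addresses connectivity; its ``partitions'' are the restricted Voronoi regions by fiat), and your closing remark that points just inside a shared Voronoi edge have $q$ as unique nearest neighbor shows every neighbor's cell is non-empty, making the correspondence onto $\Gamma(p)$. Neither refinement is needed for how the observation is used later (Theorem \ref{thm:main} needs only that the sets $\blacktriangle(p,q)$, $q\in\Gamma(p)$, cover $\triangle(p)$ and that interference in each is decided by $q$ alone), but they make the phrase ``exactly one neighbor'' precise.
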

\begin{proof}
$\mathbb{V}(T \setminus \{p\})$, due to the Voronoi property, partitions the
plane. By definition, the Voronoi Frame is the subset of this Voronoi diagram
lying inside $\triangle(p)$. Hence $\triangle(p)$ is partitioned by the Voronoi
Frame. Each point in a partition belongs to some Voronoi region in
$\mathbb{V}(T \setminus \{p\})$. Due to Lemma \ref{lemma:closestNeighbor}, a
point in such a partition can be closest only to a neighbor of $p$ in
$\mathbb{V}(T)$.

Each point in a partition can, by definition, be closest only to one point in
$T \setminus \{p\}$. Hence, the partition corresponds to exactly one neighbor.
\end{proof}
Note that the edges in the Voronoi Frame bounding this partition correspond to
the edges contributed by a neighbor of $p$. Further, since $p$ is closer than
$q$ to each point in this partition, the edge between $p$ and $q$ also bounds
the partition.

This observation is illustrated by figure \ref{figure:cellPartition}. We denote
by $\blacktriangle(p,q)$ the partition of $\triangle(p)$ by edges on the Voronoi
Frame corresponding to $q$. Formally,
\[ \blacktriangle(p, q) = \triangle(p) \cap \{ x\ |\ q \in \nu(x, \Gamma(p)) \} 
\]

\begin{figure}
\centering
\includegraphics[height=3in]{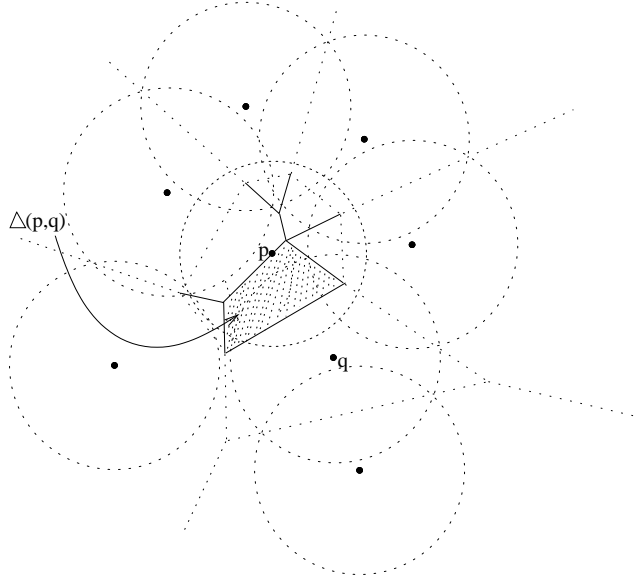}
\caption{Voronoi Frame Partitions Voronoi Region}
\label{figure:cellPartition}
\end{figure}

We note a correlation between $\blacktriangle(p, q)$ and the interference disk 
for $q$.

\begin{corollary}
If $x \in \blacktriangle(p,q)$ and $x$ is in the interference range of some
transmitter $t \neq p$, then $x$ is in the interference range of $q$.
\label{cor:onlyPartition}
\end{corollary}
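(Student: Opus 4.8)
The plan is to reduce the corollary to Observation \ref{obs:closestInterferes}, which states that any point lying in the interference disk of some transmitter also lies in the interference disk of every \emph{closer} transmitter (this uses the equal-radii assumption of this section, so that the signed distance $\delta$ is monotone in the Euclidean distance to the center). Since the offending transmitter $t$ satisfies $t \neq p$, it belongs to $T \setminus \{p\}$, and $x$ is given to be in its interference range. Hence it will suffice to prove that $q$ is at least as close to $x$ as $t$ is; in fact I would prove the stronger statement that $q$ is a closest transmitter to $x$ in \emph{all} of $T \setminus \{p\}$, and then let Observation \ref{obs:closestInterferes} do the rest.

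To establish that, I would first unpack the hypothesis $x \in \blacktriangle(p,q)$. By the definition $\blacktriangle(p,q) = \triangle(p) \cap \{x \mid q \in \nu(x,\Gamma(p))\}$, this means $x \in \triangle(p)$ and $q$ realizes the minimum Euclidean distance from $x$ to the set of Voronoi neighbors $\Gamma(p)$. On its own this only says $q$ is closest \emph{among the neighbors}; the missing link is that no non-neighbor in $T \setminus \{p\}$ could be strictly closer. This is exactly the content of Lemma \ref{lemma:closestNeighbor}, which guarantees that the globally closest point to $x$ in $T \setminus \{p\}$ is itself a Voronoi neighbor of $p$. Combining the two facts, the closest point of $T \setminus \{p\}$ lies in $\Gamma(p)$, and $q$ attains the minimum over $\Gamma(p)$, so $d(x,q) \le d(x,t')$ for every $t' \in T \setminus \{p\}$.

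With $q$ identified as a closest transmitter, the conclusion is immediate: taking $t' = t$ gives $d(x,q) \le d(x,t)$, so $q$ is no farther from $x$ than $t$. Because $x$ is in the interference range of $t$ and $q$ is at least as close, Observation \ref{obs:closestInterferes} forces $x$ into the interference range of $q$ as well (the degenerate case $q = t$ being trivial). The only real subtlety in the whole argument — and hence the main obstacle — is precisely the bridge from ``closest among the neighbors $\Gamma(p)$'' to ``closest among all of $T \setminus \{p\}$''; once Lemma \ref{lemma:closestNeighbor} supplies that, everything else is a direct appeal to the monotonicity of the equal-radii distance measure.
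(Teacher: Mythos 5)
Your proof is correct and follows essentially the same route as the paper's: identify $q$ as the closest transmitter to $x$ in $T \setminus \{p\}$ and then invoke Observation \ref{obs:closestInterferes}. The only difference is one of care --- the paper asserts this closeness ``by definition'' of $\blacktriangle(p,q)$, whereas you explicitly supply the bridge (Lemma \ref{lemma:closestNeighbor}) from ``closest among $\Gamma(p)$'' to ``closest in all of $T \setminus \{p\}$,'' making your version slightly more rigorous than the paper's own wording.
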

\begin{proof}
By definition, every point in $\blacktriangle(p, q)$ is closer to $q$ than any 
other
point in $T$. Since $x$ is closer to $q$ than $t$, we can apply Observation
\ref{obs:closestInterferes} to see that $x$ is in the interference range of
$q$.
\end{proof}
We refer again to figures \ref{figure:MediumCircles},
\ref{figure:LargeCircles}, and \ref{figure:SmallCircles}. We observe that each
Contiguous Feasible Region is bounded by the arcs of the rims of interference
disks. These disks correspond to edges in the Feasible Coverage Frame enclosed
within the Feasible Region.

We could compute the Actual Coverage Map of a transmitter directly by
intersections of each Contiguous Feasible Region with the transmission
disk. This would get us the coverage map. However, this would require our
algorithm to represent the Contiguous Feasible Region as a sequence of arcs.
Instead, we obtain the Actual Coverage Map by an alternative approach that
uses the Voronoi properties.

We denote the Actual Coverage Map of a transmitter $p$ by $\chi(p)$. The
following result shows that the partition given by the Voronoi Frame allows us
to compute the coverage map by excluding interference from just neighboring 
transmitters.

\begin{theorem}[Transmitter's coverage map can be computed by excluding 
interference only from Voronoi neighbors]
\label{thm:main}
\[
\chi(p) = \bigcup_{q \in \Gamma(p)} (\bigcirc^{t}(p) \cap \blacktriangle(p,q))
\setminus \bigcirc^{i}(q)
\]
\end{theorem}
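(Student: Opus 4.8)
The plan is to prove the set equality by double inclusion, after first unfolding the definition of $\chi(p)$. By definition the Actual Coverage Map consists of exactly those points $x$ that lie on the transmission disk $\bigcirc^{t}(p)$, lie in the Voronoi region $\triangle(p)$, and lie outside $\bigcirc^{i}(q')$ for \emph{every} $q' \in T \setminus \{p\}$. The two structural results I would lean on are Observation \ref{obs:framePartition}, which tells me that the cells $\blacktriangle(p,q)$ for $q \in \Gamma(p)$ partition $\triangle(p)$, and Corollary \ref{cor:onlyPartition}, which says that inside a single cell $\blacktriangle(p,q)$ the only interference disk that can cover a point is $\bigcirc^{i}(q)$ itself. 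Together these let me replace the global condition ``outside every interference disk'' by the purely local condition ``outside $\bigcirc^{i}(q)$'' on each cell, which is precisely the content the theorem asserts.

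For the inclusion of the right-hand side into $\chi(p)$, I would take a point $x$ in the right-hand union, so that $x \in \bigcirc^{t}(p) \cap \blacktriangle(p,q)$ and $x \notin \bigcirc^{i}(q)$ for some neighbor $q \in \Gamma(p)$. Since $\blacktriangle(p,q) \subseteq \triangle(p)$, the point already lies in the Voronoi region and on the transmission disk. It then remains to check that $x$ avoids every interference disk of $T \setminus \{p\}$; I would argue by contradiction: if $x$ lay in $\bigcirc^{i}(t)$ for some $t \neq p$, then Corollary \ref{cor:onlyPartition} would force $x \in \bigcirc^{i}(q)$, contradicting the defining condition of the cell. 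Hence $x \in \chi(p)$.

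For the reverse inclusion I would take $x \in \chi(p)$. By definition $x \in \bigcirc^{t}(p)$ and $x \in \triangle(p)$, and by Observation \ref{obs:framePartition} there is a unique neighbor $q \in \Gamma(p)$ with $x \in \blacktriangle(p,q)$. Since $x$ is outside every interference disk other than that of $p$, in particular $x \notin \bigcirc^{i}(q)$, and therefore $x \in (\bigcirc^{t}(p) \cap \blacktriangle(p,q)) \setminus \bigcirc^{i}(q)$, which sits inside the union. This closes both directions.

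The set-chasing itself is routine; the genuine content is borrowed entirely from the earlier results, so the main thing to get right is the use of Corollary \ref{cor:onlyPartition} in its contrapositive form in the first inclusion --- this is the step that converts the global interference condition into the single-neighbor test, and it is exactly what makes the partition-based computation correct. A secondary point worth stating explicitly is that the cells $\blacktriangle(p,q)$ genuinely cover \emph{all} of $\triangle(p)$ rather than a subset of it, which is what Observation \ref{obs:framePartition} guarantees and which is needed in the reverse inclusion so that no coverage point is left unassigned to a cell.
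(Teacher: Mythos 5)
Your proof is correct and takes essentially the same route as the paper's: both rest on Observation \ref{obs:framePartition} to partition $\triangle(p)$ into the cells $\blacktriangle(p,q)$ and on Corollary \ref{cor:onlyPartition} to convert the global ``outside every interference disk'' condition into the single-neighbor test, with containment in $\triangle(p)$ supplied by the definition of $\chi(p)$ (which the paper grounds in Lemma \ref{lemma:onlyNeighbors}). Your explicit double-inclusion argument is just a more detailed rendering of the paper's prose sketch, so there is nothing substantive to distinguish the two.
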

\begin{proof}
Let $\bigcirc^{i}(p)$ and $\bigcirc^{t}(p)$ denote the interference and
transmission disks, respectively, of transmitter $p$. Lemma
\ref{lemma:onlyNeighbors} implies that the Actual Coverage Map lies inside
$\triangle(p)$. Also, Observation \ref{obs:framePartition} states that the
Contiguous Feasible Region for $p$ is composed of contributions from each
neighbor. Corollary \ref{cor:onlyPartition} shows that to find points within
$\blacktriangle(p, q)$ that lie in the Feasible Coverage Map, it is sufficient 
only
to exclude points on the interference disk of $q$. Thus the Actual Coverage
Area can be computed from the individual regions contributed by each partition.
\end{proof}
The advantage of using the Voronoi Frame is now clear - we need only the
Feasible Frame to represent the Actual Coverage Map. Also, to compute the
Actual Coverage Map, only two arc intersection computations are required for
each neighbor $q$ - one for $\bigcirc^{t}(p)$, and one for $\bigcirc^{i}(q)$.

In order to compute $\chi(p)$ we need a generalized polygon representation
that allows circular arcs as edges. Berberich et al.~\cite{ConicPolygon}
study intersections of polygons with arcs. We defer discussion on these
generalized polygons until Section \ref{section:Algo}.

In this section we have shown how to compute the coverage map for a set of
transmitters having the same interference (and transmission, respectively)
radius. We generalize the arguments presented here to transmitters with unequal
interference (and transmission, respectively) radii in Section
\ref{section:UnEqual}.
\section{Unequal Ranges and the Power Diagram}
\label{section:UnEqual}
Note that Observation \ref{obs:equal} does not hold when the interference
radii (and transmission radii, respectively) are not the same. This is because
equal minimum Euclidean distance of point $x$ from two disks $\bigcirc(c_1,
r_1)$ and $\bigcirc(c_2, r_2)$, where $r_1 \neq r_2$ does not imply equal
Euclidean distance of the centers $c_1$ and $c_2$ from $x$. Thus, we need find
an alternative distance measure.

The Voronoi Diagram of the centers of disks tells us which center is closest to
a given point, in Euclidean distance. The \textit{Power Diagram} (see
Aurenhammer et al.~\cite{AurenhammerPowerDia}) is a generalization of the
Voronoi Diagram; and is based on a different distance measure (between a point
and a disk), called the \textit{Power Distance}. We denote the power distance
by $\rho$.

The power distance of a point $x$ from a disk $\bigcirc$ of radius $r$ and
center $c$ is defined by $\rho(x, \bigcirc) = d(x, c)^2 - r^2$; where $d(x, c)$
is the Euclidean distance between $p$ and the center $c$. Geometrically, the
power distance of a point outside a disk is the square of the length of the
tangent from that point to the disk rim. Inside the disk perimeter, the power
Distance is negative in sign, and corresponds to the square of half the length
of the chord normal to the line joining the point and the center of the disk.

Figure \ref{figure:UnequalBadCase} shows a power diagram of 7 disks in the
plane. Some fundamental properties of the power diagram are stated in
\cite{AurenhammerPowerDia}. The power diagram for $\Psi$, a set of disks,
partitions the plane into convex polygonal regions, i.e. shapes exactly like
Voronoi regions. A point lies in the \textit{Power Region} corresponding to
disk $\bigcirc \in \Psi$ if its power distance from $\bigcirc$ is less than its
power distance from every other disk in $\Psi$.

\begin{figure}
\centering
\includegraphics[height=3in]{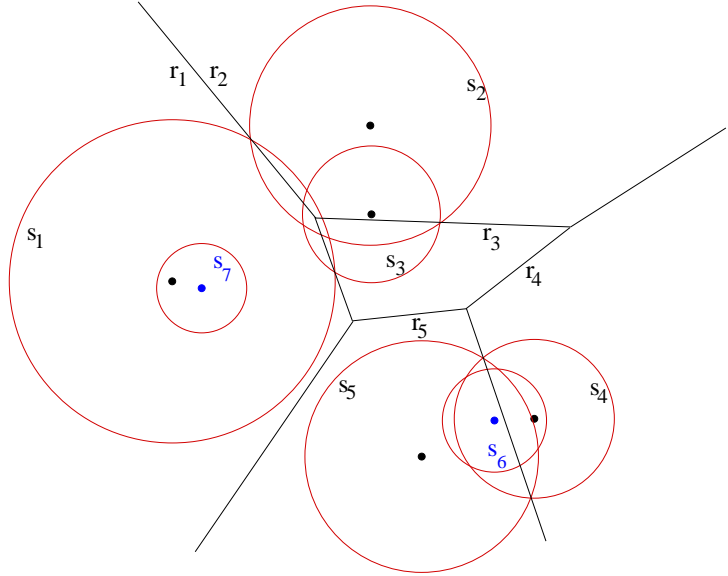}
\caption[Power Diagram]{Power Diagram: 7 disks, 5 regions (adapted from
\cite{AurenhammerPowerDia})}
\label{figure:UnequalBadCase}
\end{figure}
The power diagram for disks with equal radii (under distance measure $\rho$) is
a Voronoi diagram (under Euclidean distance measure $d(p, c)$.) The
generalization, however, leads to instances where a disk does not have a
corresponding power region. One such instance is shown in figure
\ref{figure:UnequalBadCase}. There are no power regions corresponding to disks
$s_6$ and $s_7$; since no point in the plane is closest (by $\rho$) to any of
these disks. In this chapter we assume that every disk has a corresponding
power region. We discuss the impact of relaxing this assumption later in
Chapter \ref{chapter:dynamicCoverage}.

The power diagram generalizes the Voronoi distance measure of $\delta$ to 
$\rho$. This leads us to three key facts, which together form the core of our 
progression from Voronoi diagrams to power diagrams as tools for computing the 
coverage map.
\begin{enumerate}
\item A generalization of Observation \ref{obs:closestInterferes} shows the
same relationship exists between the power region and interference disks. This
generalization is shown in Observation \ref{obs:closestPowerInterferes}.
\item A generalization of Lemma \ref{lemma:onlyNeighbors} (Lemma
  \ref{lemma:onlyPowerNeighbors} also holds when the distance measure is
  replaced by $\rho$. This means that the Actual Coverage Map will lie only in
  the power region.
\item A generalization of Lemma \ref{lemma:closestNeighbor} is also possible,
as we will see shortly in Lemma \ref{lemma:closestPowerNeighbor}. This means
that we need to consider interference only from transmitters that are Power
Neighbors.
\end{enumerate}
In Table \ref{table:VoronoiPower}, we introduce the new notation for unequal ranges and the power
diagram. We also note the corresponding notation with equal ranges and the
Voronoi diagram.

\begin{table}
\centering
\begin{tabular}{|p{2in}|p{2in}|}
\hline
{\bf Equal Ranges} & {\bf Unequal Ranges} \\
\hline
Voronoi Diagram $\mathbb{V}(T)$ & Power Diagram $\mathbb{P}(T)$ \\
\hline
Transmitter Location $p$ & Transmitter Location $\tilde{p}$ \\
\hline
Voronoi Region $\triangle(p)$ & Power Region $\triangle(\tilde{p})$ \\
\hline
Voronoi Neighbors $\Gamma(p)$ & Power Neighbors $\Gamma(\tilde{p})$ \\
\hline
Voronoi Frame $\bot(p)$ & Power Frame $\bot(\tilde{p})$ \\
\hline
Voronoi Region partition by Voronoi Frame $\blacktriangle(p, q)$ & Power Region 
partition by Power Frame $\blacktriangle(\tilde{p}, \tilde{q})$ \\
\hline
\end{tabular}
\label{table:VoronoiPower}
\caption{Notation for Unequal Ranges}
\end{table}
We begin with an observation that relates the distance measure $\rho$ to 
interference disks. Note its correspondence with Observation 
\ref{obs:closestInterferes}

\begin{observation}
\label{obs:closestPowerInterferes}
Consider a point $x$ on the interference disk for $\tilde{p}$. $x$ is also on
the interference disk of every transmitter closer than $\tilde{p}$ in power
distance.
\end{observation}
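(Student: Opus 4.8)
The plan is to follow the structure of the proof of Observation~\ref{obs:closestInterferes} almost verbatim, replacing the signed Euclidean measure $\delta$ by the power distance $\rho$. The essential point is that, just as $\delta$ did for equal radii, the power distance encodes disk membership directly in its sign: from the definition $\rho(x, \bigcirc) = d(x,c)^2 - r^2$, we have $\rho(x, \bigcirc) < 0$ precisely when $x$ lies strictly inside $\bigcirc$, and $\rho(x, \bigcirc) \le 0$ precisely when $x$ lies on or inside $\bigcirc$. Because this sign correspondence holds for a disk of \emph{any} radius, the argument no longer needs an analogue of Observation~\ref{obs:equal} (the equal-radii equidistance fact); the unequal-radii case is in fact cleaner than the equal-radii one.

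Concretely, I would first fix a transmitter $\tilde{q}$ that is closer to $x$ than $\tilde{p}$ in power distance, writing $\bigcirc_p$ and $\bigcirc_q$ for the interference disks of $\tilde{p}$ and $\tilde{q}$, so that $\rho(x, \bigcirc_q) < \rho(x, \bigcirc_p)$ by hypothesis. Next I would translate the assumption that $x$ lies on the interference disk of $\tilde{p}$ into the statement $\rho(x, \bigcirc_p) \le 0$. Chaining these gives $\rho(x, \bigcirc_q) < \rho(x, \bigcirc_p) \le 0$, hence $\rho(x, \bigcirc_q) < 0$, which by the sign correspondence means $x$ lies strictly inside $\bigcirc_q$, i.e.\ on the interference disk of $\tilde{q}$. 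Since $\tilde{q}$ was an arbitrary transmitter nearer than $\tilde{p}$ in power distance, the claim follows.

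There is no substantial obstacle here; the only care required is bookkeeping of the sign conventions and of strict versus non-strict inequalities at the disk boundary, so that ``on the interference disk'' (the closed disk, $\rho \le 0$) and ``closer in power distance'' (strict inequality) are used consistently, and the final conclusion lands in the closed disk. The one conceptual check worth making explicit is that the sign-membership correspondence genuinely holds for arbitrary radii --- this is exactly what makes $\rho$ the right generalization of $\delta$, and it is immediate from the definition of $\rho$ rather than requiring a separate equidistance observation as in the equal-radii setting.
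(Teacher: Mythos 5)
Your proposal is correct and follows essentially the same route as the paper's own proof: both translate ``on the interference disk'' into the sign of the power distance and chain $\rho(x, \bigcirc_{\tilde{q}}) < \rho(x, \bigcirc_{\tilde{p}}) \le 0$ to conclude membership in $\bigcirc_{\tilde{q}}$. Your explicit handling of the closed-disk boundary (non-strict versus strict inequality) is slightly more careful than the paper's, which uses strict inequalities throughout, but the argument is the same.
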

\begin{proof}
Let $\tilde{q} \in T \setminus \{\tilde{p}\}$. Let $x$ be closer, by power
distance measure $\rho$, to $\tilde{q}$ than $\tilde{p}$. Let
$\bigcirc_{\tilde{p}}$ and $\bigcirc_{\tilde{q}}$ be the interference disks of
$\tilde{p}$ and $\tilde{q}$, respectively.

Since $\rho(x, \bigcirc_{\tilde{q}}) < \rho(x, \bigcirc_{\tilde{p}})$, if $x$
is in the interference range of $\tilde{p}$, $\rho(x, \tilde{p}) < 0$. Which
means that $\rho(x, \tilde{q}) < 0$, i.e. $x$ must also be in the interference
range of $\tilde{q}$.
\end{proof}
We now observe the following generalization of Lemma \ref{lemma:onlyNeighbors}
-
\begin{lemma}[Coverage in Power Region]

\mbox{}
%
Consider a point $x$ outside $\triangle(\tilde{p})$ that is on the transmission
disk for $\tilde{p}$. $x$ is also on some interference disk other than that
of $\tilde{p}$.
\label{lemma:onlyPowerNeighbors}
\end{lemma}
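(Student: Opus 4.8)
The plan is to mirror, almost verbatim, the proof of Lemma~\ref{lemma:onlyNeighbors} (Coverage in Voronoi Region), replacing the Euclidean distance $d$ by the power distance $\rho$ and replacing Observation~\ref{obs:closestInterferes} by its power-diagram analogue, Observation~\ref{obs:closestPowerInterferes}. The statement has the same logical shape: a point outside $\tilde{p}$'s region that lies on $\tilde{p}$'s transmission disk must lie on some other interference disk. So I expect the argument to be a direct translation of the earlier one.

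First I would take a point $x$ outside $\triangle(\tilde{p})$. Since the power diagram partitions the plane, $x$ lies in the power region of some other transmitter; that is, there exists $\tilde{q} \in T \setminus \{\tilde{p}\}$ with $x \in \triangle(\tilde{q})$. By the defining property of the power region, $x$ is closer to $\tilde{q}$ than to $\tilde{p}$ in power distance, i.e. $\rho(x, \bigcirc_{\tilde{q}}) < \rho(x, \bigcirc_{\tilde{p}})$. Next I would use the fact that the transmission disk of $\tilde{p}$ is a subset of its interference disk, so that $x$ lying on the transmission disk of $\tilde{p}$ means $x$ also lies on the interference disk of $\tilde{p}$. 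Having established that $x$ is in $\tilde{p}$'s interference range and that $\tilde{q}$ is power-closer than $\tilde{p}$, I would invoke Observation~\ref{obs:closestPowerInterferes} to conclude that $x$ is also in the interference range of $\tilde{q}$, which is an interference disk other than that of $\tilde{p}$. That completes the chain.

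The one subtlety worth flagging is the standing assumption, stated just above in the excerpt, that every disk has a corresponding power region (the power diagram genuinely partitions the plane without empty regions). If some disk had no power region, the claim ``$x$ lies in the power region of some transmitter'' could still be salvaged by choosing $\tilde{q}$ to be the power-nearest transmitter to $x$, but the clean partition argument relies on this assumption. Since the chapter explicitly assumes it, I would simply cite that assumption and proceed. I do not anticipate a genuine obstacle here: the transmission-disk-inside-interference-disk containment and Observation~\ref{obs:closestPowerInterferes} are exactly the two ingredients that made the Euclidean proof go through, and both carry over unchanged. The main thing to be careful about is notational consistency — writing $\rho(x, \bigcirc_{\tilde{q}}) < \rho(x, \bigcirc_{\tilde{p}})$ rather than mixing the $\delta$ and $\rho$ measures — so that the cited observation applies cleanly.
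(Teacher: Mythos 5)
Your proposal matches the paper's own proof essentially verbatim: the paper also picks $\tilde{q}$ with $x \in \triangle(\tilde{q})$ (citing the partition property of the power diagram), notes the transmission disk of $\tilde{p}$ is contained in its interference disk, and applies Observation~\ref{obs:closestPowerInterferes} to conclude $x$ lies on $\tilde{q}$'s interference disk. Your extra remark about the standing no-empty-power-region assumption is a reasonable flag but not a departure from the paper's argument.
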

\begin{proof}
Let $\tilde{q} \in T \setminus \{\tilde{p}\}$, and $x \in
\triangle(\tilde{q})$. Since the power diagram is a partition of the plane,
such a $\tilde{q}$ exists. Thus, $x$ is closer to $\tilde{q}$, in power
distance, than it is to $\tilde{p}$. The transmission disk of $\tilde{p}$ is a
subset of its interference disk; thus $x$ lies on the interference disk of
$\tilde{p}$. Hence, by Observation \ref{obs:closestPowerInterferes} $x$ must
also be on the interference disk of $\tilde{q}$.
\end{proof}

\begin{figure}
\centering
\includegraphics[height=2.5in]{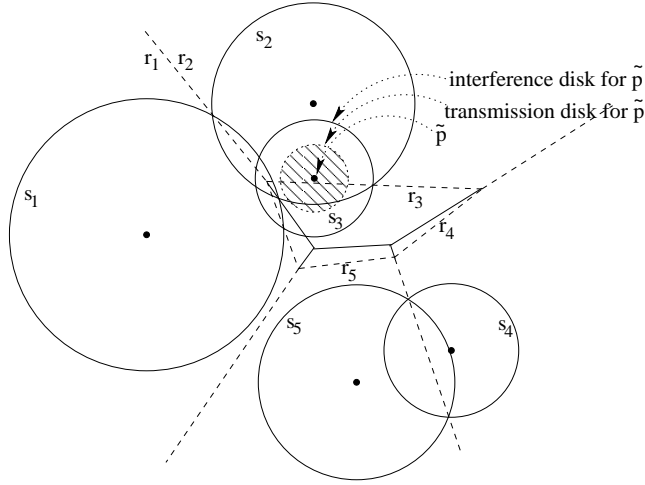}
\caption{Power Diagram with Power Frame}
\label{figure:PowerFrame}
\end{figure}

In the ensuing discussion, we will implicitly assume the use of the distance
measure $\rho$, the power distance; that is, we will say `closest' or `closer'
(respectively, `farthest' or `farther') to mean closest or closer (farther,
farthest, respectively) in power distance.

\begin{lemma}
\label{lemma:closestPowerNeighbor}
Assume that every transmitter has a non-empty power region. Let $x$ be a point
in the power region of transmitter $\tilde{p}$, i.e. $\triangle(\tilde{p})$. If
$\tilde{q}$ is the closest (by power distance) transmitter in $T \setminus
\{\tilde{p}\}$ to $x$, then $\tilde{q}$ is a power neighbor of $\tilde{p}$.
\end{lemma}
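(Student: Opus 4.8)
The plan is to adapt the proof of Lemma \ref{lemma:closestNeighbor} to the power distance $\rho$, but with one essential change of tactic. In the Euclidean argument we travelled along the segment from $x$ straight to the rival site $q$, relying on the facts that $q$ lies in its own cell $\triangle(q)$ and that the triangle inequality holds. Neither ingredient survives the passage to power diagrams: the centre of a disk need not lie in its own power region (indeed it may have no region at all), and $\rho$ does not obey a triangle inequality. The replacement ingredient will be the defining feature of the power diagram --- that power-distance \emph{differences} are affine along any line --- which is precisely what makes the power-region boundaries straight.

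First I would argue by contradiction. Suppose $\tilde q$, the power-closest transmitter to $x$ in $T\setminus\{\tilde p\}$, is \emph{not} a power neighbour of $\tilde p$. Since every transmitter has a non-empty power region, pick a point $z$ in the interior of $\triangle(\tilde q)$. As $x$ lies in the open convex region $\triangle(\tilde p)$ while $z$ does not, the segment from $x$ to $z$ leaves $\triangle(\tilde p)$ at a first point $x_0$, which lies on an edge of $\triangle(\tilde p)$ shared with some power neighbour $\tilde{q_0}\in\Gamma(\tilde p)$ (a generic choice of $z$ avoids exiting through a vertex). Because $\tilde q$ is not a neighbour, $\tilde{q_0}\neq\tilde q$, and $x_0$ lies strictly between $x$ and $z$.

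Next I would exploit linearity. Writing the power distance to the disk of a transmitter $\tilde t$ with centre $c_t$ and radius $r_t$ as $\rho(y,\tilde t)=d(y,c_t)^2-r_t^2$, the quadratic terms cancel in any difference, so along the line through $x$, $x_0$, $z$ the function
\[
g(y)=\rho(y,\tilde q)-\rho(y,\tilde{q_0})
\]
is affine. At the exit point $x_0$ we have $\rho(x_0,\tilde{q_0})=\rho(x_0,\tilde p)\le\rho(x_0,\tilde q)$, so $g(x_0)\ge 0$; at $z\in\triangle(\tilde q)$ we have $\rho(z,\tilde q)<\rho(z,\tilde{q_0})$, so $g(z)<0$. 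Since $g$ is affine with $g(x_0)\ge 0>g(z)$ and $x_0$ lies between $x$ and $z$, the slope in the $x_0\to z$ direction is strictly negative, whence $g$ strictly increases as we move from $x_0$ back toward $x$, giving $g(x)>0$, i.e. $\rho(x,\tilde{q_0})<\rho(x,\tilde q)$.

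Finally, this is the contradiction: $\tilde{q_0}\in T\setminus\{\tilde p\}$ is strictly power-closer to $x$ than $\tilde q$, contradicting the choice of $\tilde q$. I expect the main obstacle to be exactly the observation in the first paragraph --- that the clean Euclidean route through the rival site is unavailable --- and the crux of the fix is isolating and using the affineness of $g$ along the transversal. The only residual care is the degenerate case where $x_0$ is a vertex of $\triangle(\tilde p)$, which a small perturbation of $z$ removes.
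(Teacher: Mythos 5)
Your proof is correct, but it takes a genuinely different route from the paper's. The paper argues via the \emph{reduced} diagram $\mathbb{P}(T \setminus \{\tilde{p}\})$: assuming $\tilde{q} \notin \Gamma(\tilde{p})$, it splits into two cases according to whether some extreme point of $\triangle(\tilde{p}, T)$ lies in $\triangle(\tilde{q}, T \setminus \{\tilde{p}\})$. The first case contradicts the fact that no transmitter is power-closer than $\tilde{p}$ to a boundary point of its own region; the second forces $\triangle(\tilde{q}, T \setminus \{\tilde{p}\})$ to be either disjoint from or swallowed by $\triangle(\tilde{p}, T)$, and since $x$ lies in both, containment holds, which makes $\triangle(\tilde{q}, T)$ empty --- contradicting the hypothesis that no power region is empty. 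You instead stay entirely in the full diagram: you transport the Euclidean proof of Lemma \ref{lemma:closestNeighbor} by walking from $x$ toward a witness point $z \in \triangle(\tilde{q})$ (this is where you use the non-empty-region hypothesis, whereas the paper uses it to kill Case 2), and you replace the triangle inequality --- which indeed fails for $\rho$ --- by the observation that differences of power distances are affine along a line, so the sign comparison at the exit point $x_0$ propagates back to $x$ and contradicts the minimality of $\rho(x, \tilde{q})$. Your diagnosis of why the Euclidean argument cannot be copied verbatim (the center of a disk need not lie in its own power region, and $\rho$ has no triangle inequality) is exactly right, and the affineness of $\rho$-differences is the correct substitute. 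What each approach buys: yours is a single, direct argument that visibly parallels the Voronoi case and isolates the one algebraic fact that makes power diagrams work; the paper's containment argument avoids any genericity/perturbation bookkeeping (your vertex-avoidance step) and works with the diagram $\mathbb{P}(T \setminus \{\tilde{p}\})$, which is precisely the power-frame object the chapter reuses elsewhere, so it integrates more tightly with the surrounding machinery. Both proofs share the same mild informality in assuming that a boundary point of $\triangle(\tilde{p})$ on the bisector with $\tilde{q_0}$ certifies $\tilde{q_0} \in \Gamma(\tilde{p})$.
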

\begin{proof}
We denote by $\triangle(\tilde{p}, T)$ the power region of $\tilde{p}$ in the
power diagram of the set of transmitters $T$. Similarly, $\triangle(\tilde{q},
T \setminus \{\tilde{p}\})$ denotes the power region of $\tilde{q}$ in the set
of transmitters $T \setminus \{\tilde{p}\}$. By definition, $x \in
\triangle(\tilde{p}, T) \cap \triangle(\tilde{q}, T \setminus \{\tilde{p}\})$.

Assume that $\tilde{q}$ is not a power neighbor of $\tilde{p}$. We show that 
this leads to a contradiction.
\begin{description}
\item{\em Case 1 :} Assume that an extreme point $x_0$ of $\triangle(\tilde{p},
T)$ lies in $\triangle(\tilde{q}, T \setminus \{\tilde{p}\})$. We show that
this leads to a contradiction.

Let $\tilde{q_0}$ be a power neighbor of $\tilde{p}$ corresponding to the point
$x_0$.
Thus, \[ \rho(x_0, \tilde{p}) = \rho(x_0, \tilde{q_0}) \]
However, since $\tilde{q}$ is closer to $x_0$ than $\tilde{q_0}$,
\[ \rho(x_0, \tilde{p}) = \rho(x_0, q_0) > \rho(x_0, \tilde{q}) \]
This is a contradiction since no transmitter can be closer to $x_0$ than
$\tilde{p}$. Thus, no extreme points of $\triangle(\tilde{p}, T)$ lie in
$\triangle(\tilde{q}, T \setminus \{\tilde{p\}})$.
\item {\em Case 2 :} Assume that no extreme point of $\triangle(\tilde{p}, T)$
  lies in $\triangle(\tilde{q}, T \setminus \{\tilde{p}\})$. We show that this
  leads to a contradiction.

$\Rightarrow$ All edges
in $\triangle(\tilde{p}, T)$ lie outside $\triangle(\tilde{q}, T \setminus
\{\tilde{p}\})$.\\
$\Rightarrow$ Either $\triangle(\tilde{q}, T \setminus
\{\tilde{p}\}) \subset \triangle(\tilde{p}, T)$, or
$\triangle(\tilde{q}, T \setminus \{\tilde{p}\}) \cap \triangle(\tilde{q}, T
\setminus \{\tilde{p}\}) = \phi$.\\
In other words, the power region of $\tilde{p}$ in $\mathbb{P}(T)$ either
encloses that  of $\tilde{q}$ in $\mathbb{P}(T \setminus \tilde{p})$, or the
two power regions are disjoint.\\
Since $x \in \triangle(\tilde{p}, T) \cap \triangle(\tilde{q}, T \setminus
\{\tilde{p}\}) \Rightarrow \triangle(\tilde{p}, T) \cap
\triangle(\tilde{q}, T \setminus \{\tilde{p}\}) \neq \phi$\\
Thus, $\triangle(\tilde{q}, T \setminus \{\tilde{p}\}) \subset
\triangle(\tilde{p}, T)$ \\
$\Rightarrow$ each point in $\triangle(\tilde{q}, T \setminus \{\tilde{p}\})$
  is closer to $\tilde{p}$ than $\tilde{q}$. \\
$\Rightarrow \triangle(\tilde{q}, T) = \phi$\\
This contradicts the assumption that no power region is empty.\qedhere
\end{description}
\end{proof}

Having established the basic correspondence between Voronoi diagrams and power
diagrams, we note that Observation \ref{obs:framePartition}, Corollary 
\ref{cor:onlyPartition}, and Theorem \ref{thm:main} are directly applicable 
to power diagrams. We give corresponding results next.


\begin{observation}
$\bot(\tilde{p})$ partitions $\triangle(\tilde{p})$, and each partition
corresponds to exactly one neighbor in $\Gamma(\tilde{p})$.
\label{obs:powerFramePartition}
\end{observation}
\begin{proof}
$\mathbb{P}(T \setminus \{\tilde{p}\})$, by definition, partitions the
plane. By definition, the power frame is the subset of this power diagram lying
inside $\triangle(\tilde{p})$. Hence $\triangle(\tilde{p})$ is partitioned by
the power frame. Each point in a partition belongs to some power region in
$\mathbb{P}(T \setminus \{\tilde{p}\})$. Due to Lemma
\ref{lemma:closestPowerNeighbor}, a point in such a partition can be closest
only to a neighbor of $\tilde{p}$ in $\mathbb{P}(T)$.

Each point in a partition can, by definition, be closest only to one point in
$T \setminus \{\tilde{p}\}$. Hence, the partition corresponds to exactly one
neighbor.
\end{proof}

We note that the same relationship as Corollary \ref{cor:onlyPartition} holds
between $\blacktriangle(\tilde{p}, \tilde{q})$ and the interference disk for
$\tilde{q}$.

\begin{corollary}
If $x \in \blacktriangle(\tilde{p}, \tilde{q})$ and $x$ is in the interference 
range
of some transmitter $\tilde{t} \neq \tilde{p}$, then $x$ is in the interference
range of $\tilde{q}$.
\label{cor:onlyPowerPartition}
\end{corollary}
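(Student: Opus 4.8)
The plan is to transcribe the proof of Corollary \ref{cor:onlyPartition} into the power-distance setting, substituting the power analogs of each ingredient. The whole argument reduces to establishing a single fact: that $\tilde{q}$, in whose partition the point $x$ lies, is the globally closest transmitter to $x$ (in power distance) among all of $T \setminus \{\tilde{p}\}$. Once that is secured, Observation \ref{obs:closestPowerInterferes} finishes the job in one line.

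First I would unpack the hypothesis. By the definition of the partition, $x \in \blacktriangle(\tilde{p}, \tilde{q})$ means $x \in \triangle(\tilde{p})$ and $\tilde{q} \in \nu(x, \Gamma(\tilde{p}))$; that is, $\tilde{q}$ is a power neighbor of $\tilde{p}$ closest to $x$ \emph{among the neighbors} $\Gamma(\tilde{p})$. To promote this to closeness among all of $T \setminus \{\tilde{p}\}$, I would invoke Lemma \ref{lemma:closestPowerNeighbor}: since $x \in \triangle(\tilde{p})$, the transmitter in $T \setminus \{\tilde{p}\}$ nearest $x$ in power distance is necessarily a power neighbor of $\tilde{p}$. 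Combining the two statements, $\tilde{q}$ is the nearest transmitter to $x$ in the whole of $T \setminus \{\tilde{p}\}$, so $\rho(x, \tilde{q}) \leq \rho(x, \tilde{t})$ for the given $\tilde{t} \neq \tilde{p}$ (the case $\tilde{t} = \tilde{q}$ being immediate).

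Finally, since $x$ lies in the interference range of $\tilde{t}$ and $\tilde{q}$ is no farther from $x$ than $\tilde{t}$, Observation \ref{obs:closestPowerInterferes} gives that $x$ lies in the interference range of $\tilde{q}$, which is the claim. I do not anticipate a genuine obstacle; the only subtlety is the step that upgrades ``closest neighbor'' to ``closest transmitter overall'', which is precisely what Lemma \ref{lemma:closestPowerNeighbor} supplies, and the boundary case $\rho(x, \tilde{q}) = \rho(x, \tilde{t})$ when $x$ sits on a power-frame edge. The latter causes no trouble because membership in an interference range means negative power distance, so $\rho(x, \tilde{t}) < 0$ together with $\rho(x, \tilde{q}) \leq \rho(x, \tilde{t})$ still forces $\rho(x, \tilde{q}) < 0$.
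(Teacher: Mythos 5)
Your proposal is correct and takes essentially the same route as the paper: show that $\tilde{q}$ is the closest transmitter to $x$ in $T \setminus \{\tilde{p}\}$ (in power distance) and then conclude via Observation \ref{obs:closestPowerInterferes}. The only difference is one of rigor, not of approach: the paper compresses the upgrade from ``closest among $\Gamma(\tilde{p})$'' to ``closest in all of $T \setminus \{\tilde{p}\}$'' into the phrase ``by definition,'' whereas you make the needed appeal to Lemma \ref{lemma:closestPowerNeighbor} explicit and also handle the tie case $\rho(x,\tilde{q}) = \rho(x,\tilde{t})$, both of which the paper's one-line proof glosses over.
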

\begin{proof}
By definition, every point in $\blacktriangle(\tilde{p}, \tilde{q})$ is closer 
to $\tilde{q}$ than any other point in $T$. Since $x$ is closer to $\tilde{q}$
than $\tilde{t}$, we can apply Observation \ref{obs:closestPowerInterferes} to
see that $x$ is in the interference range of $\tilde{q}$.
\end{proof}
We now state and prove our main theorem. Note that it is a generalization of
Theorem \ref{thm:main}.

We denote the Actual Coverage Map of a transmitter $\tilde{p}$ by
$\chi(\tilde{p})$. The following result shows that, if no power region is
empty, then the partition given by the power frame allows us to compute the
coverage map by excluding interference from just one transmitter.

\begin{theorem}
\mbox{}
Assume no power region is empty. Then, 
\[
\chi(\tilde{p}) = \bigcup_{q \in \Gamma(\tilde{p})} (\bigcirc^{t}(\tilde{p})
\cap \blacktriangle(\tilde{p},\tilde{q})) \setminus \bigcirc^{i}(\tilde{q})
\]
\end{theorem}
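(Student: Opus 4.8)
The plan is to mirror, step for step, the proof of Theorem \ref{thm:main}, substituting each Voronoi-diagram ingredient with its power-diagram analogue established earlier in this section. I would prove the set equality by establishing both inclusions, recalling first that $\chi(\tilde{p})$ is by definition the set of points lying on the transmission disk $\bigcirc^{t}(\tilde{p})$ and outside every interference disk other than that of $\tilde{p}$.

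For the forward inclusion, I would take $x \in \chi(\tilde{p})$. Lemma \ref{lemma:onlyPowerNeighbors} guarantees that $x$ lies inside the power region $\triangle(\tilde{p})$, since any point on $\bigcirc^{t}(\tilde{p})$ outside $\triangle(\tilde{p})$ is necessarily covered by a competing interference disk and hence excluded from $\chi(\tilde{p})$. By Observation \ref{obs:powerFramePartition} the power frame $\bot(\tilde{p})$ partitions $\triangle(\tilde{p})$, so $x$ falls in exactly one cell $\blacktriangle(\tilde{p}, \tilde{q})$ for a unique power neighbor $\tilde{q} \in \Gamma(\tilde{p})$. Since $x$ lies outside all interference disks except $\tilde{p}$'s, in particular $x \notin \bigcirc^{i}(\tilde{q})$, which places $x$ in the corresponding term of the union.

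For the reverse inclusion, I would take $x$ in one of the union's terms, so that $x \in \bigcirc^{t}(\tilde{p})$, $x \in \blacktriangle(\tilde{p}, \tilde{q})$, and $x \notin \bigcirc^{i}(\tilde{q})$. The remaining task is to show $x$ escapes every other interference disk. Here Corollary \ref{cor:onlyPowerPartition} does the work: were $x$ inside the interference range of any $\tilde{t} \neq \tilde{p}$, it would also lie inside $\bigcirc^{i}(\tilde{q})$, contradicting the hypothesis. Hence $x$ meets the definition of $\chi(\tilde{p})$, and the two inclusions together give the claimed identity.

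The argument carries almost no genuine difficulty once the three power-diagram results are in hand; its correctness rests entirely on their validity. The one point deserving care is the hypothesis that no power region is empty. This assumption is not invoked directly in the union manipulation, but it is precisely what licenses Lemma \ref{lemma:closestPowerNeighbor}, and therefore the partition claim of Observation \ref{obs:powerFramePartition}, on which both inclusions depend. I would flag explicitly that dropping this hypothesis breaks the chain (a transmitter with an empty power region may contribute interference that no power neighbor in $\Gamma(\tilde{p})$ absorbs), which is the substantive content distinguishing this theorem from its Voronoi predecessor and which motivates the dynamic treatment deferred to Chapter \ref{chapter:dynamicCoverage}.
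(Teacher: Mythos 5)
Your proof is correct and takes essentially the same route as the paper's: both rest on Lemma \ref{lemma:onlyPowerNeighbors}, Observation \ref{obs:powerFramePartition}, and Corollary \ref{cor:onlyPowerPartition}, with your version merely spelling the argument out as an explicit double inclusion where the paper compresses it into a summary. Your tracing of the non-emptiness hypothesis through Lemma \ref{lemma:closestPowerNeighbor} into the partition and single-neighbor-exclusion results is sound (the paper instead attaches that hypothesis to Lemma \ref{lemma:onlyPowerNeighbors}, whose own proof does not actually need it), but this is a difference of bookkeeping, not of approach.
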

\begin{proof}
Let $\bigcirc^{i}(\tilde{p})$ and $\bigcirc^{t}(\tilde{p})$ denote the
interference and transmission disks, respectively, of transmitter
$\tilde{p}$. Since no power region is empty, Lemma
\ref{lemma:onlyPowerNeighbors} implies that the Actual Coverage Map lies
inside $\triangle(\tilde{p})$. Also, Observation \ref{obs:powerFramePartition}
states that the Contiguous Feasible Region for $\tilde{p}$ is composed of
contributions from each neighbor. Corollary \ref{cor:onlyPowerPartition} shows
that to find points within $\blacktriangle(\tilde{p}, \tilde{q})$ that lie in 
the
Feasible Coverage Map, it is sufficient only to exclude points on the
interference disk of $\tilde{q}$. Thus the Actual Coverage Map can be computed
from the individual regions contributed by each partition.
\end{proof}

\subsection{Removing Redundant Transmitters}
The power region corresponding to a circle may be empty - as is the case with
$s_7$ in figure \ref{figure:UnequalBadCase}. No point on the power bisector of
$s_1$ and $s_7$ appears in the power diagram since each point on this bisector
is closer to either $s_2$, $s_3$, or $s_5$.

However, as we will show below, if a disk $\bigcirc$ has an empty power region
then it is included in the union of other disks in $\Psi$. A disk that belongs
in the union of other disks has an empty coverage map, since every point on it
is in the interference range of some other transmitter. Hence, for our
purposes, during preprocessing we can remove disks that do not have a
corresponding power region.

We show a more general result - that if a disk and its corresponding power
region have no points in common, then that disk is included in the union of
other disks.
\begin{lemma}[Empty Power Regions]
\mbox{}
Let $\tilde{q}$ be a transmitter with interference disk $\bigcirc_{\tilde{q}}$,
such that $\bigcirc_{\tilde{q}} \cap \triangle(\tilde{q}) = \phi$. Then,
\[\bigcirc_{\tilde{q}} \subseteq \bigcup_{\tilde{p} \in T \setminus \tilde{q}}
\bigcirc_{\tilde{p}} \]
\end{lemma}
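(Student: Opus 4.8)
The plan is to argue pointwise: I take an arbitrary point $x \in \bigcirc_{\tilde{q}}$ and show that it must lie inside the interference disk of some \emph{other} transmitter. The whole argument turns on the sign characterization of the power distance established earlier, namely that $x$ lies inside a disk $\bigcirc(c,r)$ precisely when $\rho(x,\bigcirc) = d(x,c)^2 - r^2 < 0$.

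First, since $x \in \bigcirc_{\tilde{q}}$ lies inside the interference disk of $\tilde{q}$, the power distance satisfies $\rho(x, \bigcirc_{\tilde{q}}) < 0$. Next I invoke the hypothesis $\bigcirc_{\tilde{q}} \cap \triangle(\tilde{q}) = \phi$: because $x$ lies in $\bigcirc_{\tilde{q}}$, it cannot lie in the power region $\triangle(\tilde{q})$. Recalling that $\triangle(\tilde{q})$ is the set of points strictly closest (in power distance) to $\tilde{q}$, the failure of membership produces some transmitter $\tilde{p} \in T \setminus \{\tilde{q}\}$ with $\rho(x, \bigcirc_{\tilde{p}}) \leq \rho(x, \bigcirc_{\tilde{q}})$.

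Chaining the two facts gives $\rho(x, \bigcirc_{\tilde{p}}) \leq \rho(x, \bigcirc_{\tilde{q}}) < 0$, so by the sign characterization $x$ lies inside $\bigcirc_{\tilde{p}}$, an interference disk distinct from $\tilde{q}$'s. Since $x$ was arbitrary, every point of $\bigcirc_{\tilde{q}}$ lies in $\bigcup_{\tilde{p} \in T \setminus \tilde{q}} \bigcirc_{\tilde{p}}$, which is exactly the claim. Notably, this argument never needs to distinguish the case of a genuinely empty power region ($\triangle(\tilde{q}) = \phi$) from the case of a nonempty region disjoint from the disk; both are covered uniformly by the single hypothesis $\bigcirc_{\tilde{q}} \cap \triangle(\tilde{q}) = \phi$.

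I expect no real obstacle here; the lemma drops out almost immediately once the sign convention for $\rho$ is in hand. The only point demanding a little care is the strict-versus-nonstrict boundary behaviour: since the power region is defined by strict inequalities, negating membership yields a $\tilde{p}$ with $\rho(x, \bigcirc_{\tilde{p}}) \leq \rho(x, \bigcirc_{\tilde{q}})$ allowing equality. But equality is harmless, because $\rho(x, \bigcirc_{\tilde{q}}) < 0$ already forces $\rho(x, \bigcirc_{\tilde{p}}) < 0$, and the conclusion that $x$ lies inside $\bigcirc_{\tilde{p}}$ stands regardless.
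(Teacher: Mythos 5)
Your proof is correct and is essentially the paper's argument: the paper proves the lemma by contradiction (a point of $\bigcirc_{\tilde{q}}$ outside every other disk would have $\rho(x,\bigcirc_{\tilde{q}})<0<\rho(x,\bigcirc_{\tilde{p}})$ for all $\tilde{p}\neq\tilde{q}$ and hence lie in $\triangle(\tilde{q})$), and your direct pointwise argument is exactly the contrapositive of that step, built from the same two ingredients --- the sign characterization of $\rho$ and the definition of the power region. Your explicit handling of the non-strict inequality from negating membership in $\triangle(\tilde{q})$ is, if anything, slightly more careful than the paper's.
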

\begin{proof}
We prove this result by contradiction.\\

Let $\bigcirc_{\tilde{q}} \nsubseteq \displaystyle\bigcup_{\tilde{p} \in T
  \setminus \tilde{q}} \bigcirc_{\tilde{p}}$\\
$\Rightarrow \exists x \in \bigcirc_{\tilde{q}}$ such that $\forall \tilde{p}
\neq \tilde{q},\ x \notin \bigcirc_{\tilde{p}}$\\
$\Rightarrow (\rho(x, \bigcirc_{\tilde{q}}) < 0) \land (\forall \tilde{p}
\neq \tilde{q},\ \rho(x, \bigcirc_{\tilde{q}}) > 0)$\\
$\Rightarrow x \in \triangle(\tilde{q})$\\
$\Rightarrow x \in \triangle(\tilde{q}) \cap \bigcirc_{\tilde{q}}$\\
This contradicts the assumption that $\triangle(\tilde{q}) \cap
\bigcirc_{\tilde{q}}$ is empty.
\end{proof}
This fact justifies our pre-processing step for removing disks that have an
empty power region.
\section{Algorithm}
\label{section:Algo}
We collate the observations made in the preceding text into an algorithm. The
inputs to the algorithm are: a set $T$ of transmitters, their locations in the
plane, and their transmission and interference radii. The algorithm outputs a
coverage map for $T$, denoted by $\widehat{\chi}(T)$.

Notation defined in Table \ref{table:VoronoiPower} is used in the algorithm. In 
addition, we denote by $\hbar(\tilde{p}, \tilde{q})$ the half-space of points 
closer to $\tilde{p}$ than $\tilde{q}$.
\begin{algorithm}[Coverage Map]
\label{algo:coverage-map}
\end{algorithm}
\begin{enumerate}
\item Initialize: $\widehat{\chi}(T) \leftarrow \phi$.
\item\label{step:pdcons} Compute the Power Diagram $\mathbb{P}(T)$.
\item For each transmitter $\tilde{p} \in T$, do
   If $\triangle(\tilde{p}) = \phi$, $T \leftarrow T \setminus \{\tilde{p}\}$.
\item\label{step:mainloop} For each transmitter $\tilde{p} \in T$, do
   \begin{enumerate}
   \item $\chi(\tilde{p}) \leftarrow \phi$
   \item \label{step:pdcons2} Find the Power Diagram of $\Gamma(\tilde{p})$,
   i.e.$\mathbb{P}(\Gamma(\tilde{p}))$.
   \item \label{loop:parts} For each region $\triangle(\tilde{q},
   \Gamma(\tilde{p}))$, do
      \begin{enumerate}
      \item\label{step:polyint} $\blacktriangle(\tilde{p}, \tilde{q})
      \leftarrow \triangle(\tilde{q}, \Gamma(\tilde{p})) \cap \hbar(\tilde{p},
      \tilde{q})$
      \item\label{step:circint} $\chi(\tilde{p}) \leftarrow \chi(\tilde{p})
      \cup ((\blacktriangle(\tilde{p}, \tilde{q}) \cap \bigcirc^t_{\tilde{p}})
      \setminus \bigcirc^i_{\tilde{q}})$
      \end{enumerate}
   \end{enumerate}
\item For each transmitter $\tilde{p} \in T$, do
$\widehat{\chi}(T) \leftarrow \widehat{\chi}(T) \cup \chi(\tilde{p})$
\end{enumerate}
\subsection{Running Time Analysis}
We show a proof sketch of a result from Aurenhammer et 
al.~\cite{AurenhammerPowerDia} that bounds the number of power edges in a power 
diagram.
\begin{observation}
\label{obs:numedges}
The number of power edges in a power diagram is less than $3n - 6$.
\end{observation}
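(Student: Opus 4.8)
The plan is to treat the power diagram as a planar graph and apply Euler's formula, exactly as one does for the classical Voronoi diagram. The excerpt already establishes that each power region is a convex polygon and (under the standing assumption that no region is empty) that there are exactly $n$ of them; the power edges and power vertices therefore form a connected planar subdivision whose faces are precisely these $n$ regions. The three quantities I would track are the number of power vertices $V$, the number of power edges $E$, and the number of faces $F = n$.

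First I would dispose of the unbounded edges, which otherwise escape Euler's relation, since that relation is stated for finite planar graphs with bounded edges. The standard device is to add a single vertex $v_\infty$ ``at infinity'' at which all the rays of the diagram terminate; on the sphere this turns the diagram into a finite planar graph without altering either the edge count or the face count, and the $n$ power regions remain the $n$ faces. With this augmentation the vertex set has size $V+1$, and Euler's relation $(V+1) - E + n = 2$ gives $V = E - n + 1$.

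Next I would invoke the degree condition. Generically each finite power vertex is a point whose power distance to three sites agrees, so it has degree at least three; and for $n \ge 3$ the vertex $v_\infty$ also has degree at least three, since at least three rays run to infinity. Summing degrees over the $V+1$ vertices and using $\sum_v \deg(v) = 2E$ yields $2E \ge 3(V+1)$. Substituting $V = E - n + 1$ gives $2E \ge 3(E - n + 2)$, which rearranges to $E \le 3n - 6$, the claimed bound.

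The only point needing care is the degree-three hypothesis. In a degenerate configuration several regions may meet at a single vertex, giving it degree greater than three; but this only strengthens $2E \ge 3(V+1)$, so the bound $E \le 3n - 6$ survives. I expect the main obstacle to be the bookkeeping around the unbounded features---routing every ray to the point at infinity and checking that the compactified graph is connected so that Euler's formula is legitimately applicable---rather than any deeper difficulty; once the planar-graph model is in place the counting is immediate, and the bound is seen to match (being in fact tight, \emph{e.g.}\ for three sites).
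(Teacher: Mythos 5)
Your proof is correct, but it takes a genuinely different route from the paper's. The paper argues in the \emph{dual}: it considers the dual graph $D(\mathbb{P})$ of the power diagram (one vertex per region, one edge per shared power edge), asserts that this dual is a triangulation on $n$ vertices, and then invokes the standard fact that such a graph has at most $3n-6$ edges --- a fact which is itself a consequence of Euler's formula. You instead work directly in the \emph{primal} diagram: compactify by routing all rays to a single vertex at infinity, apply Euler's relation $(V+1) - E + n = 2$ to the resulting connected planar graph whose $n$ faces are the power regions, and combine it with the handshake bound $2E \ge 3(V+1)$ coming from the minimum-degree-three property. The two approaches trade different obligations. The paper's dualization is shorter but leaves unproved the claim that $D(\mathbb{P})$ is a (planar) triangulation --- which in degenerate configurations (four or more regions meeting at a power vertex) it is not, though it remains planar and the bound survives. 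Your primal argument is more self-contained and explicitly handles the unbounded edges and the degeneracies, at the cost of the compactification and connectivity bookkeeping you flag; it also yields the companion bound on the number of power vertices ($V \le 2n-5$) essentially for free. One cosmetic remark applying to both arguments: each establishes $E \le 3n-6$ rather than the strict inequality stated in the observation, and indeed the non-strict bound is tight (three sites in general position), so the statement itself should read ``at most'' rather than ``less than.''
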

\begin{proof}
The power diagram in the plane is a planar graph. Its dual graph
$D(\mathbb{P})$ contains exactly one vertex for each region of
$\mathbb{P}$. Two vertices of $D(\mathbb{P})$ are connected  by an edge if, and
only if, the boundaries of the corresponding regions of $\mathbb{P}$ have an
edge in common. $D(\mathbb{P})$ is a triangulation on $n$ vertices. A
triangulation on $n$ vertices cannot have more than $3n - 6$ edges.
\end{proof}
We make an observation that the sum of the number of neighbors over all
transmitters is linear in $n$. This result will be invoked in our proof.
\begin{observation}[Sum of Neighbors]
\label{obs:linear}
\[\sum_{\tilde{p} \in T} |\Gamma(\tilde{p})| = O(n)\]
\end{observation}
\begin{proof}
The sum $\displaystyle\sum_{\tilde{p}\in T}|\Gamma(\tilde{p})|$ is also the
number of ordered pairs $(\tilde{p}, \tilde{q})$ such that $\tilde{p}$ and
$\tilde{q}$ are neighbors in $\mathbb{P}(T)$. Since each power edge in
$\mathbb{P}(T)$ corresponds to two transmitters, the latter is twice the number
of power edges, which is $O(n)$ by Observation \ref{obs:numedges}
\end{proof}
\begin{theorem}[Runtime]
\mbox{}
The coverage map of `$n$' transmitters with equal or unequal ranges can be
constructed in $O(n\log{n})$ time.
\end{theorem}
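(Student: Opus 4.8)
The plan is to walk through Algorithm~\ref{algo:coverage-map} step by step, charge each step a cost, and then combine the charges using the two counting observations already established. The only two places where genuine geometric work happens are the power-diagram constructions — the global one in Step~\ref{step:pdcons} and the per-transmitter local ones in Step~\ref{step:pdcons2} — so the strategy is to show that everything else is linear and that the local constructions sum to $O(n\log n)$ despite being repeated $n$ times.

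First I would dispatch the cheap steps. Initialization is $O(1)$. The preprocessing loop that discards transmitters with empty power regions is a single scan over $T$, costing $O(n)$. The final accumulation of $\widehat{\chi}(T)$ simply concatenates the $n$ per-transmitter coverage maps, which is $O(n)$ provided each $\chi(\tilde p)$ is already stored as a list of arcs. The construction of the global power diagram $\mathbb{P}(T)$ in Step~\ref{step:pdcons} costs $O(n\log n)$ by the result of Aurenhammer et al.~\cite{AurenhammerPowerDia}; from it we also read off, in $O(n)$ total time, the neighbor lists $\Gamma(\tilde p)$, since the adjacencies are exactly the power edges counted in Observation~\ref{obs:numedges}.

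The heart of the analysis is the main loop (Step~\ref{step:mainloop}). For a fixed transmitter $\tilde p$, the dominant cost is building the local power diagram $\mathbb{P}(\Gamma(\tilde p))$ of its neighbors in Step~\ref{step:pdcons2}, which takes $O(|\Gamma(\tilde p)|\log|\Gamma(\tilde p)|)$, again by~\cite{AurenhammerPowerDia}. The inner loop over the regions $\triangle(\tilde q,\Gamma(\tilde p))$ performs, per region, one clip of a convex power cell against the half-space $\hbar(\tilde p,\tilde q)$ (Step~\ref{step:polyint}) and a constant number of arc/disk intersections against $\bigcirc^{t}_{\tilde p}$ and $\bigcirc^{i}_{\tilde q}$ (Step~\ref{step:circint}); by Theorem~\ref{thm:main} and the generalized-polygon machinery of Berberich et al.~\cite{ConicPolygon}, each clip costs time linear in the number of edges of the cell and each arc intersection is $O(1)$. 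Since the local power diagram has total combinatorial complexity $O(|\Gamma(\tilde p)|)$, the entire inner loop for $\tilde p$ runs in $O(|\Gamma(\tilde p)|)$ time and is absorbed into the diagram-construction cost. Hence the per-transmitter cost is $O(|\Gamma(\tilde p)|\log|\Gamma(\tilde p)|)$.

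It remains to sum over all transmitters. Bounding each logarithmic factor crudely by $\log|\Gamma(\tilde p)|\le\log n$ and pulling it out of the sum gives
\[
\sum_{\tilde p \in T} |\Gamma(\tilde p)|\log|\Gamma(\tilde p)|
\;\le\; \log n \sum_{\tilde p \in T} |\Gamma(\tilde p)|
\;=\; O(n\log n),
\]
where the last equality is exactly Observation~\ref{obs:linear}. Adding the $O(n\log n)$ global diagram and the $O(n)$ auxiliary steps yields the claimed bound. The one step that requires care — and the likely main obstacle — is justifying that rebuilding a fresh power diagram for every neighborhood does not accumulate to more than $O(n\log n)$: a naive bound of $O(n\cdot n\log n)$ would be far too weak, so the argument hinges entirely on Observation~\ref{obs:linear} keeping the total neighborhood size linear, together with confirming that the arc-intersection primitives really are constant-time so that the inner loop never dominates the local construction.
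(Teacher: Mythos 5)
Your proposal is correct and follows essentially the same route as the paper's proof: a step-by-step cost analysis of Algorithm~\ref{algo:coverage-map}, charging $O(n\log n)$ for the global power diagram, linear time (via convex-polygon clipping and the arc sweep-line of \cite{ConicPolygon}) for the per-region work, and summing the local diagram constructions by pulling out $\log n$ and invoking Observation~\ref{obs:linear} — exactly the paper's key step. The only slight imprecision is your claim that each arc intersection is $O(1)$ (a disk boundary can cross a cell with $m$ edges in $\Theta(m)$ points), but since you also charge the inner loop to the total combinatorial complexity $O(|\Gamma(\tilde p)|)$, as the paper does, the bound is unaffected.
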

\begin{proof}
\begin{description}
\item{Step \ref{step:pdcons}} A power diagram of $n$ disks in the plane
can be constructed in $O(n\log{n})$ time (see \cite{DisksBalls}).
\item{Step \ref{step:pdcons2}:} The power diagram of $\Gamma(\tilde{p})$ can be
constructed in $O(|\Gamma(\tilde{p})|\log{|\Gamma(\tilde{p})|}$ time. Now,
$\log{|\Gamma(\tilde{p})|} \leq \log{n}$\\
$\Rightarrow \displaystyle\sum_{\tilde{p} \in T} |\Gamma(\tilde{p})|
\log{|\Gamma(\tilde{p})|} \leq (\log{n}) \sum_{\tilde{p}\in T}|
\Gamma(\tilde{p})|$\\
$\Rightarrow$ \{By Observation \ref{obs:linear}\} $(\log{n}) \displaystyle\sum_{\tilde{p}\in T} |\Gamma(\tilde{p})| = O(n\log{n})$\\
The total time to compute the power diagrams for all transmitters is thus
$O(n\log{n})$.
\item{Step \ref{step:polyint}:} A well-known algorithm (see \cite{Rourke}) for
convex polygon intersection can be used to compute the partition. This
algorithm is linear in the total number of edges, i.e. in our case
$O(|\Gamma(\tilde{p})|)$. By Observation \ref{obs:linear}, the total time taken executing this step is $O(n)$.
\item{Step \ref{step:circint}:}
The union and set difference operations can be performed by the sweep-line algorithm from \cite{ConicPolygon}. This computation is also linear time in the number of line segments (edges) and arcs; i.e. in our case
$O(|\Gamma(\tilde{p})|)$. By Observation \ref{obs:linear}, the total time taken executing this step is $O(n)$.
\item{Step \ref{loop:parts}:} Each transmitter $\tilde{p}$ can contribute a
partition only to a neighbor (Lemma \ref{lemma:onlyPowerNeighbors}). Thus the
total number of partitions created by the algorithm is the sum of neighbors,
which is $O(n)$ by Observation \ref{obs:linear}. Since each edge appears in at most two partitions, the total number of edges created in this step is also $O(n)$.
\end{description}
Hence, the coverage map can be computed in $O(n\log{n})$ time.
\end{proof}
\section{A Lower Bound on Coverage Map Computation}
We show here that $O(n\log{n})$ time is {\em optimal} in an algebraic decision-
tree model. We use a result on the lower bound of the classical {\em $\epsilon$-
closeness problem} to show that computing the coverage map is 
$\Omega(n\log{n})$.
Our proof is adapted from the proof for the lower bound for constructing 
Voronoi diagrams by reduction from the $\epsilon$-closeness problem 
\cite{AurenhammerNotes}.

We first present a formal representation of the coverage map. We then use this 
representation to locate a transmitter with a certain property we call {\em 
interference-bound}. We show that this operation takes $O(n)$ time. We then 
reduce the $\epsilon$-closeness problem to that of computing the coverage map 
for a suitable set of transmitters and using it to locate an interference-bound 
transmitter. Given a coverage map, an interference-bound transmitter can be 
found in $O(n)$ time. Thus, if the coverage map can be constructed in 
$o(n\log{n})$ time, then $\epsilon$-closeness can be solved in $o(n\log{n})$ 
time as well. This reduction shows that constructing the coverage map is 
$\Omega(n\log{n})$.
\subsection{A Representation of the Coverage Map}
We represent the coverage map $\widehat\chi(T)$ as the union of all coverage 
regions for transmitters in $T$. Each coverage region is represented by a set 
of arc-polygons, each a list $(s_1, s_2 \ldots s_k, s_1)$ of $k$ circular arcs 
forming a connected closed chain.

Though not explicit in the representation, the areas enclosed by these 
chains form the coverage region. In a particular chain $(s_1, s_2 \ldots s_k, 
s_1)$, corresponding to transmitter $\tilde{p}$, there is at most one arc 
corresponding to the rim of the transmission disk for $\tilde{p}$ (curving {\em 
outward}), whereas the remaining arcs correspond to rims of interference disks 
of transmitters interfering with $\tilde{p}$ (curving {\em inward}).
\subsection{Locating an Interference-Bound Transmitter}
A transmitter whose transmission disk intersects with an interference disk of 
another transmitter is called {\em interference-bound}. Assume that such a 
transmitter $\tilde{p}$ exists and has a non-empty coverage region. The coverage 
region for $\tilde{p}$ has at least one inward arc - corresponding to a 
transmitter that intersects with its transmission disk.

Suppose we are given a coverage map $\widehat\chi(T)$, and we want to find 
whether there exists an interference-bound transmitter in $T$. We first test for 
a transmitter with an empty coverage region. We then test each arc in the 
coverage map to check that it is not a complete circle. Thus, given the coverage 
map, one can find an interference-bound transmitter in time linear in the number 
of arcs in the coverage map. We show that the number of arcs is $O(n)$. 
%
\begin{lemma}
If no power region is empty, then the total number of arcs in the coverage map 
is $O(n)$.
\end{lemma}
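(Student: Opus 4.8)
The plan is to charge every arc of the coverage map to an ordered neighbor pair $(\tilde p,\tilde q)$ in the power diagram, and then to invoke Observation \ref{obs:linear} (Sum of Neighbors), which gives $\sum_{\tilde p\in T}|\Gamma(\tilde p)|=O(n)$. Concretely, I would first prove the local claim that the arc count of a single transmitter's coverage region $\chi(\tilde p)$ is $O(|\Gamma(\tilde p)|)$; summing this estimate over all $\tilde p$ and applying Observation \ref{obs:linear} to the map $\widehat\chi(T)=\bigcup_{\tilde p}\chi(\tilde p)$ then yields the desired $O(n)$ bound.

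To bound the arcs of $\chi(\tilde p)$, I would start from the power-diagram coverage theorem (the analogue of Theorem \ref{thm:main}), which writes
\[
\chi(\tilde p)=\bigcup_{\tilde q\in\Gamma(\tilde p)}\big(\bigcirc^{t}(\tilde p)\cap\blacktriangle(\tilde p,\tilde q)\big)\setminus\bigcirc^{i}(\tilde q).
\]
By Observation \ref{obs:powerFramePartition} the cells $\blacktriangle(\tilde p,\tilde q)$ partition $\triangle(\tilde p)$, so the summands have pairwise disjoint interiors and the arcs of $\chi(\tilde p)$ split cleanly across them. The decisive structural fact is Corollary \ref{cor:onlyPowerPartition}: inside $\blacktriangle(\tilde p,\tilde q)$ the only interference that can arise is that of $\tilde q$, so within one partition the region is bounded only by the two circles $\bigcirc^{t}(\tilde p)$ and $\bigcirc^{i}(\tilde q)$ together with the straight edges of the convex cell $\blacktriangle(\tilde p,\tilde q)$. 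Each of the two circles can enter and leave a convex polygon only by crossing its edges, so the number of arcs it contributes here is bounded by the number $e_{\tilde q}$ of edges of $\blacktriangle(\tilde p,\tilde q)$ plus the at most two mutual crossings of the two circles; in short, the partition contributes $O(e_{\tilde q})$ arcs.

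It then remains to sum $e_{\tilde q}$ over $\tilde q\in\Gamma(\tilde p)$. The cells $\blacktriangle(\tilde p,\tilde q)$ form the subdivision of the convex polygon $\triangle(\tilde p)$ induced by the power frame $\bot(\tilde p)$, i.e.\ a planar graph with exactly $|\Gamma(\tilde p)|$ bounded faces; by the same Euler-formula accounting used in Observation \ref{obs:numedges}, its total edge count is $O(|\Gamma(\tilde p)|)$, whence $\sum_{\tilde q}e_{\tilde q}=O(|\Gamma(\tilde p)|)$ and $\chi(\tilde p)$ carries $O(|\Gamma(\tilde p)|)$ arcs, completing the local claim.

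I expect the main obstacle to be precisely this per-partition arc bound, because a single transmission circle $\bigcirc^{t}(\tilde p)$ can cross a convex polygon in many places, so one cannot naively assert ``one arc per neighbor.'' The clean way around it is the planar-graph edge count above, which amortizes the multiple crossings of $\bigcirc^{t}(\tilde p)$ across the $O(|\Gamma(\tilde p)|)$ frame edges rather than bounding them cell by cell; the earlier remark that only two arc computations per neighbor are needed should be read in this amortized sense. A secondary point to verify is that assembling the per-cell counts into the arc count of the union $\chi(\tilde p)$, and then of $\widehat\chi(T)$, introduces no arcs beyond shared-boundary merges, which it cannot, since the cells tile $\triangle(\tilde p)$ and the distinct regions $\triangle(\tilde p)$ tile the plane.
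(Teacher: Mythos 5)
Your proof is correct, and at the top level it follows the same route as the paper: charge the arcs to the cells $\blacktriangle(\tilde p,\tilde q)$ of the power-frame subdivision and sum over all transmitters via Observation \ref{obs:linear}. The genuine difference is in the per-cell step, and it matters. The paper disposes of that step by asserting that each partition contributes at most one inward and at most one outward arc, and multiplies by the $O(n)$ count of partitions; you instead prove the weaker but sufficient amortized bound of $O(e_{\tilde q})$ arcs per cell, where $e_{\tilde q}$ is the number of edges of $\blacktriangle(\tilde p,\tilde q)$, and then bound $\sum_{\tilde q}e_{\tilde q}=O(|\Gamma(\tilde p)|)$ by the Euler-formula accounting of Observation \ref{obs:numedges}. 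Your caution is justified, because the paper's one-arc-per-cell assertion is false in general: place $\tilde p=(0,0)$ with neighbors at $(0,\pm 0.2)$ and $(\pm 10,0)$, all transmission and interference radii equal to $1$; then $\triangle(\tilde p)$ is the thin strip $[-5,5]\times[-0.1,0.1]$, the coverage region of $\tilde p$ splits into two components, near $(1,0)$ and near $(-1,0)$, and the single cell $\blacktriangle(\tilde p,\tilde q)$ for $\tilde q=(0,0.2)$ contains two transmission arcs and two interference arcs. This is exactly the obstacle you flagged -- a circle's rim can meet a convex cell in several arcs -- and your amortized planarity argument is what rigorously closes that hole. In short, the paper's proof buys brevity but rests on a per-cell claim that does not hold, while your proof buys a correct derivation of the same $O(n)$ bound at the cost of a slightly longer Euler-formula argument.
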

\begin{proof}
We analyze using the partition of the power region by the power frame. Each 
neighboring pair of transmitters contributes to one partition each in two 
power regions (one for each neighbor in the pair). Thus, the total number of 
partitions is twice the sum of neighboring pairs, i.e $O(n)$.
Each partition corresponds to at most one inward arc and at most one outward 
arc. Thus, the total number of arcs is also $O(n)$.
\end{proof}

Thus, given a coverage map for the above configuration, an interference-bound 
transmitter can be located in $O(n)$ time.
\subsection{A reduction from the $\epsilon$-closeness problem}
The lower bound for the $\epsilon$-closeness problem is a classical problem 
related to many fundamental proximity problems in computational geometry. We 
formally state the problem in Lemma \ref{lemma:eps-closeness} below.
\begin{lemma}[$\epsilon$-closeness]
\label{lemma:eps-closeness}
Consider a real number $\epsilon$ and a sequence $(a_1, a_2, \ldots, a_n)$ of 
$n$ real numbers. Finding whether there exists a pair of real numbers $\{a_i, 
a_j\}$ in this sequence such that $|a_i - a_j| < \epsilon$ is 
$\Omega(n\log{n})$.
\end{lemma}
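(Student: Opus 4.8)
The plan is to establish this classical bound in the algebraic decision-tree (more generally, algebraic computation-tree) model using the connected-components technique of Ben-Or. The guiding principle is that the running time of any such tree is controlled by the number of connected components of the set of inputs that share a common answer: a tree of bounded degree and depth $h$ can separate only $c^{h}$ connected components, where $c$ is a constant depending on the degree of the permitted operations. Consequently, if the set of ``no''-instances has $C$ connected components, then the depth must be $\Omega(\log C)$, and the whole argument reduces to a counting problem.

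First I would fix the set of inputs on which the answer is ``no'',
\[ W = \{ (a_1, \ldots, a_n) \in \mathbb{R}^n : |a_i - a_j| \ge \epsilon \text{ for all } i \ne j \}, \]
i.e.~the sequences in which every pair is separated by at least $\epsilon$. The heart of the argument is to show that $W$ has at least $n!$ connected components. For each permutation $\pi$ of $\{1, \ldots, n\}$ I would consider the region $R_\pi$ in which the coordinates occur in the order $\pi$ with consecutive gaps of at least $\epsilon$, namely $a_{\pi(1)} \le a_{\pi(2)} - \epsilon \le \cdots \le a_{\pi(n)} - (n-1)\epsilon$. Each $R_\pi$ is nonempty and contained in $W$, and distinct permutations yield disjoint regions.

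Then I would argue that no two of these regions lie in the same connected component of $W$: any continuous path in $\mathbb{R}^n$ joining a point of $R_\pi$ to a point of $R_\sigma$ with $\pi \ne \sigma$ must, by the intermediate value theorem, pass through a configuration in which the relative order of some pair $i, j$ is reversed, forcing $a_i = a_j$ at an intermediate moment and hence $|a_i - a_j| = 0 < \epsilon$. Such a point lies outside $W$, so the path must leave $W$. Therefore $W$ has at least $n!$ connected components.

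Finally I would invoke Ben-Or's theorem: any bounded-degree algebraic computation tree deciding membership in $W$ (equivalently, solving $\epsilon$-closeness) of depth $h$ satisfies $n! \le c^{\,h+n}$ for a constant $c$, so that $h \ge \log_c(n!) - n = \Omega(n\log n)$ by Stirling's estimate $\log(n!) = \Theta(n\log n)$. This gives the claimed $\Omega(n\log n)$ bound. The main obstacle is the middle step: making the ``crossing'' argument fully rigorous so as to lower-bound the number of connected components by $n!$, and confirming that counting components of $W$ itself (rather than of its complement) suffices for the precise form of Ben-Or's theorem being applied; the final arithmetic, where the additive $-n$ term is dominated by $\log(n!)$, is then routine.
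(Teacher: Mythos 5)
Your proposal is correct: it is the standard Ben-Or connected-components argument, with the $n!$ permutation regions $R_\pi$, the intermediate-value crossing argument, and Stirling's estimate $\log(n!) = \Theta(n\log n)$ all playing their usual roles, and the two concerns you flag are resolved by the standard form of Ben-Or's theorem, which bounds the number of connected components of both the accepted set and its complement by $c^{\,h+n}$. The paper gives no proof of its own---it simply cites Chapters 5 and 8 of Preparata et al.---and the argument in that reference is exactly the one you reconstruct, so your proof coincides with the one the paper relies on.
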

\begin{proof}
Refer Chapters 5 and 8 in Preparata et al.~\cite{preparata}.
\end{proof}

We now prove our main result by reducing the $\epsilon$-closeness problem to 
computing a coverage map.
\begin{theorem}
The coverage map problem is $\Omega(n\log{n})$.
\end{theorem}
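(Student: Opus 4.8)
The plan is to prove the lower bound by a reduction from the $\epsilon$-closeness problem, following the strategy already laid out in the surrounding text. Given an instance $(a_1, \ldots, a_n)$ and threshold $\epsilon$, I would construct a set $T$ of $n$ transmitters and then argue that a coverage map for $T$, together with the $O(n)$-time interference-bound test developed above, answers the $\epsilon$-closeness question. The natural construction is to place transmitter $i$ on the real line (say at the point $(a_i, 0)$ in the plane) and to give every transmitter an identical transmission radius and an identical interference radius, chosen so that two transmission/interference disks overlap exactly when the corresponding $a_i$ and $a_j$ are within $\epsilon$ of each other. Concretely, I would set the transmission radius of each transmitter to $\epsilon/2$ (and take the interference radius equal to the transmission radius, or a matching larger value), so that the transmission disk of $i$ meets the interference disk of $j$ precisely when $|a_i - a_j| < \epsilon$.

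First I would make the equivalence explicit: under this placement, there exists a close pair $\{a_i, a_j\}$ with $|a_i - a_j| < \epsilon$ if and only if some transmitter in $T$ is interference-bound, i.e.\ its transmission disk intersects another transmitter's interference disk. This is immediate from the distance-to-radius relationship forced by the construction. Next I would invoke the machinery already established: by the preceding lemma the coverage map has only $O(n)$ arcs, and by the subsection on locating an interference-bound transmitter one can scan those arcs in $O(n)$ time to detect whether any coverage region has an inward arc (equivalently, whether any arc fails to be a complete circle), which certifies the existence of an interference-bound transmitter.

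The reduction argument then runs as follows. Suppose, for contradiction, that the coverage map could be computed in $o(n\log n)$ time. Given an $\epsilon$-closeness instance, I would build $T$ in $O(n)$ time, compute its coverage map in the assumed $o(n\log n)$ time, and run the $O(n)$ interference-bound test; the overall cost would be $o(n\log n) + O(n) = o(n\log n)$. By the equivalence of the first step this decides $\epsilon$-closeness, contradicting the $\Omega(n\log n)$ lower bound of Lemma~\ref{lemma:eps-closeness}. Hence computing the coverage map must be $\Omega(n\log n)$, and combined with the earlier $O(n\log n)$ construction it is $\Theta(n\log n)$.

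The main obstacle I anticipate is getting the degenerate and boundary cases of the construction right so that the reduction is airtight. In particular I must be careful that (i) the chosen radii make disk intersection correspond to the strict inequality $|a_i - a_j| < \epsilon$ rather than $\le$, handling the boundary-touching case cleanly; (ii) each constructed transmitter has a non-empty power region so that the $O(n)$-arc lemma and the interference-bound characterization apply, which may require a mild perturbation or a justification that collinear equal-radius disks yield non-empty Voronoi/power cells; and (iii) the interference-bound detection genuinely distinguishes ``some transmission disk is clipped by a neighbor's interference disk'' from the all-isolated case where every coverage region is bounded solely by a full transmission circle. Once these edge cases are pinned down, the asymptotic bookkeeping is routine.
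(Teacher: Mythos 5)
Your proposal is correct and takes essentially the same route as the paper: a reduction from $\epsilon$-closeness that places transmitter $i$ at $(a_i, 0)$ with transmission and interference radii summing to $\epsilon$, then uses the $O(n)$-arc bound and the $O(n)$-time interference-bound test to derive the contradiction. The only substantive difference is the radius choice --- the paper uses transmission radius $\epsilon/3$ and interference radius $2\epsilon/3$, which keeps the interference disk strictly larger than the transmission disk as the protocol model stipulates, whereas your equal-radii choice of $\epsilon/2$ (or any strictly larger interference radius paired with transmission radius $\epsilon/2$) would either violate that stipulation or break the sum-equals-$\epsilon$ correspondence, a defect you essentially flag yourself and which is trivially repaired.
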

\begin{proof}
The input given to the $\epsilon$-closeness problem is $\epsilon$ and the 
sequence of real numbers $(a_1, a_2, \ldots,$ $\ a_n)$. Given this input we 
construct a set of $n$ transmitters as input for the coverage problem as follows 
- the center of transmitter $\tilde{p}_i$ is $(a_i, 0)$, the transmission radius 
of each transmitter is $\frac{\epsilon}{3}$, and the interference radius of each 
transmitter is $\frac{2\epsilon}{3}$.

Suppose we have the coverage map for this set of transmitters. A transmitter is 
interference-bound only if its transmission disk intersects with some other 
transmitter's interference disk. Given our placement of the disks, this can 
occur only if there exists a pair of transmitters whose centers are located less 
than $\epsilon$ distance apart. The distance between two transmitters, however, 
is the same as the (unsigned) difference between the corresponding real numbers 
in the $\epsilon$-closeness problem. Hence, the existence of an 
interference-bound transmitter implies the existence of a pair $\{a_i, a_j\}$ 
separated by distance less than $\epsilon$. Thus, the coverage map problem must 
be $\Omega(n\log{n})$.
\end{proof}

\chapter{Coverage in the Protocol Model: Dynamic Set of Transmitters}
\label{chapter:dynamicCoverage}
We consider two update operations - addition and deletion of one transmitter,
and propose a method to maintain the coverage map efficiently on each
update. Our purpose is to achieve efficiency comparable to the static algorithm
in Chapter \ref{chapter:Fixed}.

In this chapter we report a randomized algorithm whose efficiency is expressed
as a sum of two components - expected and deterministic. The {\em expected} cost
is an expectation over random choices made in building internal data structures 
- in other words, the expected cost of locating coverage regions affected by 
the update. This cost is $O(\log{n})$; independent of the sequence of updates. 
The other component is a {\em deterministic} cost. Lets say we have found $k$ 
disks whose power regions are affected by a particular update. Our algorithm 
updates the contours of the corresponding coverage regions in $O(k)$.
\section{Our Approach}
\label{section:approachDynamic}
The key ideas we use in our approach are:
\begin{enumerate}
\item {\em Dynamic maintenance of Power Diagrams}
  \begin{enumerate}
    \item {\em Mapping from set of 2-D disks to a 3-D convex polytope} This
    mapping is used for dynamic construction of Voronoi diagrams in Mulmuley 
\cite[Chapters 3 and 4]{Ketan}, following the original idea from
    Brown \cite{BrownMap}. A similar map is used for static construction of
    power diagrams in Aurenhammer \cite {AurenhammerPowerDia}.
    \item {\em Dynamic maintenance of a 3-D convex polytope} We have chosen an
    algorithm from Mulmuley \cite[Chapter 4]{Ketan}. We have adapted
    this algorithm to suit our purpose without compromising on its
    efficiency. This adaptation is discussed later in Section \ref
    {section:dynamicIntersection}.
  \end{enumerate}
\item {\em Dynamic maintenance of disk intersections} We use the new power 
  regions to update disk intersections efficiently.
\end{enumerate}
In Section \ref{section:1-Dvoronoi} we illustrate our ideas using dynamic 1-D
Voronoi diagram maintenance as a conceptual tool. Later, in Section
\ref{section:2-D}, we show how these ideas can be extended to 2-D power
diagrams. Section \ref{section:2-D} shows the use of the new power region to
compute the new coverage regions. In Section \ref{section:2-D} we also present
an analysis of these algorithms. In Section \ref{section:dynamicIntersection}, 
we present the use of the power frame to update disk intersections efficiently. 
Finally, in Section \ref{section:hiddenDisks}, we explore dynamic maintenance 
of the coverage map in presence of transmitters with empty power regions.
\section{Dynamic 1-D Voronoi Diagrams}
\label{section:1-Dvoronoi}
A Voronoi diagram of a set of points on the $x$-axis directly corresponds to
the sorted sequence of these points, ordered in co-ordinate order. In this 
section we discuss a method for dynamic maintenance of a sorted sequence. This 
method lays the groundwork for dynamic maintenance of power diagrams in 2-D.

We will follow a randomized model (as in \cite{Ketan}, \cite{Seidel}), and show 
in Section \ref{section:2-D} that its concepts extend coverage maps as well. 
The performance guarantees in this model are probabilistic - the {\em
expected} cost per addition (deletion) of one point to (from) a set of $n$
points is $O(\log{n})$.
\subsection{The randomization model}
The purpose of the model is to give probabilistic guarantees on the operations
of addition and deletion of one point from a sorted sequence. Each inserted
point is assigned a unique random {\em priority} from a uniform probability
distribution on the interval $(0, 1)$. The data structure maintained is a
binary search tree on the co-ordinate order with the min-heap property on the
priority order. We use the term {\em treap} for this data structure, following
Seidel et al.~\cite{Seidel}. Each point corresponds to a node in the tree. Each 
node is the root of a subtree, with left and right subtrees below it. The 
coordinate order of a node is lower than all nodes in its left subtree, and 
higher than all nodes in its right subtree. The priority of a root node is the 
lowest in its subtree.

Figure \ref{fig:treap} shows a treap of 6 points with co-ordinate values $\{10,
30, 50, 70, 90, 110\}$. The corresponding priority values are given by the
function $\{(10, 0.3),$ $(30, 0.13),$ $(50, 0.4),$ $(70, 0.22),$ $(90, 0.56),$
$(110, 0.43)\}$.
%
\begin{figure}
\centering
\includegraphics[width=5in]{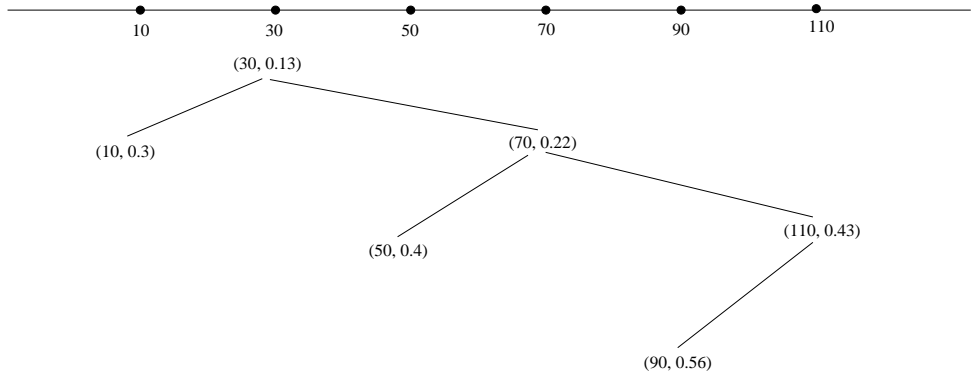}
\caption[Treap]{A treap for points $\{10, 30, 50, 70, 90, 110\}$}
\label{fig:treap}
\end{figure}

Addition of a point requires treap properties to be maintained. This 
maintenance is effected by balancing rotations. A new node is first added at 
the leaf position corresponding to the point's position in the co-ordinate 
order. If the node's priority is higher than its parent's, then the treap is 
correct. Otherwise, the node's position is `rotated' with its parent's to 
preserve the co-ordinate order. This process is repeated until the treap 
property is satisfied. Deletion of a point is analogous - it is performed in 
reverse order. Figure \ref{fig:treapadd} shows the addition of $(130, 0.2)$ by 
rotations to the treap in Figure \ref{fig:treap}.
%
\begin{figure}
\centering
\includegraphics[width=5in]{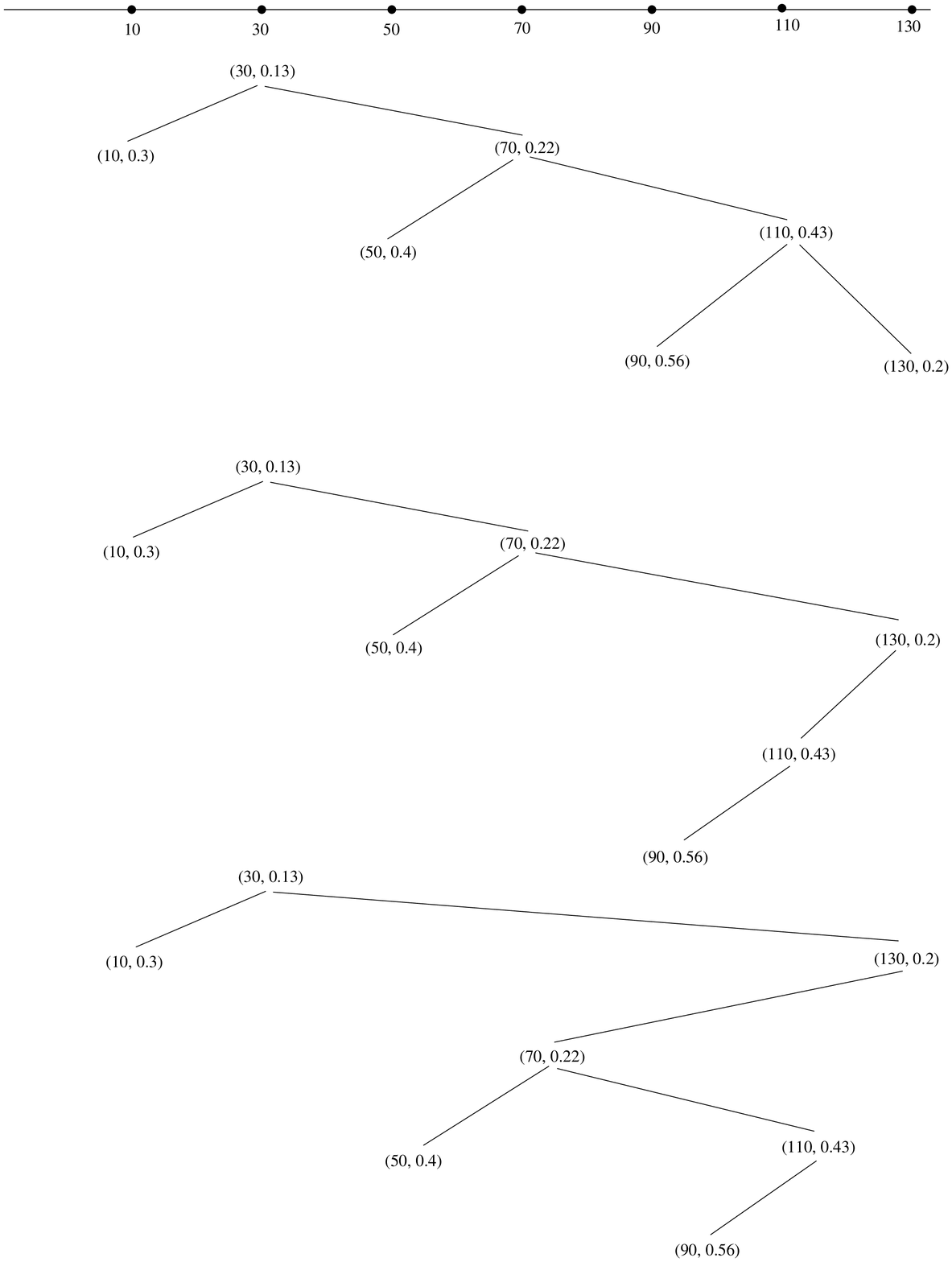}
\caption[Adding a point to a treap]{Adding $130$ to the set $\{10, 30, 50, 70, 90, 110\}$}
\label{fig:treapadd}
\end{figure}
\subsection{Expected cost of updates}
The cost of addition or deletion consists of four components -
\begin{enumerate}
\item The cost of locating the point to delete (or locating the position of the
  point being added),
\item The cost of rotations,
\item The cost of updating the sequence, and
\item The number of random bits used to distinguish the priority of a new node.
Note that a prefix of most significant bits of the priority is sufficient to
distinguish a node's priority.
\end{enumerate}
%
\begin{figure}
\centering
\includegraphics[width=5in]{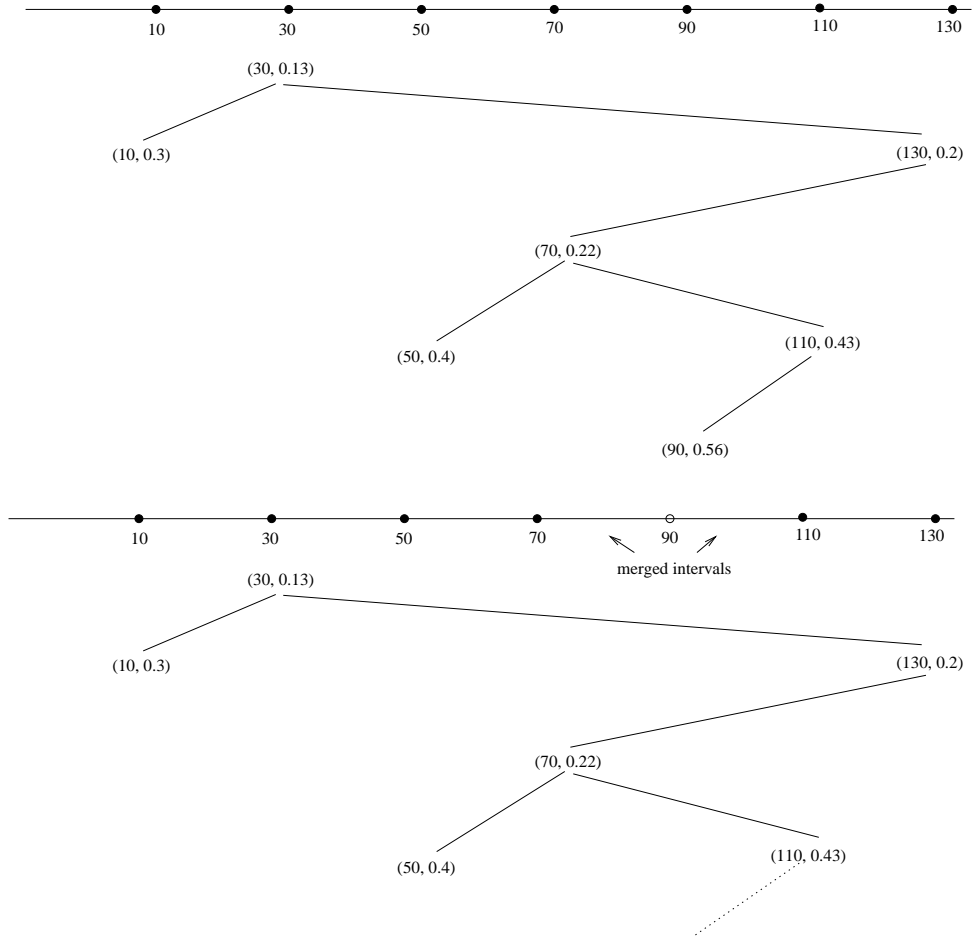}
\caption[Deleting a point from a treap]{Merged intervals after deleting $90$ from the set $\{10, 30, 50, 70, 
90, 110, 130\}$}
\label{fig:treapdelete}
\end{figure}
An analysis of the expected costs of these operations on the treap is
presented in \cite{Seidel}. We summarize the results in Table
\ref{table:expectation}.
\begin{table}
\centering
\begin{tabular}{|p{5in}|p{1in}|}
\hline
Cost of Locating the point to delete or the position of the point being
added & $O(\log{n})$ \\
\hline
Cost of rotations required per update & $O(1)$ \\
\hline
Cost of updating the sequence & $O(1)$ \\
\hline
Number of bits of priority used for a new point & $O(1)$ \\
\hline
\end{tabular}
\caption{Expected Costs for Treap Operations \label{table:expectation}}
\end{table}
\subsection{From 1-D to 2-D}
In subsequent sections we will show extensions of these concepts to higher
dimensions. The partition of the line imposed by the set of points may be
viewed as a set of intervals. Addition of a point splits the interval to which
the point belongs into two adjacent intervals; while a deletion merges the two
intervals adjacent to the deleted point. In the next section, we use a 2-D
extension of the interval - a 2-D convex region. We extend to 2-D the concepts
of random priorities, imaginary sequence of additions in priority order, and
rotation in the data structure to maintain this sequence.
\section{Dynamic 2-D Power Diagrams}
\label{section:2-D}
We now show an extension of the dynamic methods for 1-D Voronoi diagrams to 2-D
power diagrams. Table \ref{table:1D-to-2D} shows the correspondence of some
concepts between dimensions.
\begin{table}
\centering
\begin{tabular}{|p{2.5in}|p{3.5in}|}
\hline
{\bf 1-D} & {\bf 2-D} \\
\hline
Partition by Intervals & Partition by Power Regions \\
\hline
Locate intervals affected by point to delete & Locate partitions affected by 
disk to delete \\
\hline
Merge intervals after deletion & Merge neighboring Power Regions after
deletion \\
\hline
Locate interval for point being added & Determine Power Regions changed by disk
being added \\
\hline
Split interval after addition & Split neighboring Power Regions after addition
\\
\hline
\end{tabular}
\caption{Correspondence between Partitions in 1-D and 2-D
\label{table:1D-to-2D}}
\end{table}
We extend the treap data structure and its analysis. We first show in
Subsection \ref{subsec:struct} that the cost of merging and splitting
partitions in 2-D is $\Omega(n)$. We call this the {\em cost of structural
change}. In Subsection \ref{subsec:dynamicpolytope} we present an extension of the
treap and show that the expected cost of locating partitions is $O(\log{n})$.
\subsection{Cost of Structural Change in 2-D Voronoi Diagrams}
\label{subsec:struct}
%
\begin{figure}
\centering
\includegraphics[width=5in]{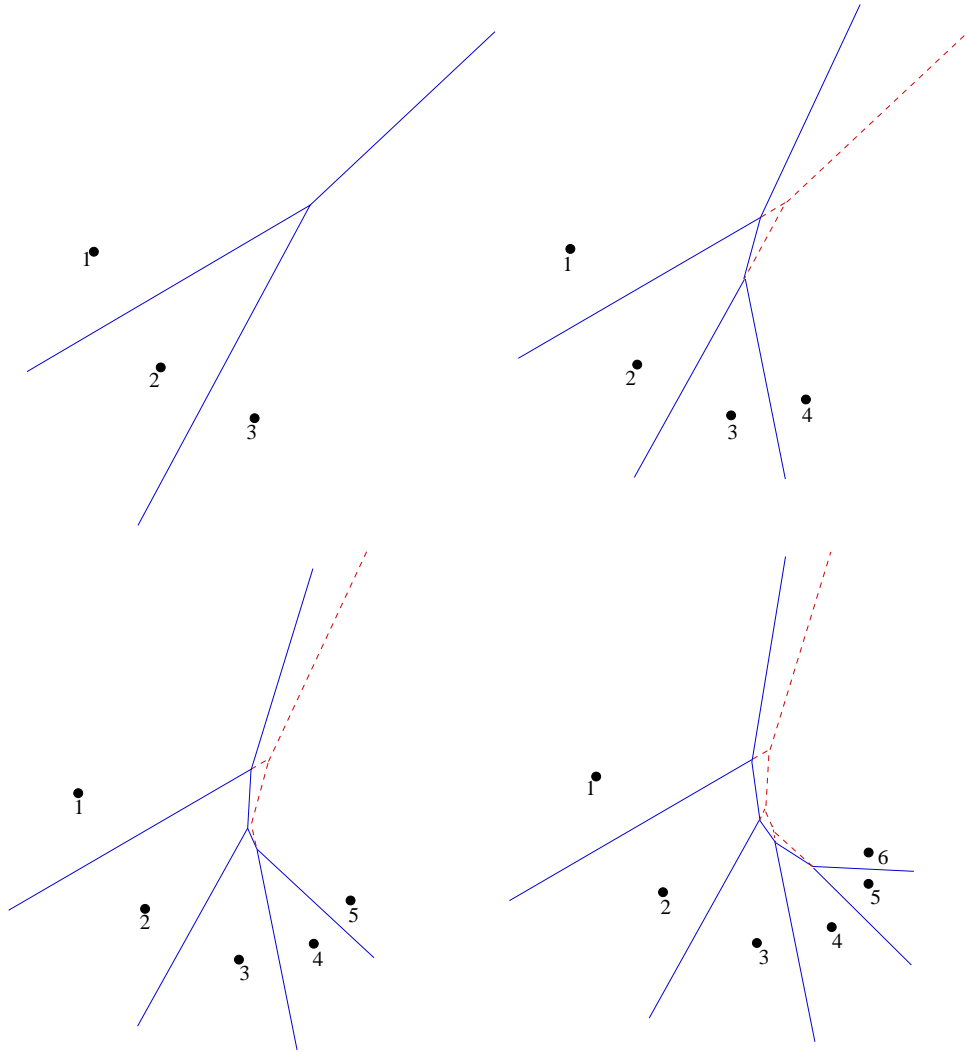}
\caption{Worst-case Sequence of Additions to a Voronoi Diagram}
\label{fig:Voronoilowerbound}
\end{figure}
A lower bound for the performance of any dynamic algorithm is the amount of
structure update in the output per addition or deletion. In 1-D, the structural
change is the merging or splitting of one interval, which is $O(1)$. 1-D
intervals extend to convex polytopic partitions in 2-D. We show (Lemma
\ref{lemma:Voronoilowerbound}) that for 2-D Voronoi (and thereby, power)
diagrams this lower bound is the number of existing partitions -
i.e. $\Omega(n)$.
\begin{lemma}
\label{lemma:Voronoilowerbound}
There exists a sequence of addition of points in an online construction of a
Voronoi Diagram such that the amortized cost per addition is $\Omega(n)$.
\end{lemma}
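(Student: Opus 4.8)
The plan is to prove this as an adversarial lower bound on the \emph{output}: I will exhibit one fixed point set together with a particular order of revelation (the sequence suggested by Figure \ref{fig:Voronoilowerbound}) for which the $k$-th insertion necessarily introduces $\Omega(k)$ new edges into the Voronoi diagram. Since any online algorithm must at least write down the portion of the diagram that changes, this forces $\sum_{k=1}^{n}\Omega(k)=\Omega(n^2)$ total work over the $n$ insertions, i.e. amortized $\Omega(n)$ per addition. The clean fact I would lean on is that every edge bordering the newly created cell is a bisector involving the \emph{new} site, so none of those edges could have existed before the insertion; hence the number of edge-creation events at step $k$ is exactly the degree of the new point in the current diagram, and it suffices to make that degree $\Omega(k)$.

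The configuration I would use keeps the growing point set in convex position and places each new point so that it becomes a Voronoi neighbour of a constant fraction of the points inserted so far. Concretely, I would fix a large circle supplying ``anchor'' points and insert further points marching toward the centre. The enabling geometric observation is that a point lying well inside a circle whose boundary carries $p_1,\dots,p_m$ is Delaunay-adjacent to \emph{every} $p_i$: for each $i$, the circle internally tangent to the big circle at $p_i$ and passing through the interior point is empty of all other $p_j$ (it lies inside the big circle and touches it only at $p_i$). Such an insertion therefore produces a cell bounded by $\Omega(m)$ edges, all of them freshly created, giving structural change $\Omega(k)$ at that step.

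The main obstacle is turning this from a single expensive insertion into a whole \emph{sequence} that is expensive at every step: I must arrange the positions and order so that the empty-circle condition holds simultaneously throughout, i.e. so that interior points added earlier do not block a later point from ``seeing'' the anchors, while each newly inserted point still acquires $\Omega(k)$ neighbours in the diagram of the current prefix. I would discharge this by a quantitative perturbation argument — placing the interior points on a curve converging to the centre in generic position and verifying emptiness of the relevant tangent circles with explicit slack — so that inserting the $k$-th point creates $\Omega(k)$ incident edges (and, incidentally, destroys the comparably many edges of the previously dominant interior cell). Summing the per-step bound yields $\Omega(n^2)$ total change and hence amortized $\Omega(n)$ per addition. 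Finally, since the Voronoi diagram is exactly the power diagram of disks of equal radius (as established earlier in the chapter), the same sequence is simultaneously a worst case for power diagrams, so no separate construction is required for that setting.
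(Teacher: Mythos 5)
Your opening reduction is correct and is a cleaner piece of bookkeeping than anything the paper records: every Voronoi edge on the boundary of the newly created cell is a piece of a bisector involving the new site, so the number of edges created at step $k$ equals the degree of the new site in the updated diagram, and a sequence in which this degree is $\Omega(k)$ forces $\Omega(n^2)$ total structural change, hence amortized $\Omega(n)$. The gap is in the step you yourself flag as the main obstacle, and the repair you propose would not close it. Your enabling observation (``the circle internally tangent to the big circle at $p_i$ and passing through the interior point is empty'') is valid only for the \emph{first} interior insertion. Once an earlier interior point exists, the tangent circle is in general not empty: with $C$ the unit circle, anchor $p=(1,0)$, $q_1=(0.1,0)$ and $q_2=(0.05,0)$, the circle tangent to $C$ at $p$ through $q_2$ has centre $(0.525,0)$ and radius $0.475$, and it contains $q_1$. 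Worse, ``generic position plus explicit slack'' cannot rescue the claim, because the failure mode is structural, not degenerate: if the interior points converge to the centre while \emph{winding around} it (a perfectly generic realization of ``a curve converging to the centre''), each new point eventually lies inside the convex hull of its predecessors, and an anchor in direction $u$ attaches its cell to a cluster point that maximizes $\langle q_j,u\rangle$ up to a quadratically small correction in the cluster radius --- never to a point well inside that hull. For such a sequence the new point acquires essentially no anchor adjacencies, and the per-step $\Omega(k)$ bound collapses.

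The construction is salvageable, but by a structural choice rather than a perturbation: march the interior points inward along a fixed ray, $q_j=(\varepsilon_j,0)$ with $\varepsilon_1>\varepsilon_2>\cdots$. Then $q_k$ is the unique extreme point of the cluster in every direction making an obtuse angle with the ray, and a short computation shows the circle tangent at anchor $p_i=(\cos\psi_i,\sin\psi_i)$ through $q_k$ contains some earlier $q_j$ exactly when $\cos\psi_i>(\varepsilon_j+\varepsilon_k)/(1+\varepsilon_j\varepsilon_k)$; hence every anchor with $\cos\psi_i\le(\varepsilon_{k-1}+\varepsilon_k)/(1+\varepsilon_{k-1}\varepsilon_k)$ --- slightly more than half of them --- is genuinely Delaunay-adjacent to $q_k$. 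Each interior insertion therefore creates $\Omega(n)$ fresh edges (and destroys the comparably many edges its predecessor had to those far-side anchors, as you anticipated), which completes the proof; note your phrase ``adjacent to \emph{every} $p_i$'' must be weakened to this constant fraction. For comparison, the paper's own proof is literally ``See Figure \ref{fig:Voronoilowerbound}'', backed by one sentence asserting that points inserted left to right, each inside the circumcircle of the three points succeeding it, affect all existing regions. So your route --- separating a correct counting lemma from an explicit construction --- is genuinely different and, once repaired as above, more complete than the argument the paper gives; but as written, the sustaining step is a real gap, not a routine verification.
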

\begin{proof} See Figure \ref{fig:Voronoilowerbound}. \end{proof}
Figure \ref{fig:Voronoilowerbound} shows points being added from left to right
to a Voronoi diagram; each point is inside the circumcircle of the three points
succeeding it. Each addition affects all existing regions in the
partition. This sequence of points demonstrates that there are sequences of 2-D
inputs for which the structure change for each update is $\Omega(n)$. In
contrast, the cost of structure update in 1-D ($O(1)$) is independent of the
sequence in which points are added.

If the input sequence is {\em random}, however, the expected structure update
cost for every addition is $O(1)$ (see \cite{Seidel}). The algorithm
randomizes choices made in maintaining an internal data structure. We report
the efficiency of our algorithm as a sum of two components - deterministic and
expected. The {\em expected} cost of locating one region affected by the update
is $O(\log{n})$. This cost is independent of the input sequence. It is an
expectation over random choices for the data structure. The other component is
the {\em deterministic} cost of the structure change. If $k$ is this number of
disks whose power regions are affected by a particular update then our
algorithm updates the power regions, and corresponding coverage regions in
$O(k)$.
\subsection{Mapping between 2-D Power Diagrams and 3-D Convex Polytopes}
\label{subsection:mapping}
An upper convex polytope in 3-D is the intersection of a set of 3-D half-spaces
containing the point $(0, 0, \infty)$.  The method we propose for dynamic
updates to power diagrams uses a mapping from 2-D disks to 3-D upper convex
polytopes.

Consider a set of disks $\{C_1, C_2, \ldots, C_n\}$ in the $xy$-plane. Let
$r_i$ be the radius, and $(x_i, y_i)$ be the center of disk $C_i$. Disk $C_i$
is mapped to the half-space $S_i \equiv z \geq 2 x x_i + 2 y y_i - x^2_i -
y^2_i + r^2_i$. The intersection of half-space $S_i$ with the paraboloid $z =
x^2 + y^2$ projects down to $C_i$.

This map is illustrated in Figure \ref{fig:map1}. The intersection of two
disks, and their mapping to two intersecting 3-D half-spaces is shown in Figure
\ref{fig:map2}.
\begin{figure}
\centering
\includegraphics{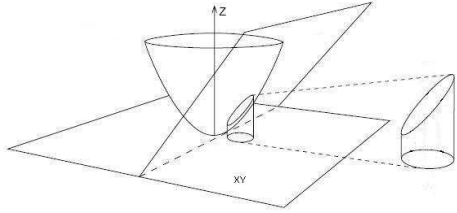}
\caption[Mapping a Disk to a Half-Space]{Mapping a Disk to a Half-Space (courtesy \cite{AurenhammerNotes})
\label{fig:map1}}
\end{figure}
\begin{figure}
\centering
\includegraphics{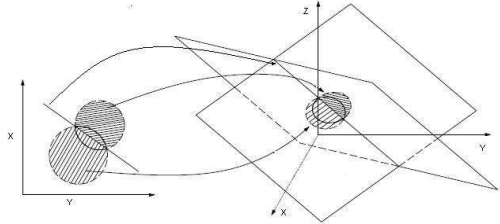}
\caption[Mapping intersections: disks to half-spaces]{Two Intersecting Disks Map to Intersecting Half-Spaces (courtesy
\cite{Molecules})\label{fig:map2}}
\end{figure}
The upper convex polytope formed by the intersection of the half-spaces $\{S_1,
S_2,$ $ \ldots, S_n\}$ is dual to the power diagram of the disks $\{C_1, C_2,
\ldots, C_n\}$. This claim is proved in \cite{AurenhammerPowerDia}.

Figure \ref{figure:UnequalBadCase} shows a power diagram containing disks with
empty power regions. The mapping described here leads to a characterization of
such disks - in terms of half-spaces we call {\em redundant}. A half-space $S$
is redundant if the convex polytope formed by $\{S_1, S_2, \ldots, S_n\}$ is
not changed by the addition of $S$. The mapping described here implies a
duality between redundant half-spaces and disks with empty power regions.

Dynamic maintenance of the 2-D power diagram during addition or deletion of a
disk corresponds to dynamic maintenance of the dual 3-D convex polytope during
addition or deletion of the mapped half-space. The 2-D partition into power
regions is obtained by projecting down from the 3-D convex polytope. This
projection maps the 2-D faces of the 3-D convex polytope to power regions in
2-D.

In the remainder of this section we do not explicitly state our arguments in
the context of 2-D power regions. Instead, we formulate our arguments in terms
of faces of the corresponding convex polytope.
\subsection{Dynamic Maintenance of a 3-D Convex Polytope}
\label{subsec:dynamicpolytope}
The best known dynamic 3-D convex polytope maintenance algorithm is by Chan
\cite{Cha06}. However, this algorithm does not construct the new faces
explicitly. We choose an approach described in Mulmuley 
\cite[Chapter 4]{Ketan}). This algorithm constructs the new faces (i.e.~2-D 
partitions) which we require to compute disk intersections. Our approach, 
however, requires special treatment of redundant half-spaces; Chan's algorithm 
does not. We discuss redundant half-spaces in more detail in Section
\ref{section:hiddenDisks}.

Our approach for dynamic maintenance of a 3-D convex polytope follows that of
Mulmuley \cite[Chapters 3 and 4]{Ketan}, which presents an algorithm
for online construction of a convex polytope. We will refer to the
analysis of this algorithm in the analysis of the dynamic algorithm. We do not,
however, reproduce the online algorithm or its analysis in this report.
\subsubsection{Facial Lattice of a Convex Polytope}
The `facial lattice' of a convex polytope is the adjacency relation between its
vertices, edges, and faces.  Each addition of a half-space causes a change to
the facial lattice. Vertices, edges, and faces that do not belong to the
half-space are removed. New vertices, edges, and faces are created
corresponding to the intersection of the added half-space to the existing
polytope. This operation may be viewed as splitting the faces of the polytope
intersecting with the new half-space; the portion of the split polytope outside
the half-space is removed, and a new face corresponding to the boundary of the
intersection is created.

The size of the facial lattice of a convex polytope constructed with $i$
half-spaces is $O(i)$. $i$ half-spaces contribute to at most $i$ faces. Using
this in Euler's formula, we can show a bound of $O(i)$ on the number of edges
and vertices as well.
\subsubsection{Shuffle - A Randomized Data Structure}
We adopt the terminology {\em Shuffle} from Mulmuley \cite{Ketan}. Each added
half-space is assigned a random priority from the interval $(0,1)$. We begin
the description of Shuffle by assuming that half-spaces are added in order of
increasing priority. Later, we describe the operations of adding and deleting a
half-space from an arbitrary position in the order. We denote Shuffle by
$\mathbb{S}$.

We further assume that the bounded edges and faces of the polytopes we build
are included in a large 3-D cube $\mathbb{C}$. This assumption serves only to
simplify our description, and we later show how it may be removed.

$\mathbb{S}$ contains nodes for vertices and half-spaces, and represents a
relation between these nodes. We will describe these nodes and relation
shortly. $\mathbb{S}$ has eight {\em root} vertex nodes. These root nodes
correspond to the vertices of $\mathbb{C}$.

When a half-space is added to a polytope, a new face and some new vertices are
created. Some edges are split, and some edges are deleted. $\mathbb{S}$
records these actions as follows:
\begin{description}
\item{$vertex(v)$}: vertex corresponding to the vertex node $v$
\item{$created(v)$}: half-space that creates vertex $v$
\item{$next(u)$}: for edge $\{u, v\},\ u\not\in S$ split at $w$, this the vertex
$w$
\item{$prev(w)$}: for edge $\{u, v\},\ u\not\in S$ split at $w$, this the vertex
$u$
\item{$createdFace(S)$}: list of vertices created by $S$, in order of face
traversal
\item{$deletedEdges(S)$}: list of edges deleted by $S$, in order of traversal
\end{description}
The first half-space added, $S_1$, splits some faces of $\mathbb{C}$. The
resultant polytope is $S_1 \cap \mathbb{C}$. Vertex nodes corresponding to the
new vertices are added to $\mathbb{S}$. The relation described by $created,\
next,\ prev,\ createdFace,$ and $deletedEdges$ is created.

This process is repeated for each addition. Consider the addition of (a
non-redundant) half-space $S$ to polytope $H$. The resultant polytope is $S\cap
H$. New nodes that correspond to the new face are created in $\mathbb{S}$. The
relations $created$ and $createdFace$ are updated accordingly. The relations
$next$ and $prev$ are updated to reflect the split of each edge. Edges deleted
are added to the list $deletedEdges$.

Each vertex node corresponds to a vertex in some polytope $\supseteq H$. $prev$
is not defined for root nodes (corresponding to vertices of $\mathbb{C}$). A
vertex node $v$ for which $next(v)$ is not defined is a {\em leaf} vertex
node. Non-leaf vertex nodes correspond to polytopes constructed from
half-spaces with priority less than $S$. We use $next$ and $prev$ to traverse
from root to leaf, as described below.
\subsubsection{Half-Space Intersection using Shuffle}
In order to compute the intersection of a half-space $S$ with a polytope $H$,
we first locate one intersection point, and then the rest of the intersecting
face by traversal on other faces of $H$. We first show the use of $\mathbb{S}$
to locate an intersection point.
\begin{algorithm}[TraverseShuffle(S)]
\label{algo:search}
\end{algorithm}
\begin{enumerate}
\item Find a root node $u$ such that $vertex(u) \in \mathbb{C}\setminus S$
\item \label{item:while-trav} $v \leftarrow next(u)$
\item if $v = \phi$ return $u$ // reached vertex $u$ of $H$ 
\item Find a vertex $w$ on $createdFace(created(v))$ such that $w \not\in S$
\item if such a $w$ exists, then $u \leftarrow w$; repeat \ref{item:while-trav}
\item if no such $w$ exists, then return $\phi$ // $S$ is a redundant half-space
\end{enumerate}
We use the routine {\em TraverseShuffle} to locate a vertex $v\in H\setminus
S$. Since $H\setminus S$ is a connected convex polytope, we can traverse all
faces either intersecting $S$ or outside $S$ starting from the adjacencies of
$v$. We construct the new polytope on this second traversal. If we find a face
that is outside $S$, we remove it from the facial lattice. If we find a face
that intersects $S$, we traverse that face on the path defined by the vertex
adjacencies. During this traversal, we remove edges that lie outside $S$, and
split edges that intersect $S$. A new edge is added between the two new
vertices on this face. After traversing one face, we move to a face with a
common edge that is either outside $S$ or intersects $S$. {\em Doubly-Connected
Edge List} - a data structure for maintaining the facial lattice efficiently is
described in the book by de Berg et al.~\cite{DCEL}.
\subsubsection{Analysis - TraverseShuffle and Intersection}
We first prove the correctness of {\em TraverseShuffle}; that is it returns
$\phi$ if and only if $S$ is redundant for $H$.
\begin{lemma}
\label{lemma:search}
TraverseShuffle {\em returns $\phi$ if and only if $H\subset S$}
\end{lemma}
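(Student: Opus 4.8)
The plan is to prove the biconditional by exhibiting a loop invariant for \emph{TraverseShuffle} and reading off both directions from it, using the elementary fact that a bounded convex polytope is contained in a closed half-space if and only if every one of its vertices is. Since $H$ is the intersection of the cube $\mathbb{C}$ with the added half-spaces, $H\subseteq S$ is equivalent to ``no vertex of $H$ lies in $\mathbb{C}\setminus S$,'' so it suffices to show that \emph{TraverseShuffle} returns a vertex node $u$ with $vertex(u)\notin S$ exactly when such a vertex of $H$ exists. First I would fix the invariant maintained at the top of the loop (the step labelled \ref{item:while-trav}): the current node $u$ is a vertex node of $\mathbb{S}$ with $vertex(u)\notin S$, and $u$ is a vertex of the intermediate polytope built from the half-spaces of priority at most that of $created(u)$. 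This holds on entry, since step~1 selects a root $u$ with $vertex(u)\in\mathbb{C}\setminus S$, and it is preserved by step~5, which advances only to a vertex $w\notin S$. For termination I would note that each advance strictly increases the creation priority: $v=next(u)$ and every vertex of $createdFace(created(v))$ was produced by a half-space inserted strictly after the one creating $u$, so the walk descends monotonically through the construction history and halts after finitely many steps.

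For the ``if'' direction, suppose $H\subseteq S$, so every vertex of $H$ lies in $S$. A node $u$ with $next(u)=\phi$ is a leaf, hence (by the structure's linking of removed vertices to their split points) a vertex of the final polytope $H$; the invariant gives $vertex(u)\notin S$, contradicting $H\subseteq S$. Thus the walk can never terminate via step~3, and by the termination argument it does terminate, so it can only exit through step~6, returning $\phi$.

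For the ``only if'' direction I would argue the contrapositive: assume $H\not\subseteq S$. Writing $S$ as a closed half-space, $H\setminus S=H\cap S^{c}$ is the intersection of two convex sets, hence convex and connected, and it is nonempty; by the vertex fact it contains at least one vertex of $H$. Starting from a cube vertex in $\mathbb{C}\setminus S$ and following $next$, the recorded face $F=createdFace(created(v))$ at each step is exactly the facet separating the region containing $u$ from the surviving polytope at the moment $v$ was created. Because $H\setminus S$ is connected and reaches to the discarded side of each such separating facet, $F$ must carry a vertex $w\notin S$, so the walk advances rather than returning $\phi$; advancing strictly deeper and terminating, it arrives at a leaf vertex of $H$ lying outside $S$ and returns it.

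The hard part will be the single descent step in the ``only if'' direction, namely proving that whenever a vertex of $H$ outside $S$ still exists the separating face $F=createdFace(created(next(u)))$ is guaranteed to contain a vertex outside $S$. This is where the geometry of Mulmuley's Shuffle does the real work: it rests on the convexity of every intermediate polytope and on the fact that the faces recorded by $createdFace$ tile the boundary between the discarded and retained regions at each insertion, so that connectivity of $H\setminus S$ through the history cannot be pinched off at a face lying entirely in $S$. Rather than re-deriving these facts, I would establish the step by induction on insertion order, invoking the invariants of the online convex-polytope construction in Mulmuley \cite[Chapter 4]{Ketan}.
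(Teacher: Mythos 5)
Your overall architecture is the same as the paper's: both proofs rest on the walk invariant that the current node lies outside $S$, on termination of the descent, and on the single geometric fact that when step~4 fails on the facet $F=createdFace(created(v))$, everything that remains of the polytope lies inside $S$ (the paper phrases this as: if no vertex of the face creating $v$ is outside $S$, then no point of $H_v$ is outside $S$, hence $H\subseteq H_v\subseteq S$). Your contrapositive packaging of the two directions and your explicit termination argument (creation priorities strictly increase along $next$) are fine; the latter is even a point the paper leaves implicit.

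The gap is precisely at what you yourself flag as the hard part, and the one-sentence geometric justification you offer for it is wrong, so the induction you propose to build on it would not go through. You claim that ``$H\setminus S$ is connected and reaches to the discarded side of each such separating facet.'' It does not: $H$ is the intersection of \emph{all} inserted half-spaces, so $H\subseteq S'$ for every inserted $S'$, and $H\setminus S$ lies entirely on the retained side of every facet $F$; it never touches the discarded side, so connectivity of $H\setminus S$ alone cannot force $F$ to carry a point outside $S$. The region that does the work is larger: let $P$ be the intermediate polytope at the moment $S'=created(v)$ was inserted. The walk guarantees that $vertex(u)$ is a vertex of $P$ with $vertex(u)\notin S$ and $vertex(u)\notin S'$. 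If some $p\in H\setminus S$ exists, then $p\in P\setminus S$ and $p\in S'$, so the segment from $vertex(u)$ to $p$ stays inside the convex set $P\setminus S$ (intersection of $P$ with the open complement of $S$) and crosses the hyperplane $\partial S'$ inside $P$, i.e., it meets $F=P\cap\partial S'$ at a point outside $S$. Since $F$ is a convex polygon, it must then have a \emph{vertex} outside $S$, so step~4 succeeds and the walk advances. This short convexity argument is exactly what the paper compresses into its ``outer surface of $H_u$ to outer surface of $H_v$'' sentence; with it in place your proof closes, and the appeal to an unspecified induction over Mulmuley's invariants becomes unnecessary.
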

\begin{proof}
If $x\not=\phi$ is returned, then the path followed by the
algorithm ensures that $x\in H\setminus S$, thus $H\not\subset S$.

Suppose $\phi$ is returned. Let $u$ be the last vertex node visited; and $H_u$
be the corresponding polytope when $u$ was created. Let $v = next(u)$; and
$H_v$ be its corresponding polytope. We can imagine the traversal from $u$ to
$v$ as a traversal from the outer surface of $H_u$ to the outer surface of
$H_v$. Since no vertex on the face creating $v$ is outside $S$, no point in
$H_v$ is outside $S$ either. Since $H_v \supset H$, $H\subset S$.

Thus, $\phi$ is returned if an only if $H \subset S$.
\end{proof}
We use the online algorithm for polytope construction in \cite{Ketan}. This
algorithm is analyzed probabilistically, assuming that the input sequence of
half-spaces was drawn from a uniform random distribution. So far in this
presentation we have assumed that additions are made in increasing order of
priority. Thus, this model and assumption are the equivalent to the online
algorithm from \cite{Ketan}.

The data structure from \cite{Ketan} - {\em history} - is built on the same
principles as $\mathbb{S}$. Thus, the analysis of {\em TraverseShuffle} is the
same as that of search in {\em history}. {\em TraverseShuffle} thus performs in
expected $O(\log{n})$ time.

The method of half-space intersection is also carried over from \cite{Ketan}. 
We have only used a random sequence as a conceptual tool to aid the analysis so
far; whereas the algorithm input may be an arbitrary sequence. Hence, the cost
of an addition is deterministic. If $k$ faces intersect with $S$, $O(k)$ is the 
size of the facial lattice of $S \cap H$. Thus, the addition of $S$ costs 
$O(k)$.

We now show the addition of a half-space $S$ with a random priority $p$. Unlike
the assumption so far, this priority is not necessarily higher than all existing
priorities.

Our approach is similar to Section \ref{section:1-Dvoronoi}. We first add the
half-space to the end of the sequence, and then move it up the sequence using
{\em rotations} in $\mathbb{S}$. We will see that these rotations are analogous
to the rotations described for the {\em treap} in Section
\ref{section:1-Dvoronoi}.

Note that the output polytope is independent of the sequence of additions; our
purpose of rotations is only to manipulate the data structure. This
manipulation is required to maintain the $O(\log{n})$ expected search guarantee
shown earlier. The purpose of the rotation is to maintain the data structure
such that it appears as if all half-spaces were added in increasing priority
order.
\begin{figure}
\centering
\includegraphics{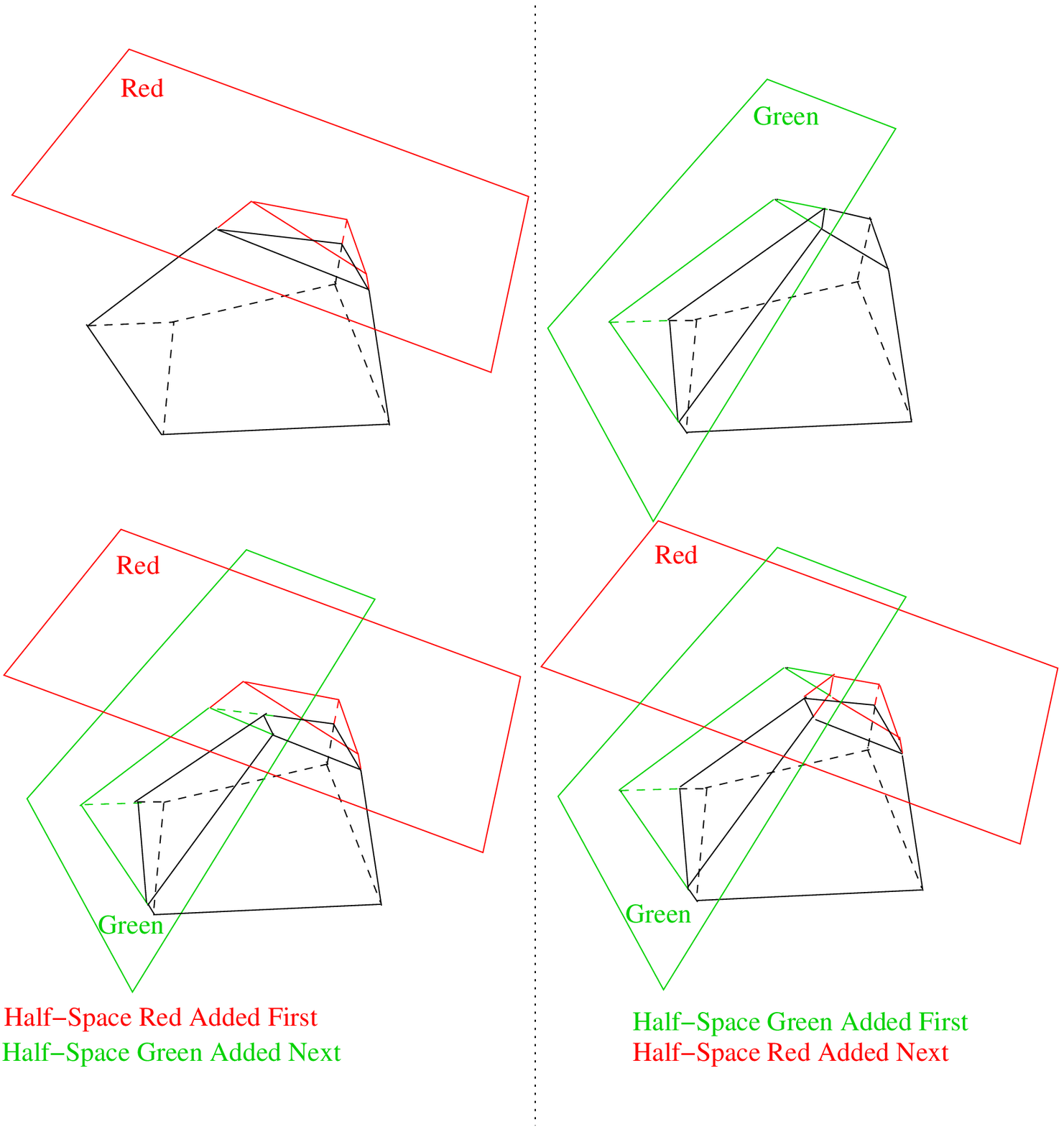}
\caption{Addition of Half-Space to Polytope\label{figure:polytope}}
\end{figure}
\begin{figure}
\centering
\includegraphics{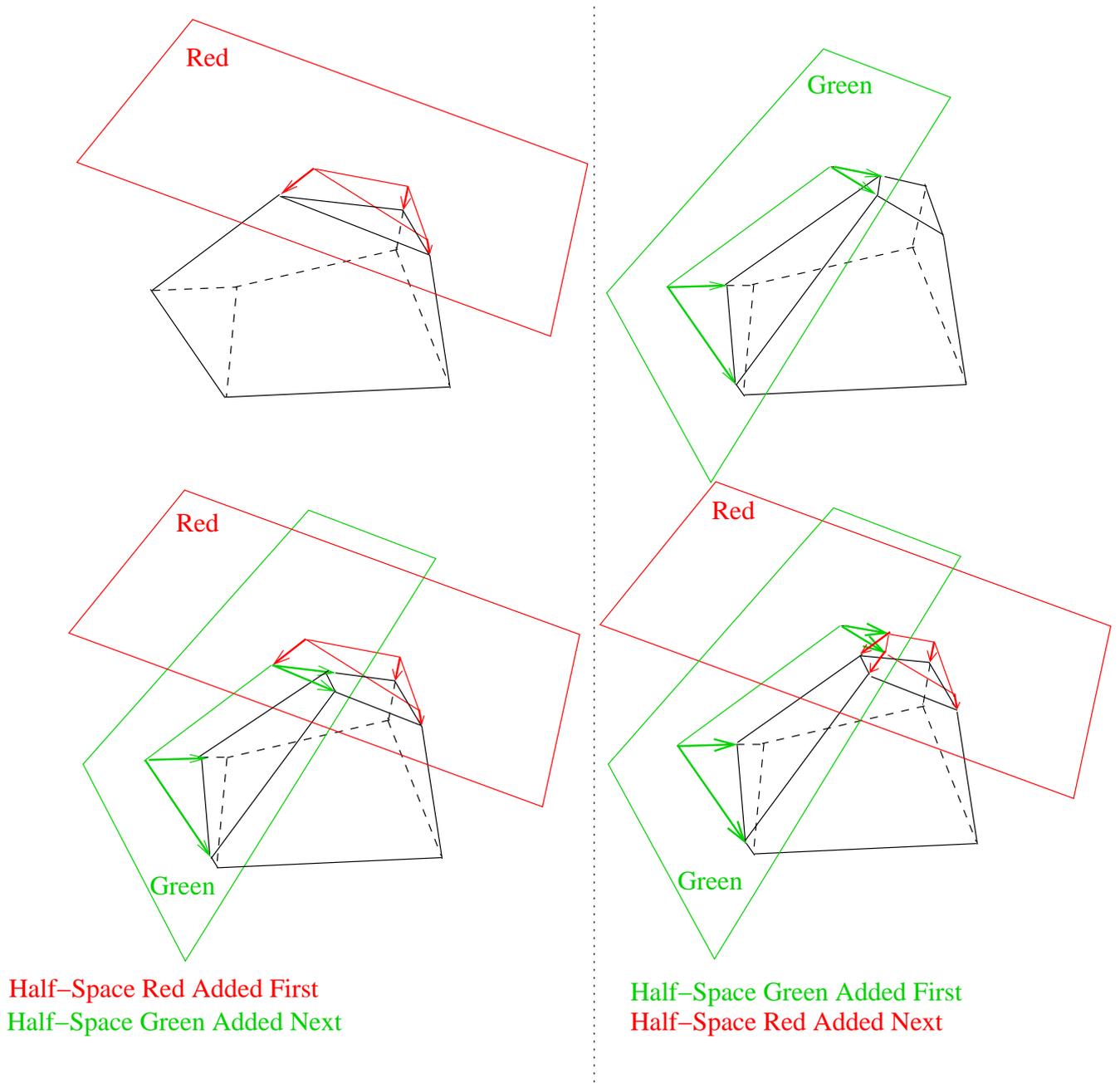}
\caption{Data Structure changes for Addition\label{figure:search-structure}}
\end{figure}
\begin{algorithm}[Rotate-Addition]
\label{algo:rotate-add}
\end{algorithm}
\begin{enumerate}
\item $L \leftarrow \phi$ // $L$, set of half-spaces
\item for each $v \in createdFace(S)$, do
  \begin{enumerate}
  \item $S' \leftarrow created(prev(v))$
  \item add $S'$ to $L$
  \end{enumerate}
\item\label{loop-heap-trav} for the highest priority half-space $S'\in L$, do
  \begin{enumerate}
  \item $S' \leftarrow$ highest priority half-space in $L$
  \item traverse $createdFace(S')$ and $deletedEdges(S')$ to build polytope
  $\mathbb{A}=S'\setminus H$
  \item traverse $\mathbb{A}$ to build polytope $\mathbb{A'} \leftarrow
  \mathbb{A}\setminus S$
  \item use $\mathbb{A'}$ to update relations for $S$ and $S'$ // switch
  order of intersection
  \item $L \leftarrow L\setminus \{S'\}$
  \end{enumerate}
\end{enumerate}
Figure \ref{figure:polytope} shows the addition of two distinct half-spaces to 
a 3-D polytope. Since set intersection is commutative, the order of addition 
does not change the resultant polytope. However, the facial lattice changes are 
distinct. Figure \ref{figure:search-structure} shows the two different search 
structures arising from the two sequences.

This algorithm is adapted from Mulmuley \cite[Chapter 4]{Ketan}. Our algorithm
is meant for use only with non-redundant half-spaces. It is the 2-D extension 
of treap rotation in Section \ref{section:1-Dvoronoi}. The operations Algorithm 
\ref{algo:rotate-add} performs on the facial lattice for addition of half-space 
{\it\textcolor{green}{Green}}, which has a higher-priority than half-space 
{\it\textcolor{red}{Red}}, is depicted in Figure \ref{figure:search-structure}. 
The relevant edges in the facial lattice are marked with arrow-tips. The 
sequence on the left shows the facial lattice before Step \ref{loop-heap-trav}, 
and the sequence on the right shows the changes to the facial lattice after 
Step \ref{loop-heap-trav}. The cost of each iteration of Step 
\ref{loop-heap-trav} is $O(|A'|)$. The sum over all iterations is $O(k)$, where 
$k$ is the structure change due to addition of $S$.
\subsubsection{Deleting a Half-Space}
Deletion of a half-space $S$ from $H$ first requires a search for the
corresponding face on $H$, and then updates for the part of the facial lattice
that changes. The delete operation is the reverse of addition.

We first locate the corresponding face using $\mathbb{S}$. The traversal
technique is the same as {\em TraverseShuffle}. Then we increase the priority
of $S$ beyond that of its neighbors. This makes all vertices of
$createdFace(S)$ leaves.

The update to $H$ is then removal of the face for $S$, and extension of
neighboring faces using $deletedEdges(S)$ and $prev(v)$ for each vertex $v$ on
$createdFace(S)$. Note that during addition, this update is performed in
reverse - the new half-space is added as the highest priority order and moved
up the priority order to fit its final position. Rotations are required here
too to increase the priority for $S$. These too, are the reverse of rotations
for addition.

\subsubsection{Choice of Method}
The choice of this particular method of random priorities with rotation is
primarily motivated by the ready availability of the power frame during
addition and deletion. We show this property in the next section and use
this for disk intersections.

\subsection{Open Problems}
Analyses for addition and deletion are given in Mulmuley \cite{Ketan} that
shows $O(\log{n})$ expected time. However, the analysis requires assumption of 
a random sequence. The cost of rotations in a random sequence is $O(1)$. 
Additionally, this result is derived as a corollary of results on configuration 
spaces. We conjecture that this analysis can be simplified to $O(\log{n})$ 
expected search and rotation time under our assumption of non-redundant 
half-spaces.

We assumed a bounding cube at the beginning of this section. Removing this
assumption is possible by assigning non-numerical symbolic values to vertices 
on $\mathbb{C}$ and its split edges.
\section{Dynamic Disk Intersections}
\label{section:dynamicIntersection}
Updating a coverage map for adding a transmitter requires two sets of
operations. The first set of operations compute the coverage region for the new
transmitter. The second set of operations compute changes in the coverage
regions of the new transmitter's neighbors in the power diagram. We show
dynamic methods for both sets of operations. Similar operations are required to
delete a transmitter.

Algorithm \ref{algo:coverage-map} uses the power frame for a transmitter to
compute its coverage region. The power frame helps us limit the number of arc
intersections for each interfering neighbor. Two arc intersections are
required per neighbor. For the same reason, we use the power frame for dynamic
disk intersection as well.
\subsection{Computing the power frame}
Our method computes the power frame of an added or deleted transmitter in
linear time. The addition of a half-space $S$ begins at the end of the priority
order. Later rotations fix the data structure $\mathbb{S}$ to reflect the
correct priority of $S$. We use this initial addition, before rotation, to get
the power frame. Similarly, the last stage of deletion is used to get the power
frame of the transmitter to delete.

In Algorithm \ref{algo:coverage-map}, we compute the power diagram of
neighboring disks to obtain a transmitter's power frame. This computation takes 
$O(k\log{k})$ time for $k$ neighbors. In $O(k)$ time, however, we can traverse 
$createdFace(S)$, $prev(v)$ for each vertex $v$ on $createdFace(S)$, and 
$killedEdges(S)$. This traversal gives the power frame, just like the traversal 
to obtain $\mathbb{A}$ in Algorithm \ref{algo:rotate-add}.

\subsection{Using the Power Frame}
The coverage region for a transmitter is represented by arc polygons (see
Figures \ref{figure:MediumCircles}, \ref{figure:LargeCircles}, and
\ref{figure:SmallCircles}). The coverage region for the added transmitter is
computed exactly as in Step \ref{loop:parts} of Algorithm
\ref{algo:coverage-map}.  An interference arc corresponding to a neighbor is
removed from each power partition, and the transmission arc is added.

The coverage region for neighboring transmitters must be updated to remove
arcs corresponding to the new transmitter. Consider the addition of a new
transmitter $\tilde{p}$, and a power neighbor $\tilde{q}$. We only have the
power frame for $\tilde{p}$, and not $\tilde{q}$. Thus, the method in Algorithm
\ref{algo:coverage-map} cannot be used.

We use the power frame of $\tilde{p}$ for this update. We also use additional
information in the power diagram. Each edge of the power diagram corresponds to
two neighbors. With each edge we maintain one pointer to the interference arcs
(if any) for the two neighbors, and one pointer to their transmission arcs. 
Thus, given the edges of power frame for the new transmitter, we can lookup the 
interference arcs of old neighbors that need to be removed. Only the arcs that 
must be removed are then traversed in sequence in the coverage region. This 
method takes $O(k)$ time for updating $k$ neighboring coverage regions.

Deleting a transmitter requires only the update of its neighboring coverage 
regions, since removal of the deleted transmitter is trivial. Here too, we use 
the pointers to the transmission and interference disks, as outlined in the 
addition procedure. Each neighbor's interference arc is added, and the deleted 
transmitter's transmission and interference arcs are removed in sequence. This 
operation is $O(k)$ in general, the only special condition being when 
degenerate arcs (i.e. with one point) result. This condition can be avoided by 
the ``standard assumption'' that after any (addition or deletion) operation the 
underlying polytope does not contain a subset of four half-spaces intersecting 
in one vertex.
\section{Hidden Disks and Redundant Half-Spaces}
\label{section:hiddenDisks}
We call disks with empty power regions {\em hidden disks}. As remarked in
Subsection \ref{subsection:mapping}, these disks are dual to redundant
half-spaces in the 3-D upper convex polytope. Claim \itemref{1} implies 
that any coverage point must have a corresponding power region. Thus, disks 
with empty power regions contribute only to interference in the coverage map, 
and do not contribute to coverage.

3-D convex polytope construction is dual to 3-D convex hull construction. In 
this duality, points inside the convex hull map to redundant half-spaces. The 
mapping we use implies that hidden disks (in 2-D) map to points inside the 3-D 
convex hull. Thus, in a {\em random} deployment of disks, it is highly probable 
that a disk will be hidden (i.e. that the mapped 3-D point will lie inside the 
3-D convex hull). However, in an {\em arbitrary} or planned deployment by 
software or a network designer, we don't expect many redundant disks, since 
these do not contribute positively to coverage.

The efficiency of maintaining a dynamic 3-D convex hull depends on structure, as
shown by the mapping from our lower bound in Section \ref{section:2-D}. However,
even the problem of maintaining a location structure for $O(\log{n})$ query to
decide whether a given point lies in the convex hull is open (see Demaine et
al.~\cite{3d-hull}). The best known algorithm, by Chan \cite{Cha06}, is
polylogarithmic.

Our method gives $O(\log{n})$ performance simply because the deletion of any
half-space does not 'expose' a redundant half-space. Thus, we do not require to
look through redundant half-spaces that have become non-redundant following a
deletion. A data structure to maintain this lookup information is apparently
hard to develop, hence the open problem.

We can, however, maintain a relation between hidden disks and power
regions. Thus, during removal of a non-redundant disk, we lookup this relation
to check whether the affected power regions contained a hidden disk that has now
become `visible'. We claim that this method requires $kO(\log{n})$ time for
update to a power region affecting $k$ redundant disks.

The union of all disks remains the same, regardless of the presence of hidden
disks. We observe that dynamic maintenance of coverage without considering
interference (sensor coverage, as in \cite{Ye}) is still possible (by suitably
modifying our methods) in $O(\log{n})$ time per update.

\chapter{Coverage in the SINR Model}
\label{chapter:SINR}
In Chapters \ref{chapter:Fixed} and \ref{chapter:dynamicCoverage}, we have 
shown algorithms in the protocol model that report and maintain the boundary of 
the wireless coverage map. This chapter shows that the partitioning model 
extends, by appropriate extensions to the distance measure, to the SINR model. 
We also show that the boundary of the coverage cannot neither be computed 
efficiently nor represented efficiently in a data structure. In this sense, 
this chapter is exploratory and lays a foundation for Chapter 
\ref{chapter:Optimization}.

The SINR (Signal-to-Interference-plus-Noise-Ratio) model includes the effects 
of path-loss and aggregated interference from all transmitters into the 
coverage decision. Formally, the SINR model is defined as follows:\\
$T$: a set of transmitters\\
$P_{\tilde{t}}>0$: Transmit power of transmitter $\tilde{t}\in T$\\
$N_0>0$: Ambient noise power\\
$d(x, \tilde{t})$: Distance of point $x$ from transmitter $\tilde{t}$\\
$\alpha \ge 2$: Path-loss exponent\\
$SINR(x, \tilde{t}) = \frac{\displaystyle\frac{P_{\tilde{t}}}
{d(x,\tilde{t})^\alpha}}
{\displaystyle\sum_{\tilde{u}\in T\setminus\tilde{t}}
\frac{P_{\tilde{u}}}{d(x,\tilde{u})^\alpha} + N_0}$\\
$\beta > 0$: Receive sensitivity\\
Point $x$ is said to be in coverage if $\exists\tilde{t}\in T$ such that
$SINR(x, \tilde{t}) > \beta$.

In this work, we study coverage in SINR for the following model parameters:
\begin{itemize}
\item Fixed 2-D transmitter locations
\item Fixed values for $\alpha$, $\beta$, $N_0$, and $P_{\tilde{t}}\ \forall
\tilde{t}\in T$
\end{itemize}

As in the case of coverage in the protocol model, we begin our analysis for the
special case of all transmit powers being equal. The method developed for equal
transmit powers is then generalized to unequal transmit powers. We also aim to
generalize the methods to include statistical variations in the received signal
energy - such as fading and shadowing.

A capture transmitter for any point $x$ in the plane is a transmitter
$\tilde{t}$ for which $SINR(x,\tilde{t})\ge SINR(x,\tilde{u})\ \forall
\tilde{u}\in T\setminus\tilde{t}$. The subset of points covered by a
transmitter $\tilde{t}$ is the set of points captured by $\tilde{t}$ at which
$SINR(x, \tilde{t}) > \beta$.

Since we are interested in studying the coverage region for each transmitter, we
first analyze the partition of the plane into the capture regions of the
transmitters. The coverage regions are subsets of the capture regions.

\section{Equal Transmit Powers: Voronoi Partitions and Capture Transmitters}
The following lemma shows that the partition of the plane into capture regions
is the Voronoi diagram of the transmitter locations. First we note that:
\begin{equation}
\label{eq:1}
\frac{A}{M - A} \ge \frac{B}{M - B} \Leftrightarrow A \ge B,\ \forall A,B>0,\ M
> A + B
\end{equation}
\begin{lemma}
\label{lemma:sinr.voronoi}
If all transmit powers are equal, then a point $x$ is in the capture region of
$\tilde{t}$ if and only if $x$ is in the Voronoi partition corresponding to
$\tilde{t}$.
\end{lemma}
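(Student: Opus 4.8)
The plan is to rewrite each $SINR(x,\tilde t)$ into exactly the form appearing in Equation~\eqref{eq:1}, so that the ordering of SINR values at $x$ collapses to the ordering of received signal powers, which in turn collapses to the ordering of distances. This immediately matches the capture condition to the Voronoi (nearest-point) condition.

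First I would fix the point $x$ and abbreviate the received signal power from each transmitter $\tilde u$ as $s_{\tilde u} = P/d(x,\tilde u)^\alpha$, where $P$ denotes the common transmit power. The crucial observation is that the quantity $M = N_0 + \sum_{\tilde u \in T} s_{\tilde u}$ --- the total received power from all transmitters plus the ambient noise --- does not depend on which transmitter is singled out. Consequently the denominator of $SINR(x,\tilde t)$ is precisely $M - s_{\tilde t}$, so that $SINR(x,\tilde t) = s_{\tilde t}/(M - s_{\tilde t})$, which is exactly the template of Equation~\eqref{eq:1}.

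Next I would apply Equation~\eqref{eq:1} with $A = s_{\tilde t}$ and $B = s_{\tilde u}$, after checking its hypotheses: $s_{\tilde t}, s_{\tilde u} > 0$ because $P>0$ and distances are finite and positive, and $M > s_{\tilde t} + s_{\tilde u}$ because $M$ contains these two terms together with the strictly positive noise $N_0$ (and possibly further positive terms). Equation~\eqref{eq:1} then yields $SINR(x,\tilde t) \ge SINR(x,\tilde u) \Leftrightarrow s_{\tilde t} \ge s_{\tilde u}$. Finally, since $P>0$ and $r \mapsto r^\alpha$ is strictly increasing for $r>0$, we have $s_{\tilde t} \ge s_{\tilde u} \Leftrightarrow d(x,\tilde t) \le d(x,\tilde u)$. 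Chaining these equivalences over all $\tilde u \in T \setminus \tilde t$ shows that $\tilde t$ is a capture transmitter for $x$ if and only if $\tilde t$ is a nearest transmitter to $x$, i.e.\ $x$ lies in the Voronoi partition of $\tilde t$.

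There is no serious obstacle here; the real content is the bookkeeping that recasts the SINR as the $A/(M-A)$ expression. The single point requiring care --- and the reason the lemma holds at all --- is the constancy of $M$ across transmitters, since it is exactly this shared $M$ that lets Equation~\eqref{eq:1} apply verbatim. I would also note that every equivalence above is stated with $\ge$ and $\le$ throughout, so the boundary cases line up and the correspondence between capture regions and Voronoi regions is exact rather than holding only up to a measure-zero set.
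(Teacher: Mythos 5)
Your proposal is correct and follows essentially the same route as the paper's own proof: rewrite the denominator so that $SINR(x,\tilde t) = A/(M-A)$ with the transmitter-independent total $M$, invoke Equation~\eqref{eq:1} to reduce the SINR comparison to a received-power comparison, and then reduce that to a distance comparison. Your explicit verification of the hypotheses of Equation~\eqref{eq:1} (positivity and $M > A + B$ via the noise term) is a small refinement the paper leaves implicit, but the argument is the same.
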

\begin{proof}
\[SINR(x, \tilde{t}) = \frac{\frac{P}{d(x,\tilde{t})^\alpha}}{\displaystyle
\sum_{\tilde{u}\in T\setminus\tilde{t}}\frac{P}{d(x,\tilde{u})^\alpha} + N_0}\]
\[\Leftrightarrow SINR(x, \tilde{t}) = \frac{\frac{P}{d(x,\tilde{t})^\alpha}}
    {\displaystyle\sum_{\tilde{u}\in T}\frac{P}{d(x,\tilde{u})^\alpha} + N_0 - 
    \frac{P}{d(x,\tilde{t})^\alpha}}\]
Let \[A = \frac{P}{d(x,\tilde{t})^\alpha},\ \textrm{and}\ M = \sum_{\tilde{u}\in
T} \frac{P}{d(x,\tilde{u})^\alpha} + N_0\]
Let $s$ be another transmitter in $T$, and $\displaystyle B =
     \frac{P}{d(x,\tilde{s})^\alpha}$
\[SINR(x, \tilde{t}) \ge SINR(x, \tilde{s}) \Leftrightarrow
     \frac{A}{M - A} \ge \frac{B}{M - B} \Leftrightarrow A \ge B\]
\[\Leftrightarrow\frac{P}{d(x,\tilde{t})^\alpha}\ge\frac{P}{d(x,\tilde{s})^\alpha}\]
\[\Leftrightarrow d(x,\tilde{t}) \le d(x,\tilde{s})\]

Thus, if $\forall \tilde{s},\ SINR(x,\tilde{t})\ge SINR(x,\tilde{s})$, then
$\forall\tilde{s},\ d(x,\tilde{t}) \le d(x,\tilde{s}) \Leftrightarrow x\in$
Voronoi partition of $\tilde{t}$.
\end{proof}
This lemma shows that the capture region for each transmitter lies in its
Voronoi partition. This is similar to Claim \itemref{1} - for the protocol model
for equal transmitter powers - which states that the coverage region of a
transmitter lies inside its Voronoi partition.

In order to find the coverage region of a transmitter, we need to find the
transmitter's Voronoi partition, and then the set of points in this partition at
which the SINR is at least $\beta$. Some examples of coverage region contours 
for varying $\beta$ are shown in Figures \ref{fig:SINRCoverage1}, 
\ref{fig:SINRCoverage2}, and \ref{fig:SINRCoverage3}.

\begin{figure}
\centering\includegraphics{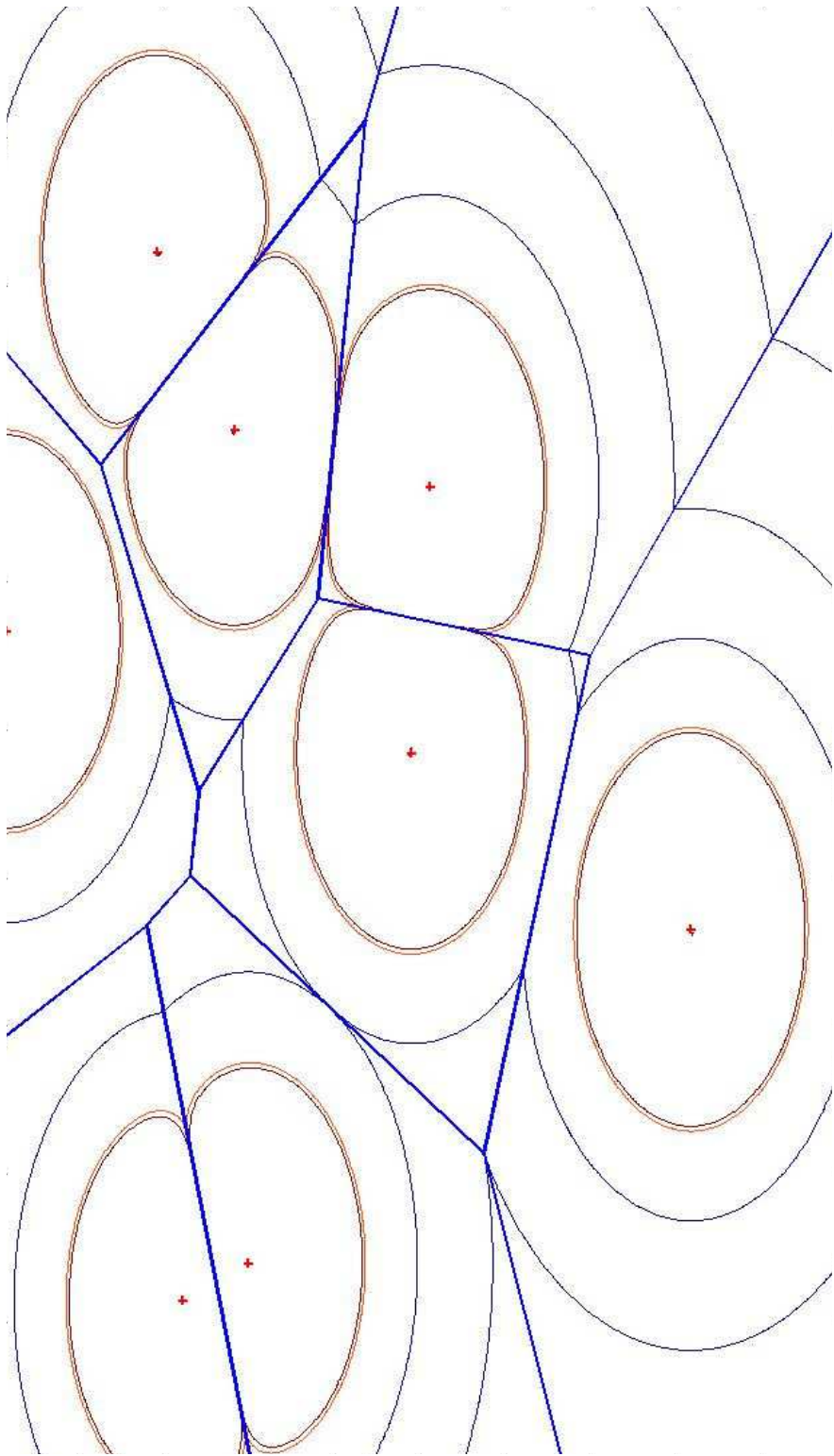}
\caption{\label{fig:SINRCoverage1}SINR Coverage: 8 Transmitters}
\end{figure}
\begin{figure}
\centering\includegraphics{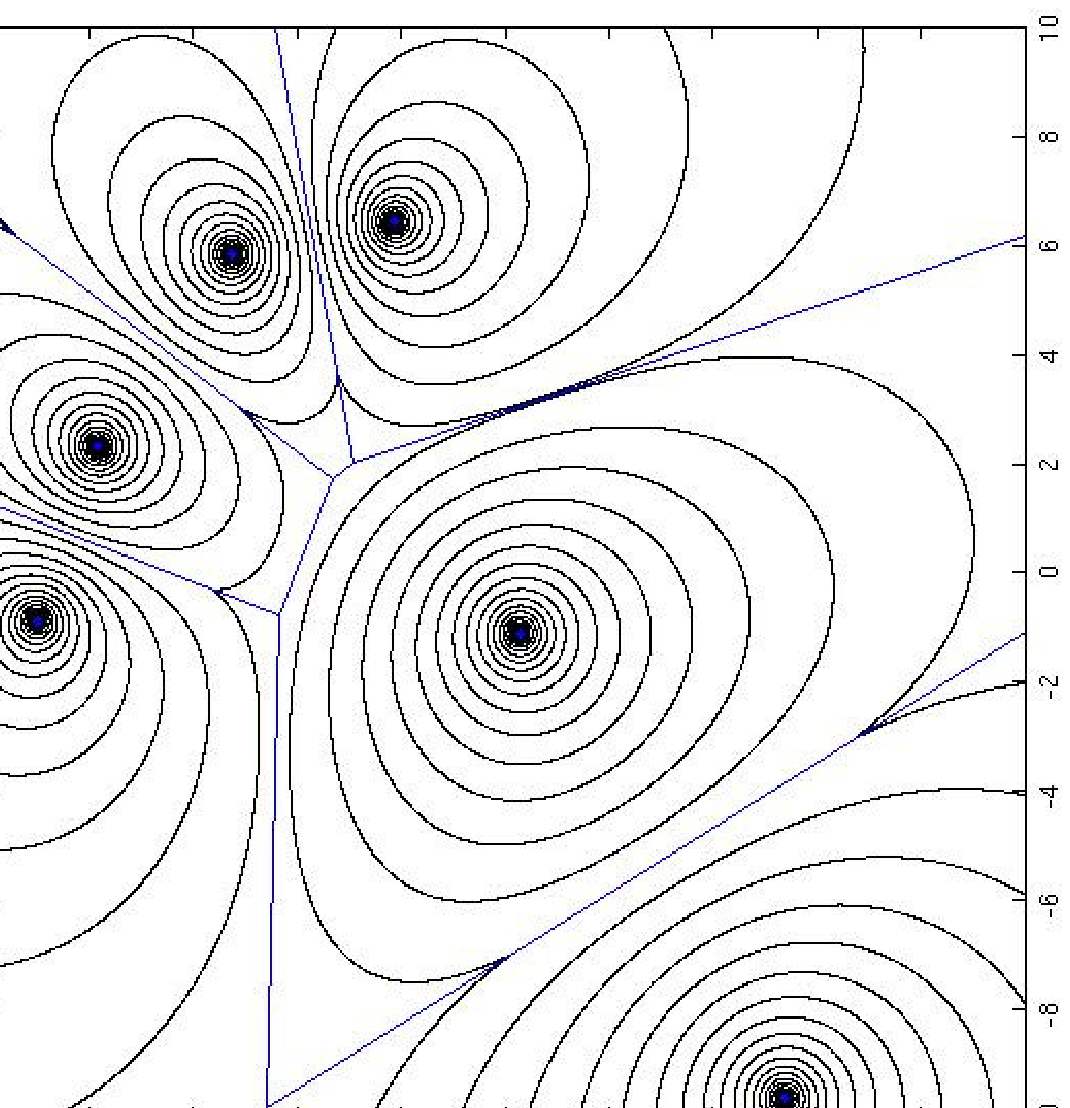}
\caption{\label{fig:SINRCoverage2}SINR Coverage: 6 Transmitters}
\end{figure}
\begin{figure}
\centering\includegraphics{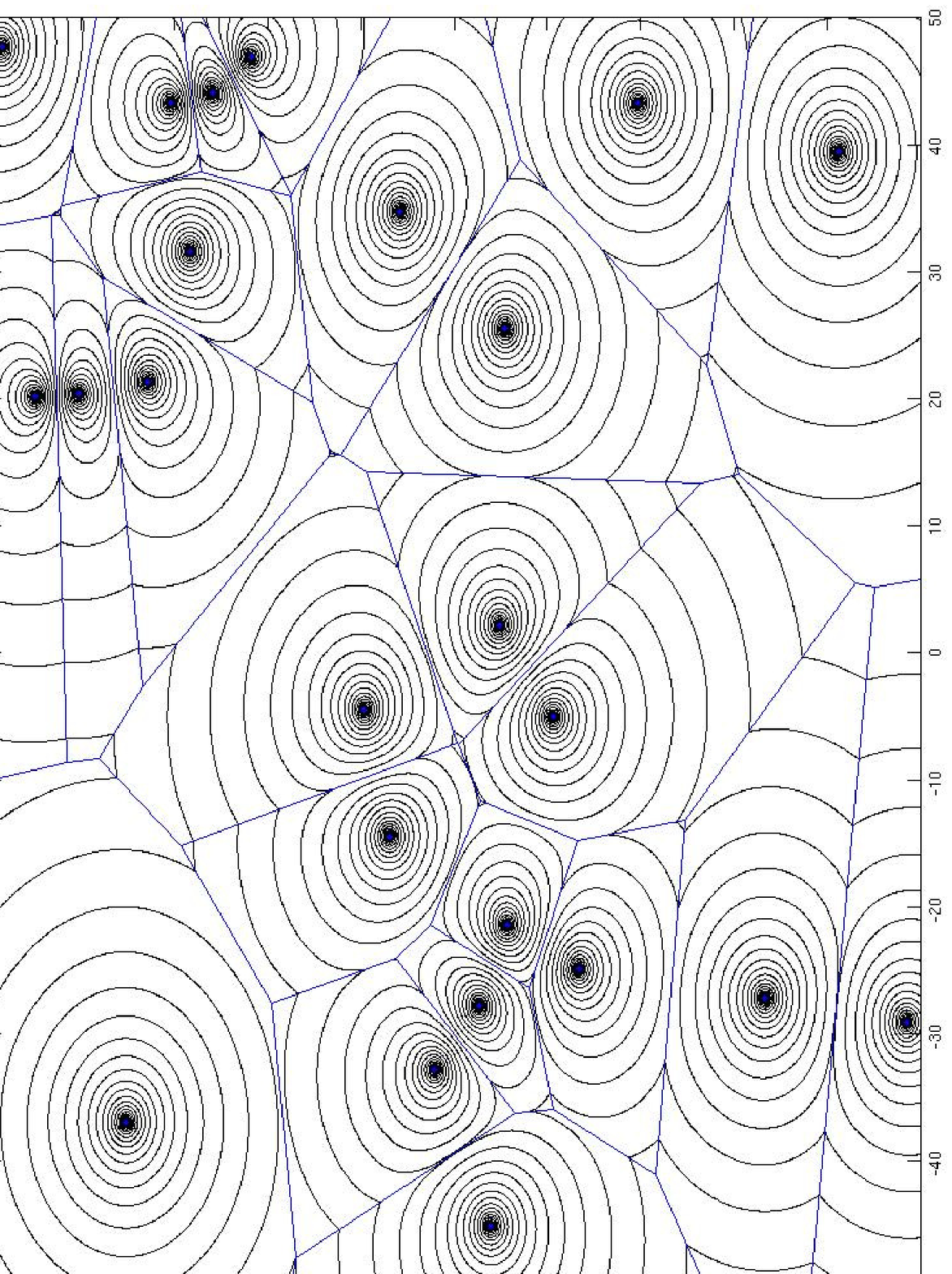}
\caption{\label{fig:SINRCoverage3}SINR Coverage: 25 Transmitters}
\end{figure}
 
We cannot represent the coverage region as a set of circular
arcs, like the conic polygons in Chapter \ref{chapter:Fixed}. Instead, we aim to {\it
approximate} the boundary by a finite sequence of low-degree polynomial
arcs. The choice of approximate boundary is such that the error in approximating
the coverage region is bounded, and the number of evaluations of the SINR
function required to build the representation is minimized.

Alt et al.~\cite{Alt01} demonstrate an efficient strategy for approximate 
representation for convex areas. This may be extended to SINR coverage regions, 
if they can be shown to be convex. As we see later in Section 
\ref{section:SINR_unequal}, though, coverage regions are not convex in general. 
Some recent research has focussed on restrictions on the SINR model parameters 
that yield convex regions.
\section{Concurrent Research in SINR Coverage}
Research on similar lines has recently been reported by Avin et
al.~\cite{Chen09}: an approximation algorithm to decide whether a point $x$ is in a
SINR coverage region. Our work was independently conceived. The algorithm in 
\cite{Chen09} is based on the following ideas:
\begin{enumerate}
\item Lemma \ref{lemma:sinr.voronoi},
\item The SINR coverage region $SINR(x, \tilde{t}) > \beta$ is convex for
$\alpha = 2$ and $\beta > 1$,
\item An error bound $\epsilon$, and approximate representation of the region
$SINR(x, \tilde{t}) > \beta$, and 
\item An algorithm that uses this representation to decide coverage at a point
within the error bound $\epsilon$.
\end{enumerate}
%
Our work has more general aims:
\begin{enumerate}
\item We aim for an approximate representation for $\alpha \ge 2$, since
$2\le\alpha\le 6$ for practical wireless environments
(see \cite{tse}). This representation should also be valid for any $\beta
>0$. The Voronoi partition contains points at which SINR $<1$, for which
decoding is possible in practice.
\item We conjecture that the region $\{x | SINR(x, \tilde{t}) > \beta\}\cap
\triangle(\tilde{t})$ is convex. (Following the terminology from Table
\ref{table:VoronoiPower}, $\triangle(\tilde{t})$ is the Voronoi partition
corresponding to transmitter $\tilde{t}$) Thus, convexity is independent of
$\beta$, but instead applies to the set of all points inside the
capture transmitter's Voronoi partition having SINR $>\beta$.
\end{enumerate}
\section{Convexity of SINR Coverage Regions in 2-D}
\label{section:convex}
We briefly review the definition of convexity, and outline our approach to
proving convexity of the SINR coverage region.

\begin{definition}
A set of points $S$ is a {\bf convex set} if $\forall a,b\in S$, the set of
all points $ka+(1-k)b, 0<k<1$ is in $S$. In other words, if the end-points of
any line segment $l$ lie in a convex set, then so does $l$.
\end{definition}
\begin{figure}
\centering\includegraphics[angle=270]{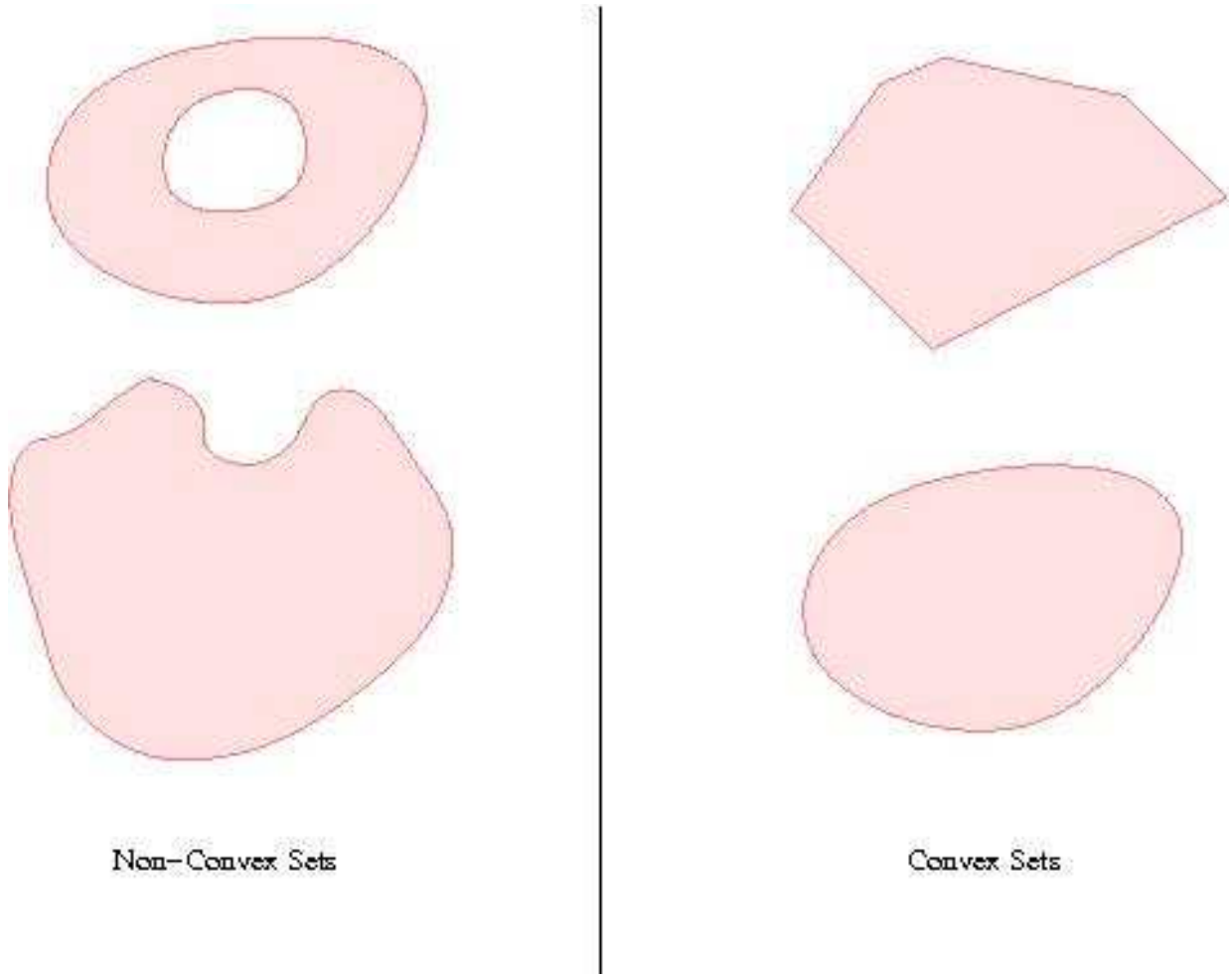}
\caption[Examples of Convex Sets]{Examples of Convex Sets in 2-D\label{figure:convex}}
\end{figure}
Figure \ref{figure:convex} shows 2-D examples of convex and non-convex sets.

Applying this definition to the SINR coverage region:\\
Let $l_{p,q}$ denote the line segment between points $p$ and $q$. The SINR
coverage region is convex if for all points $p$ and $q$ such that
$SINR(p,\tilde{t})>\beta$ and $SINR(q,\tilde{t})>\beta$,
the following is true: $SINR(x,\tilde{t})>\beta, \forall x\in l_{p,q}$.

Figures \ref{fig:SINRCoverage1}, \ref{fig:SINRCoverage2}, and
\ref{fig:SINRCoverage3} show convex SINR coverage regions for varying values of
$\beta$. Many other experiments we conducted also suggest convexity.

A direct analytic proof of convexity would require calculating partial
derivatives of the SINR formula. The intermediate expressions in such a proof
would be hard to analyze due to the presence of the power $\alpha$, $2n$
constants corresponding to the transmitter locations in 2-D, and $N_0$.

Avin et al.~\cite{Chen09} prove convexity for $\alpha=2$. This allows the SINR
formula to be expressed as a polynomial of degree $2$ with $2n+1$
coefficients. Proving convexity is still a significant challenge, even with this
simplification, as noted by the authors in the paper.
\section{Convexity for $\alpha\ge 2$ and $\beta>0$}
We now describe our approach to proving convexity of the SINR coverage
region. We aim for a proof for $\alpha\ge 2$ and $\beta>0$. We have proved a
restricted form of convexity called {\it star-convexity} for these
parameters. In contrast, Avin et al.~prove star-convexity also for
$\alpha=2$. We conjecture that ideas in our proof for star-convexity can be
generalized to prove convexity.

\begin{definition}[Star-Convex Set]
A set $S$ is {\bf star-convex} if $\exists r\in S$ such that for every line
segment $l$ with $r$ as one end-point, and any other point in $S$ as the other
end-point, all points in $l$ belong to $S$.
\end{definition}
\begin{figure}
\centering
\includegraphics[angle=270]{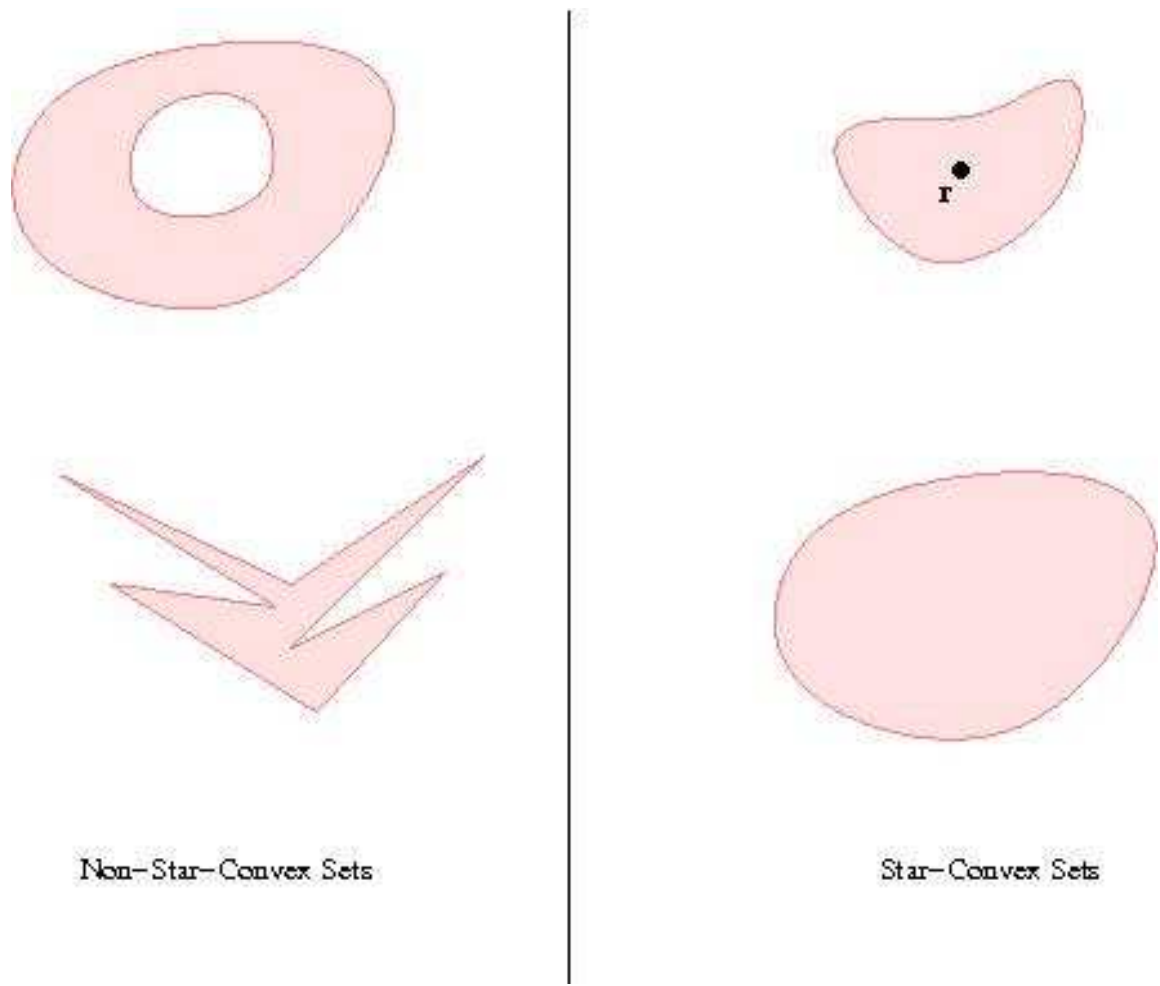}
\caption[Examples of Star-Convexity]{Examples of Star-Convexity in 2-D\label{figure:star.convex}}
\end{figure}
2-D examples star-convex and non-star-convex sets are shown in Figure
\ref{figure:star.convex}. All convex sets, by definition, are star-convex.

Some basic ideas in our approach follow:
\begin{enumerate}
\item Without loss of generality, we assume that $P_{\tilde{t}} = 1,\ \forall
\tilde{t} \in T$. The SINR function can be re-written as:
\[
SINR(p,\tilde{t})=\frac{1}{\displaystyle\sum_{\tilde{u}\in T\setminus\tilde{t}}
\frac{d(p,\tilde{t})^\alpha}{d(p,\tilde{u})^\alpha}+ N_0 d(p,\tilde{t})^\alpha}
\]
\item
Since $SINR(p,\tilde{t})\ge\beta \Leftrightarrow \displaystyle
\frac{1}{SINR(p,\tilde{t})} \le \frac{1}{\beta},\ \forall\beta>0$, testing SINR
coverage region convexity is equivalent to testing convexity for the region
\begin{equation}
\label{eqn:sum}
\sum_{\tilde{u}\in T\setminus\tilde{t}} \frac{d(p,\tilde{t})^\alpha}
{d(p,\tilde{u})^\alpha} + N_0 d(p,\tilde{t})^\alpha \le \frac{1}{\beta}
\end{equation}
\item

We view the sum of terms in Inequality \ref{eqn:sum} as a function
$f:\triangle(\tilde{t})\rightarrow R^+$. $f$ is a function of two variables. In
order to prove convexity of the region $f\le\frac{1}{\beta}$, we must analyze it
on domains that are line segments $l_{p,q}$, for arbitrary points $p,\ q$ in the
region $\triangle(\tilde{t})$.

We denote the restriction of function $f$ to the domain $l_{p,q}$ by
$f_{p,q}$. Since $l_{p,q}$ is a line segment, $f_{p,q}$ is a function of only
one variable.
\item

We use convexity terminology from Boyd et al.~\cite{boyd}. 
$f_{p,q}:l_{p,q}\rightarrow R$ is {\it monotonic} if:
$f_{p,q}$ is differentiable and 
$\forall x\in l_{p,q},\ \frac{d}{dx}f_{p,q}$ has the same sign.

$f_{p,q}$ is {\it unimodal} if $\exists x^*\in l_{p,q}$ such that $f_{p,x^*}$
and $f_{x^*,q}$ are monotonic and $\frac{d}{dx}f_{p,q}(x) = 0 \Rightarrow
\frac{d^2}{dx^2}f_{p,q}(x)\ge 0$.

$f_{p,q}$ is {\it quasi-convex} if it is monotonic or unimodal. If $f_{p,q}$ is
quasi-convex, then $\forall l_{p^*,q^*}\subset l_{p,q},\ f_{p^*,q^*}$ is
quasi-convex.
\begin{lemma}
\label{lemma:monotone}
In order to prove convexity of the region $f\le\frac{1}{\beta}$, it suffices to
prove that: for each pair of points $\{p,q\}$ on the boundary of the region, $f$
is quasi-convex on the line segment $l_{p,q}$.
\end{lemma}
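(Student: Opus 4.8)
The plan is to exploit the single defining feature of quasi-convexity as given here: a function that is monotonic or unimodal on a segment attains its maximum at an endpoint of that segment. First I would establish this property precisely. If $f_{p,q}$ is monotonic, its maximum over $l_{p,q}$ is attained at $p$ or $q$ by definition. If $f_{p,q}$ is unimodal, then there is a splitting point $x^*$ with $f_{p,x^*}$ and $f_{x^*,q}$ monotonic, and the condition $\frac{d}{dx}f_{p,q}(x)=0 \Rightarrow \frac{d^2}{dx^2}f_{p,q}(x)\ge 0$ forces every interior critical point to be a local minimum rather than a peak; hence $f_{p,q}$ has a decrease-then-increase (valley) profile, so its maximum is again attained at $p$ or $q$. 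In either case, quasi-convexity of $f$ on $l_{p,q}$ yields $\max_{x\in l_{p,q}} f(x)=\max\{f(p),f(q)\}$.

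With this endpoint-maximum property in hand, the reduction to boundary points is the real content of the lemma, and I would argue it by contradiction. Suppose the hypothesis holds -- $f$ is quasi-convex on $l_{p,q}$ for every pair of boundary points -- yet the region $R=\{x:f(x)\le \tfrac{1}{\beta}\}$ is not convex. Then there are points $a,b\in R$ and an interior point $c\in l_{a,b}$ with $f(c)>\tfrac{1}{\beta}$. Since $f$ is continuous on $\triangle(\tilde{t})$ (a finite sum of terms that are smooth on $\triangle(\tilde{t})$) and $f(a),f(b)\le \tfrac{1}{\beta}<f(c)$, the intermediate value theorem applied along $l_{a,c}$ and along $l_{c,b}$ yields level-set points $p\in l_{a,c}$ and $q\in l_{c,b}$ with $f(p)=f(q)=\tfrac{1}{\beta}$. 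As $a,c,b$ are collinear with $p\in l_{a,c}$ and $q\in l_{c,b}$, the point $c$ lies on $l_{p,q}$; and since $f=\tfrac{1}{\beta}$ at both, $p$ and $q$ lie on the boundary of $R$.

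Now the two threads meet. The segment $l_{p,q}$ joins two boundary points, so by hypothesis $f$ is quasi-convex on it, and the first step gives $\max_{x\in l_{p,q}} f(x)=\max\{f(p),f(q)\}=\tfrac{1}{\beta}$. But $c\in l_{p,q}$ satisfies $f(c)>\tfrac{1}{\beta}$, which is the required contradiction. Hence no such $a,b,c$ can exist, so $R$ is convex and verifying quasi-convexity on boundary segments alone suffices.

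The step I expect to be most delicate is the first one -- confirming that the stated unimodality condition genuinely excludes an interior maximum. The second-order test constrains only points where $\frac{d}{dx}f_{p,q}=0$, so I would need to treat degenerate critical points (where the first and second derivatives both vanish) with care and verify that the monotonicity of $f_{p,x^*}$ and $f_{x^*,q}$ together with the sign condition leaves only the valley profile and never a peak. The continuity and smoothness of $f$ on $\triangle(\tilde{t})$ underpin both the intermediate-value argument in the second step and the derivative reasoning in the first, so I would state these regularity facts explicitly at the outset.
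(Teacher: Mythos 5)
The paper states this lemma without any proof at all --- it appears as one item in the outline of the convexity programme --- so there is nothing to compare your argument against; you are supplying missing content. Your skeleton is the right one and arguably the only natural one: (i) quasi-convexity of $f$ on a segment forces $\max_{x\in l_{p,q}}f(x)=\max\{f(p),f(q)\}$, and (ii) if the sublevel region were not convex, continuity of $f$ and the intermediate value theorem along $l_{a,c}$ and $l_{c,b}$ produce boundary points $p,q$ with $c\in l_{p,q}$ and $f(c)>\frac{1}{\beta}\ge\max\{f(p),f(q)\}$, a contradiction. Step (ii), including the collinearity observation that the order $a,p,c,q,b$ places $c$ on $l_{p,q}$, is correct.

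Two repairs are needed, and the first is a genuine flaw rather than a loose end. Your claim that the condition $\frac{d}{dx}f_{p,q}(x)=0\Rightarrow\frac{d^2}{dx^2}f_{p,q}(x)\ge 0$ forces every interior critical point to be a local minimum is false: $f(x)=-x^4$ on $[-1,1]$ is monotone on $[-1,0]$ and on $[0,1]$, its unique critical point $x=0$ satisfies $f''(0)=0\ge 0$, yet $x=0$ is a strict interior maximum. So under the paper's literal definition of unimodal, the endpoint-maximum property of step (i) simply fails, and the careful verification you defer to the end cannot succeed; the fix is to read ``unimodal'' in its intended valley sense (monotone decrease followed by monotone increase, which is how the paper actually uses the word in Lemma \ref{lemma:unimodal}, where minima are exhibited), or to invoke the standard sublevel-set definition of quasi-convexity from \cite{boyd}, under which (i) is immediate. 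Second, a point where $f=\frac{1}{\beta}$ need not be a topological boundary point of the region (consider a plateau at exactly that value), so the bare intermediate value theorem does not hand you boundary points as you assert; instead take $p$ to be the first point at which $f\le\frac{1}{\beta}$ as one moves from $c$ toward $a$: continuity gives $f(p)=\frac{1}{\beta}$, and the points just before $p$ on the $c$ side witness that $p$ lies on the boundary of the region (similarly for $q$). Alternatively, if ``boundary of the region'' is read as the level set $f=\frac{1}{\beta}$ --- the reading most consistent with the paper's usage --- your construction stands as written. With these two repairs the proof is correct.
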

\item
We analyze each term in the sum separately, and then analyze the sum. The
following lemma shows conditions for which monotonicity of individual terms
extends to monotonicty of the sum:
\begin{lemma}
\label{lemma:addMono}
If $f:R\rightarrow R$ and $g:R\rightarrow R$ are two functions monotonically
increasing in an interval $[p,x^*]$ and monotonically decreasing in an interval
$[x^*,q]$, then the function $f+g$ is also monotonically increasing in $[p,x^*]$
and monotonically decreasing in $[x^*,q]$. In other words, the sum of quasi-convex
functions with the same mode(s) and directions of monotonicity is also
quasi-convex.
\end{lemma}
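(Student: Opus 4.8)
The plan is to reduce the claim to the elementary fact that a sum of two derivatives that share a common sign on an interval again has that sign, and then to reassemble the two monotone pieces across the common breakpoint $x^*$. Using the paper's own definition of \emph{monotonic} (a differentiable function whose derivative keeps one sign throughout the interval), I would first restrict attention to $[p,x^*]$. There both $f$ and $g$ are monotonically increasing, so $\frac{d}{dx}f \ge 0$ and $\frac{d}{dx}g \ge 0$ at every point of $[p,x^*]$. Since differentiation is linear, $\frac{d}{dx}(f+g) = \frac{d}{dx}f + \frac{d}{dx}g \ge 0$ on all of $[p,x^*]$, and hence $f+g$ is monotonically increasing there.

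Second, I would run the identical argument on $[x^*,q]$, where now $\frac{d}{dx}f \le 0$ and $\frac{d}{dx}g \le 0$, so that $\frac{d}{dx}(f+g) \le 0$ throughout and $f+g$ is monotonically decreasing. The two conclusions share the breakpoint $x^*$, so $f+g$ is monotone on each side of the single interior point $x^*$; that is, $f+g$ inherits exactly the same mode and the same directions of monotonicity as $f$ and $g$. This is precisely the piecewise-monotone structure that Lemma \ref{lemma:monotone} invokes under the name quasi-convexity, so the ``in other words'' summary follows. In the degenerate case where $x^*$ coincides with $p$ or $q$, one of the two pieces is empty and $f+g$ is simply monotonic on the whole of $[p,q]$, which is the limiting case of the same statement.

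Honestly there is no substantive obstacle here: the lemma is immediate once the definitions are unwound, because monotonicity in a fixed direction is preserved termwise under addition. The only two points deserving a sentence of care are (i) confirming that the common sign is genuinely shared \emph{before} adding the derivatives, which is exactly what the hypothesis ``same mode(s) and directions of monotonicity on the same subintervals'' supplies, and (ii) checking the junction at $x^*$, to be sure the increasing and decreasing pieces glue into a single mode rather than spawning a second one; this is handled by the observation that $x^*$ is common to both hypotheses. I note that the derivative-addition step is entirely direction-agnostic, so the identical argument also covers the reflected (decreasing-then-increasing) shape, and a derivative-free version using the order definition $a<b \Rightarrow f(a)\le f(b)$ would remove the differentiability assumption altogether; since the surrounding development already assumes differentiability, I would present the derivative form for brevity.
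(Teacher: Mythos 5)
Your proof is correct. There is, however, nothing in the paper to compare it against: the paper states Lemma \ref{lemma:addMono} bare, with no proof at all, evidently treating it as too routine to write down. Your argument is exactly the verification the authors skipped: restrict to $[p,x^*]$, add the two derivatives of common (nonnegative) sign, run the mirror-image computation on $[x^*,q]$, and glue at the shared breakpoint $x^*$; the remarks on the degenerate case $x^*\in\{p,q\}$ and on the derivative-free order-theoretic variant are also sound, and the latter is worth keeping since it drops differentiability entirely. One small caution about the ``in other words'' sentence: under the paper's own definitions, \emph{unimodal} requires not only the two monotone pieces but also the side condition $\frac{d}{dx}f_{p,q}(x)=0 \Rightarrow \frac{d^2}{dx^2}f_{p,q}(x)\ge 0$, which your proof does not address. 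It is inherited by the sum for the same sign-sharing reason, though: if $(f+g)'(x_0)=0$ at a point interior to one of the monotone stretches, then $f'(x_0)$ and $g'(x_0)$, having a common sign and summing to zero, must both vanish, so each term's second-derivative clause applies and $(f+g)''(x_0)=f''(x_0)+g''(x_0)\ge 0$ (assuming twice differentiability, as the paper implicitly does). With that one-line addendum your argument covers the paper's full notion of quasi-convexity rather than only the piecewise-monotone structure.
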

\item
In the quasi-convexity analysis for each term, we work with the power $2$ instead
of $\alpha$. The following lemma shows that this approach is sufficient to prove
quasi-convexity for any $\alpha$.
\begin{lemma}
\label{lemma:alphaPower}
Consider a differentiable function $f:R\rightarrow R^+$. If $f$ is quasi-convex in
interval $[a,b]$, then for all $\alpha>0$, $f^\alpha$ also quasi-convex in
$[a,b]$.
\end{lemma}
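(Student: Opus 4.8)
The plan is to work directly with the paper's definitions of \emph{monotonic} and \emph{unimodal} and to show that raising to a positive power preserves both the sign of the first derivative everywhere and the sign of the second derivative at critical points. Write $g = f^{\alpha}$. Since $f$ takes values in $R^{+}$ and is differentiable, $g = \exp(\alpha \ln f)$ is differentiable, with
\[
g' = \alpha f^{\alpha-1} f'.
\]
The prefactor $\alpha f^{\alpha-1}$ is strictly positive because $\alpha > 0$ and $f > 0$, so $g'(x)$ has exactly the same sign as $f'(x)$ at every point of $[a,b]$; in particular $g'(x) = 0$ if and only if $f'(x) = 0$. This single observation carries the whole argument, and I would isolate it first.

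With this in hand the two cases of quasi-convexity follow quickly. If $f$ is monotonic on $[a,b]$, then $f'$ has constant sign there, hence so does $g'$, and $g$ is monotonic, i.e.\ quasi-convex. If instead $f$ is unimodal, let $x^{*}$ be the point supplied by the definition, so that $f$ is monotonic on $[a, x^{*}]$ and on $[x^{*}, b]$. By the sign identity $g'$ has constant sign on each of these subintervals, so $g$ is monotonic on each, using the same $x^{*}$. It remains to check the second-order condition $g'(x) = 0 \Rightarrow g''(x) \ge 0$. Differentiating once more gives
\[
g'' = \alpha f^{\alpha-2}\left[(\alpha-1)(f')^{2} + f f''\right].
\]
At any critical point $x_{0}$ of $g$ we have $f'(x_{0}) = 0$, which annihilates the $(f')^{2}$ term and leaves $g''(x_{0}) = \alpha f(x_{0})^{\alpha-1} f''(x_{0})$. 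The prefactor is positive and $f''(x_{0}) \ge 0$ by the unimodality of $f$, so $g''(x_{0}) \ge 0$. Hence $g$ is unimodal, and therefore quasi-convex.

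I expect no deep obstacle here: the content is entirely derivative-sign bookkeeping driven by the positivity of $\alpha f^{\alpha-1}$. The one place that genuinely needs care is the second-derivative check, since the chain-rule expansion of $g''$ carries the extra term $(\alpha-1)(f')^{2}$ whose sign is \emph{not} controlled when $\alpha < 1$; the argument succeeds only because this term vanishes precisely at the critical points where the condition is tested. I would also flag at the outset that the hypothesis $f > 0$ is essential on two counts—it guarantees differentiability of $f^{\alpha}$ for non-integer $\alpha$, and it makes the prefactor $\alpha f^{\alpha-1}$ strictly positive—so that the sign-transfer step is legitimate.
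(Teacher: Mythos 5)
Your proposal is correct and follows essentially the same route as the paper's proof: both rest on the identity $\frac{d}{dx}f^{\alpha} = \alpha f^{\alpha-1}f'$, the positivity of the prefactor $\alpha f^{\alpha-1}$ to transfer the sign of the first derivative, and the observation that at critical points the second derivative of $f^{\alpha}$ reduces to $\alpha f^{\alpha-1}f''$ so the unimodality condition carries over. Your write-up is somewhat more explicit than the paper's (the explicit case split and the remark that the $(\alpha-1)(f')^{2}$ term vanishes exactly where it is needed), but the underlying argument is identical.
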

\begin{proof}
Monotonicity can be verified by differentiating $f^\alpha$:\\
$\displaystyle\frac{d}{dx}f(x)^\alpha = \alpha f(x)^{\alpha-1}
\frac{d}{dx}f(x)$. Since $\alpha>0$ and $f>0$, $\displaystyle
\frac{d}{dx}f(x)^\alpha$ and $\displaystyle\frac{d}{dx}f(x)$ have the same
sign. Thus, monotonic increase in $f$ is equivalent to monotonic increase in
$f^\alpha$. Similarly, monotonic decrease is also equivalent for both
functions.

Since $\displaystyle\frac{d^2}{dx^2}f(x)^\alpha = \alpha f(x)^{\alpha-1}
\frac{d^2}{dx^2}f(x)$ at ${x|\frac{d}{dx}f(x)=0}$,
$\displaystyle\frac{d}{dx}f(x)^\alpha = 0\Rightarrow
\displaystyle\frac{d^2}{dx^2}f(x)^\alpha \ge 0$.

$f^\alpha$ is quasi-convex in $[a,b]$.
\end{proof}
\item
We want to prove convexity for the SINR coverage region in the Voronoi partition
corresponding to each transmitter. Accordingly, we require to establish
quasi-convexity for functions evaluated on line segments inside the
Voronoi partition.

The sum in Inequality \ref{eqn:sum} has two types of terms: $\displaystyle
\frac{d(p,\tilde{t})^\alpha}{d(p,\tilde{u})^\alpha}$ and $N_0
d(p,\tilde{t})^\alpha$. It is sufficient to show quasi-convexity for
$\displaystyle\frac{d(p,\tilde{t})^2}{d(p,\tilde{u})^2}$ and $N_0
d(p,\tilde{t})^2$, as shown by Lemma \ref{lemma:alphaPower}.

We show quasi-convexity for the term $\displaystyle
\frac{d(p,\tilde{t})^2}{d(p,\tilde{u})^2}$ on intersections of lines with the
half-space $\hbar(\tilde{t}, \tilde{u})$. This restriction to the half-space is
sufficient, since the intersection of the half-spaces
$\displaystyle\bigcap_{\tilde{u}\in T\setminus\tilde{t}} \hbar(\tilde{t},
\tilde{u})$ is the Voronoi partition corresponding to $\tilde{t}$.
\item
One final trick helps us to further simplify the analysis - rotation and
translation of axes. Rotation and translation of axes do not alter the results
of the analysis, since the distance function is invariant of these
operations. Given line $l$ on which we need to evaluate a term, we rotate axes
such that $l$ becomes the new $x$-axis. For evaluating the term $\displaystyle
\frac{d(p,\tilde{t})^2}{d(p,\tilde{u})^2}$, we translate the axes such that the
positive $x$-axis corresponds to $l\cap\hbar(\tilde{t}, \tilde{u})$. (If
$l\cap\hbar(\tilde{t}, \tilde{u})=\phi$, then no translation is done.) This
operation is shown in Figure \ref{figure:rotate}.

\begin{figure}
\centering
\includegraphics[angle=270, width=6in]{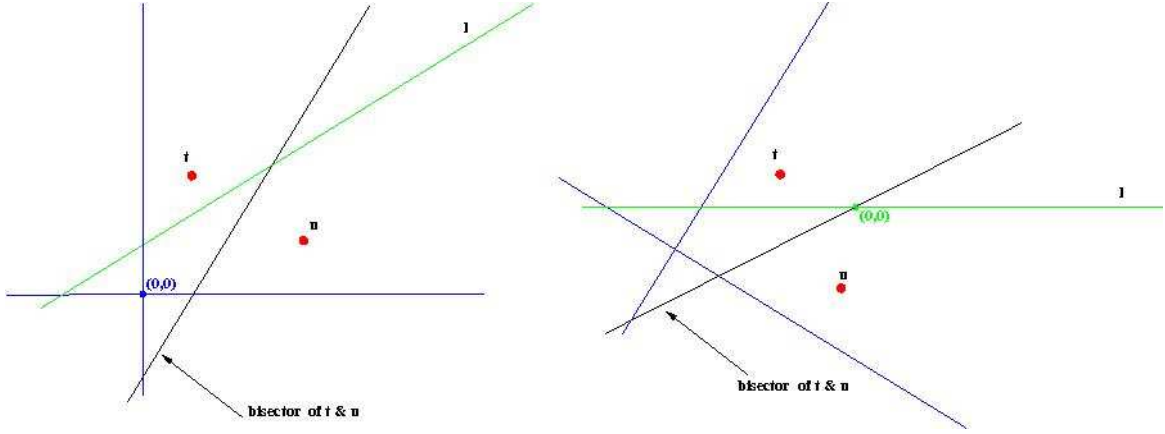}
\caption{\label{figure:rotate}Rotating \& Translating Axes}
\end{figure}

The following lemma shows quasi-convexity for the term $\displaystyle
\frac{d(p,\tilde{t})^2}{d(p,\tilde{u})^2}$ evaluated on the $x$-axis. We choose,
for this term, the brief notation $f_{t,u}$. We recall that $x>0$ corresponds to
points on $l$ closer to $\tilde{t}$ than $\tilde{u}$ (see Figure
\ref{figure:rotate}). Figure \ref{figure:unimodal} helps visualize the lemma.

\begin{figure}
\centering
\includegraphics[angle=270, width=6in]{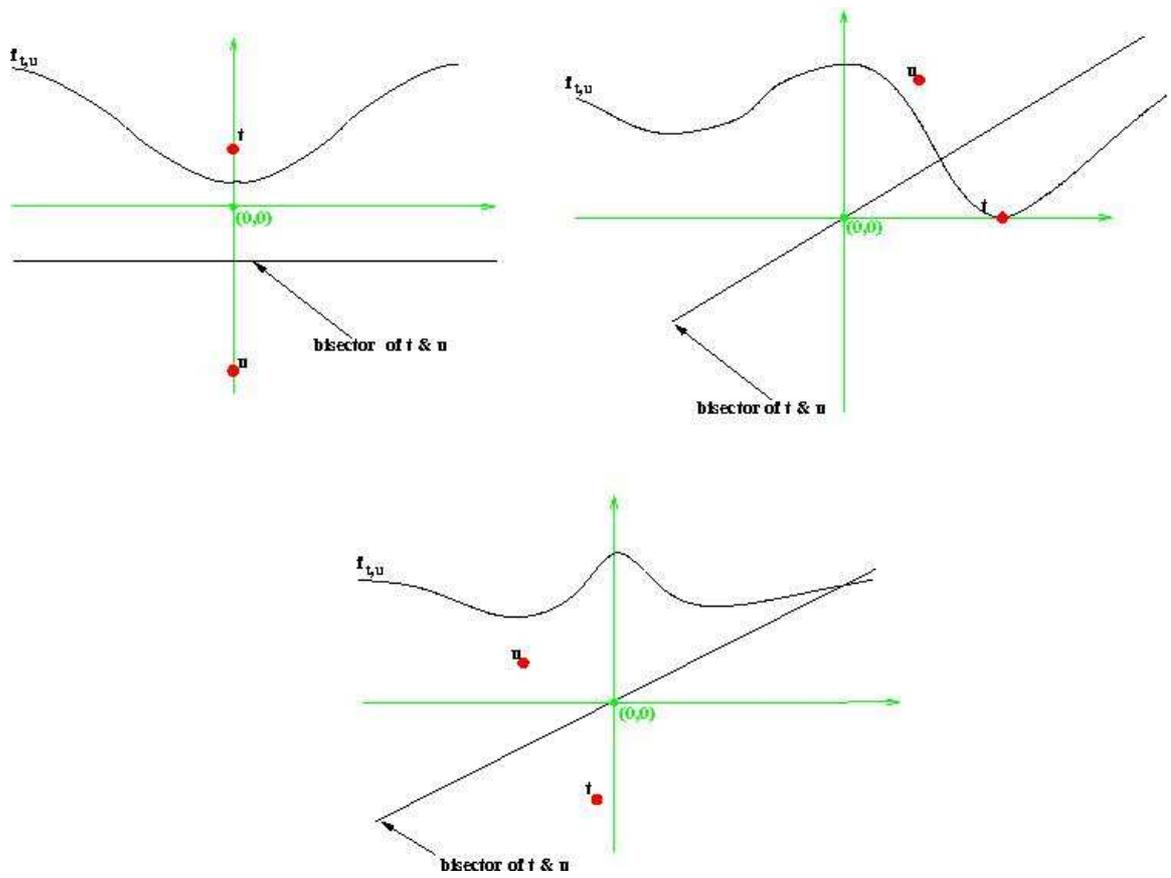}
\caption[Quasi-convexity in the SINR model]{\label{figure:unimodal}$f_{t,u}\equiv\displaystyle
\frac{d(p,\tilde{t})^2}{d(p,\tilde{u})^2}$ is quasi-convex for $x>0$}
\end{figure}

\begin{lemma}
\label{lemma:unimodal}
For an appropriate choice of $x$-axis, $f_{t,u}$ is quasi-convex on $x\ge
0$.
\end{lemma}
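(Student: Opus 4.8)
The plan is to reduce the lemma to a one-dimensional sign analysis of the derivative of a ratio of two quadratics, and to let the choice of axes in Figure \ref{figure:rotate} do most of the work. After the rotation and translation, I would write a point on the line as $p = (x,0)$ and set $\tilde{t} = (t_x, t_y)$, $\tilde{u} = (u_x, u_y)$, so that $f_{t,u}(x) = N(x)/D(x)$ with $N(x) = (x - t_x)^2 + t_y^2$ and $D(x) = (x - u_x)^2 + u_y^2$, both strictly positive. The rotated-and-translated axes encode two facts I will exploit repeatedly: the origin lies on the perpendicular bisector of $\tilde{t}$ and $\tilde{u}$, so $N(0) = D(0)$, i.e. $t_x^2 + t_y^2 = u_x^2 + u_y^2$; and the positive $x$-axis lies in $\hbar(\tilde{t}, \tilde{u})$, so $N(x) < D(x)$ for $x > 0$.

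First I would observe that $N - D$ is in fact only \emph{linear} in $x$, because the $x^2$-terms cancel: $N(x) - D(x) = 2(u_x - t_x)x + (t_x^2 + t_y^2 - u_x^2 - u_y^2)$, and the bisector constraint kills the constant term, leaving $N(x) - D(x) = 2(u_x - t_x)x$. Demanding $N - D < 0$ for all $x > 0$ is therefore \emph{exactly} the inequality $u_x < t_x$. Thus the half-space hypothesis hands me the sign $u_x < t_x$ for free, together with $f_{t,u} \le 1$ on $x \ge 0$, with equality only at $x = 0$ and in the limit $x \to \infty$.

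Next I would analyze $f_{t,u}' = (N'D - N D')/D^2$. Since $D^2 > 0$, the sign of the derivative equals the sign of the numerator $g := N'D - N D'$. Here the cubic terms of $N'D$ and $ND'$ cancel, so $g$ is a \emph{quadratic} in $x$ with leading coefficient $2(t_x - u_x) > 0$; in particular $f_{t,u}$ has at most two critical points and $g$ is an upward parabola. Evaluating at the origin and simplifying with the bisector constraint gives $g(0) = 2(u_x - t_x)(u_x^2 + u_y^2) < 0$. An upward parabola with $g(0) < 0$ has roots of opposite sign, hence exactly one positive root $x^*$, so that $g < 0$ on $[0, x^*)$ and $g > 0$ on $(x^*, \infty)$.

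Consequently $f_{t,u}$ is strictly decreasing on $[0, x^*]$ and strictly increasing on $[x^*, \infty)$: it is unimodal with a single interior minimum at $x^*$, where the sign of $f_{t,u}'$ changes from negative to positive, so the second-derivative condition in the definition of unimodality holds. Hence $f_{t,u}$ is quasi-convex on $x \ge 0$, as claimed. Degenerate configurations are harmless: if $t_x = u_x$ the line lies along the bisector and the constraint forces $f_{t,u} \equiv 1$, and the cases $\tilde{u}$ on the line or $\tilde{u}$ at the origin are excluded by general position (they would force $\tilde{t} = \tilde{u}$). The only genuine computation is the two algebraic cancellations, and the crux of the argument, the step I would be most careful about, is recognizing that the chosen axes convert the half-space hypothesis into the single sign inequality $u_x < t_x$, which is precisely what pins the unique positive root of $g$ to be a minimum rather than a maximum.
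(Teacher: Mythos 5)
Your treatment of the generic configuration---the bisector of $\tilde t$ and $\tilde u$ actually meeting the line $l$---is correct, and it is essentially the paper's own Case 2 carried out more carefully: the paper likewise takes $l$ as the $x$-axis with the origin at the bisector intersection, differentiates, and locates the minimum at $x=\sqrt{x_t^2+y_t^2}$, which is exactly your $x^*=R$. Your route through the numerator $g=N'D-ND'$, reduced by the bisector identity to the upward parabola $2(t_x-u_x)\left(x^2-(t_x^2+t_y^2)\right)$ with roots $\pm R$, is cleaner and repairs what the paper leaves rough (its displayed derivative contains a typo, and its ``verify it is a minimum by substituting'' step is not actually carried out).

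There is, however, a genuine gap: every step of your argument leans on $N(0)=D(0)$, i.e.\ on the origin lying on the perpendicular bisector of $\tilde t$ and $\tilde u$, and there is a configuration---the paper's Case 1---in which \emph{no} point of $l$ lies on that bisector: the bisector parallel to $l$, equivalently $l$ perpendicular to the segment $\tilde t\tilde u$ and strictly on $\tilde t$'s side. This case is not exotic; it arises for legitimate segments $l_{p,q}$ inside the Voronoi cell. There one has $t_x=u_x$ with $t_y^2<u_y^2$, the constant term of $N-D$ does not vanish, $g$ degenerates from an upward parabola to the linear function $2(u_y^2-t_y^2)(x-t_x)$, and your deductions $u_x<t_x$, $g(0)<0$, and ``roots of opposite sign'' all become unavailable. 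Your degenerate-case sentence does not rescue this: the inference ``$t_x=u_x$ implies $l$ lies along the bisector, so $f_{t,u}\equiv 1$'' itself invokes the bisector identity, so it is circular precisely in the missing case. The repair is short---the linear $g$ changes sign once, from negative to positive, so $f_{t,u}$ is unimodal with its minimum at the common abscissa $t_x$ and hence quasi-convex on $x\ge 0$, which is what the paper proves as Case 1---but as written your proof omits it. (Separately, a minor slip: $\tilde u$ lying on $l$ does not force $\tilde t=\tilde u$; only $\tilde u$ at the origin does. This is harmless, since the half-space hypothesis already keeps $D>0$ on $x\ge 0$.)
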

\begin{proof}
\begin{description}
\item[Case 1:] The bisector of $t$ and $u$ is parallel to $l$.

Choose the line through $t$ and $u$ as the $y$-axis, and $l$ as the
$x$-axis. Let $t\equiv(0,y_t)$ and $u\equiv(0,y_u)$ in this co-ordinate system.

$f_{t,u} = \frac{x^2+y_t^2}{x^2+y_u^2}$. Since $t$ is closer to any point on
$l$, $y_t^2 \le y_u^2$. Thus, $f_{t,u}$ is unimodal with minimum at $x=0$. This
can be verified by differentiation:\\
$\frac{d}{dx}f_{t,u} = \frac{2x(y_u^2-y_t^2)}{(x^2+y_u^2)^2}$. 
\item[Case 2:] The bisector of $t$ and $u$ intersects $l$.

Choose $l$ as the $x$-axis, and the intersection point of $l$ with the bisector
as the origin. Let $t\equiv(x_t,y_t)$ and $u\equiv(x_u,y_u)$ in this co-ordinate
system.

$f_{t,u} = \frac{(x-x_t)^2+y_t^2}{(x-x_u)^2+y_u^2}$. We verify unimodality of
$f_{t,u}$ in $x\ge 0$ by differentiation:\\ $\frac{d}{dx}f_{t,u} =
\frac{2(x-x_t)}{(x-x_u)^2+y_u^2}- \frac{2(x-x_u)}{((x-x_t)^2+y_t^2)^2}$\\
Thus, the optima are at $x^2 = x_t^2 + y_t^2$. We can verify that
$x=\sqrt{x_t^2+ y_t^2}$ is a minima by substituting for $x$ in
$f_{t,u}$.\qedhere
\end{description}
\end{proof}
\item
\begin{corollary}[Star-Convexity]
The SINR coverage region is star-convex for $\alpha\ge 2$ and $\beta>0$. 
\end{corollary}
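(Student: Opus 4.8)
The plan is to take the transmitter location $\tilde t$ itself as the star center. Since $d(\tilde t,\tilde t)=0$, the function $f$ of Inequality \ref{eqn:sum} satisfies $f(\tilde t)=0\le\frac{1}{\beta}$, so $\tilde t$ always lies in the coverage region for every $\beta>0$. Moreover, by Lemma \ref{lemma:sinr.voronoi} the coverage region sits inside the Voronoi partition $\triangle(\tilde t)$, which is convex, so the segment $l_{\tilde t,q}$ from $\tilde t$ to any point $q$ of the region lies entirely in $\triangle(\tilde t)$. Thus star-convexity reduces to a one-sided version of Lemma \ref{lemma:monotone}: it suffices to show that $f$ restricted to each ray emanating from $\tilde t$ (and confined to $\triangle(\tilde t)$) is monotonically increasing. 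Indeed, if $f$ increases from $f(\tilde t)=0$ out to $f(q)\le\frac{1}{\beta}$, then every intermediate point of $l_{\tilde t,q}$ also satisfies $f\le\frac{1}{\beta}$ and hence lies in the region.

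The heart of the argument is to verify this monotonicity term-by-term and then invoke Lemma \ref{lemma:addMono}. Parametrize a ray from $\tilde t$ by $p(s)=\tilde t+s\hat v$, $s\ge 0$, so that $d(p(s),\tilde t)=s$. The noise term $N_0\,d(p,\tilde t)^\alpha=N_0 s^\alpha$ is plainly increasing in $s$. For a cross term I would work first with the exponent $2$, writing $f_{t,u}(s)=s^2/(s^2+2bs+c)$ with $b=\hat v\cdot(\tilde t-\tilde u)$ and $c=d(\tilde t,\tilde u)^2>0$; differentiating gives $f_{t,u}'(s)=2s(bs+c)/(s^2+2bs+c)^2$. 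The key observation is that inside $\triangle(\tilde t)\subseteq\hbar(\tilde t,\tilde u)$ we have $d(p,\tilde t)<d(p,\tilde u)$, i.e. $s^2<s^2+2bs+c$, which forces $2bs+c>0$ and hence $bs+c>0$; therefore $f_{t,u}'(s)>0$ for all $s>0$ on the ray. This is exactly Lemma \ref{lemma:unimodal} specialized to a line through $\tilde t$: the unimodal mode is pinned at the endpoint $\tilde t$ by the Voronoi (closest-transmitter) constraint, so the restriction is purely monotone. Lemma \ref{lemma:alphaPower} then upgrades this from the exponent $2$ to the true exponent $\alpha\ge 2$, since raising a positive monotone function to a positive power preserves the sign of its derivative.

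With every summand of $f$ monotonically increasing along the ray and all sharing the common (degenerate) mode at $\tilde t$, Lemma \ref{lemma:addMono} — applied with the decreasing interval empty — yields that $f$ itself is monotonically increasing along the ray, completing the reduction of the first paragraph and establishing star-convexity with center $\tilde t$. The step I expect to be the main obstacle is precisely this pinning of the mode at the endpoint: for a \emph{generic} segment $l_{p,q}$, as would be needed for full convexity, the terms $f_{t,u}$ are merely unimodal with distinct interior modes, and Lemma \ref{lemma:addMono} cannot be applied because its hypothesis of a shared mode fails. Choosing the center to be $\tilde t$ is exactly what aligns every mode at the ray's endpoint and makes the summation lemma applicable; carefully verifying that the Voronoi inequality $2bs+c>0$ holds throughout $\triangle(\tilde t)$ — equivalently, that the ray never leaves $\hbar(\tilde t,\tilde u)$ — is the crux, and it is also the reason this argument delivers star-convexity rather than the conjectured full convexity.
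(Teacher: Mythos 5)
Your proof is correct and follows essentially the same route as the paper's: take $\tilde{t}$ itself as the star center, observe that every term of $f$ (the cross terms and the noise term alike) has its minimum pinned at $\tilde{t}$ along rays from $\tilde{t}$, lift from exponent $2$ to $\alpha$ via Lemma \ref{lemma:alphaPower}, and combine via Lemma \ref{lemma:addMono}. If anything, your write-up is more complete than the paper's four-line proof, since you explicitly verify the derivative sign from the Voronoi half-space inequality $d(p,\tilde{t})<d(p,\tilde{u})$ and note that the segment stays inside $\triangle(\tilde{t})$ by convexity of the Voronoi cell --- details the paper leaves implicit.
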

\begin{proof}
If one end-point of any line segment is always chosen to be the
transmitter $\tilde{t}$, then minima of $f_{\tilde{t},u}$ for all $u$ are at
$\tilde{t}$. Also, $d(x,\tilde{t})^2$ is convex on $x\ge 0$ with a unique
minimum at $t$. Due to Lemma \ref{lemma:alphaPower}, the same monotonicity
properties hold if we replace the power $2$ with the power $\alpha$. Since the
minima coincide, the sum of quasi-convex terms are also quasi-convex, due to
Lemma \ref{lemma:addMono}.
\end{proof}
\item A proof for convexity requires a generalization of Lemma
\ref{lemma:addMono} that excludes the pre-condition that all minima
coincide. Though we see in Lemma \ref{lemma:unimodal} that all terms are
unimodal, their minima occur at different points on the $x$-axis. We can show
that the minima occur close together (within distance $|y_t|$ of each other),
but a generalization of Lemma \ref{lemma:addMono} eludes us.
\end{enumerate}
\section{SINR Coverage for Unequal Transmit Powers}
\label{section:SINR_unequal}
Lemma \ref{lemma:sinr.voronoi} for equal transmit powers gives us a
characterization of the SINR capture regions by Voronoi partitions - in terms of
distance alone, i.e.~independent of the transmit power(s). The partition of SINR
capture regions for unequal transmit powers corresponds to
``multiplicatively-weighted'' Voronoi partitions. An example of a
multiplicatively-weighted Voronoi diagram is Figure \ref{fig:mwVoronoi}.
The generalization of partitioning of SINR capture regions for unequal transmit
powers is shown by the generalization of Lemma \ref{lemma:sinr.voronoi} below:
\begin{lemma}
\label{lemma:sinr.unequal.tx}
A point $x$ is in the capture region of transmitter $\tilde{t}$ if and only if
$x$ is in the multiplicatively weighted Voronoi partition corresponding to
$\tilde{t}$ with weight $P_{\tilde{t}}^\frac{1}{\alpha}$.
\end{lemma}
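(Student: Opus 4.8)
The plan is to mirror the proof of Lemma \ref{lemma:sinr.voronoi} almost verbatim, the only change being that the transmit powers $P_{\tilde{u}}$ are no longer assumed equal. First I would rewrite $SINR(x,\tilde{t})$ in the form $\frac{A}{M-A}$ by pulling the $\tilde{t}$-term out of the denominator sum, taking
\[ A = \frac{P_{\tilde{t}}}{d(x,\tilde{t})^\alpha}, \qquad M = \sum_{\tilde{u}\in T}\frac{P_{\tilde{u}}}{d(x,\tilde{u})^\alpha} + N_0, \]
and, for a competing transmitter $\tilde{s}$, setting $B = \frac{P_{\tilde{s}}}{d(x,\tilde{s})^\alpha}$. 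Since all powers and the noise are positive, $A,B>0$ and $M>A+B$, so Equation \ref{eq:1} applies and yields $SINR(x,\tilde{t}) \ge SINR(x,\tilde{s}) \Leftrightarrow A \ge B$. This reduction is identical to the equal-power case; the unequal weights do not interfere with it, because they simply ride along inside $A$ and $B$.

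The next step is to convert $A \ge B$ into the multiplicatively-weighted Voronoi condition. Writing it out,
\[ \frac{P_{\tilde{t}}}{d(x,\tilde{t})^\alpha} \ge \frac{P_{\tilde{s}}}{d(x,\tilde{s})^\alpha} \Leftrightarrow \left(\frac{d(x,\tilde{s})}{d(x,\tilde{t})}\right)^\alpha \ge \frac{P_{\tilde{s}}}{P_{\tilde{t}}}. \]
Because every quantity is positive and $\alpha \ge 2 > 0$, I can take $\alpha$-th roots on both sides while preserving the inequality, obtaining
\[ \frac{d(x,\tilde{t})}{P_{\tilde{t}}^{1/\alpha}} \le \frac{d(x,\tilde{s})}{P_{\tilde{s}}^{1/\alpha}}. \]
This says precisely that $x$ is weighted-closer to $\tilde{t}$ than to $\tilde{s}$ under the weighted distance $d(x,\cdot)/P_{\cdot}^{1/\alpha}$, i.e.~that $x$ lies in the multiplicatively-weighted Voronoi region of $\tilde{t}$ with weight $P_{\tilde{t}}^{1/\alpha}$.

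I would close the argument by quantifying over all competitors: $\tilde{t}$ captures $x$ exactly when $SINR(x,\tilde{t}) \ge SINR(x,\tilde{s})$ for every $\tilde{s} \in T\setminus\{\tilde{t}\}$, which by the chain of equivalences above holds if and only if $d(x,\tilde{t})/P_{\tilde{t}}^{1/\alpha} \le d(x,\tilde{s})/P_{\tilde{s}}^{1/\alpha}$ for every such $\tilde{s}$ — the defining condition of the weighted partition. I do not expect a genuine obstacle here; once the equal-power argument of Lemma \ref{lemma:sinr.voronoi} is in hand, this is a routine generalization. The only point requiring care is fixing the convention for the multiplicatively-weighted diagram: one must use the weighted distance $d(x,\cdot)/w$ (larger weight $\Rightarrow$ larger region), so that a transmitter with higher power claims a correspondingly larger capture region, matching both physical intuition and the direction of the inequality derived above.
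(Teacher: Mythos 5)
Your proposal is correct and follows essentially the same route as the paper's own proof: rewrite $SINR(x,\tilde{t})$ as $\frac{A}{M-A}$ with $A = \frac{P_{\tilde{t}}}{d(x,\tilde{t})^\alpha}$, $B = \frac{P_{\tilde{s}}}{d(x,\tilde{s})^\alpha}$, $M = \sum_{\tilde{u}\in T}\frac{P_{\tilde{u}}}{d(x,\tilde{u})^\alpha} + N_0$, invoke Equation \ref{eq:1}, and convert $A \ge B$ into a weighted-distance comparison quantified over all competitors. In fact your final step is more careful than the paper's: the paper states the conclusion as $P_{\tilde{t}}^{\frac{1}{\alpha}}d(x,\tilde{t}) \le P_{\tilde{s}}^{\frac{1}{\alpha}}d(x,\tilde{s})$, which attaches each weight to the wrong distance (from $A \ge B$ one actually gets $d(x,\tilde{t})/P_{\tilde{t}}^{1/\alpha} \le d(x,\tilde{s})/P_{\tilde{s}}^{1/\alpha}$, i.e.~$P_{\tilde{s}}^{1/\alpha}d(x,\tilde{t}) \le P_{\tilde{t}}^{1/\alpha}d(x,\tilde{s})$), and your explicit fixing of the convention $d(x,\cdot)/w$ — larger weight, larger region — is exactly the correction needed for the lemma to match both the algebra and the physical intuition.
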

\begin{proof}
\[SINR(x, \tilde{t}) = \frac{\frac{P_{\tilde{t}}}{d(x,\tilde{t})^\alpha}}
{\displaystyle \sum_{\tilde{u}\in T\setminus\tilde{t}}
\frac{P_{\tilde{u}}}{d(x,\tilde{u})^\alpha} + N_0}\]
\[\Leftrightarrow
SINR(x,\tilde{t})=\frac{\frac{P_{\tilde{t}}}{d(x,\tilde{t})^\alpha}}
 {\displaystyle\sum_{\tilde{u}\in T}\frac{P_{\tilde{t}}}{d(x,\tilde{u})^\alpha}
 + N_0 - \frac{P_{\tilde{t}}}{d(x,\tilde{t})^\alpha}}\]
Let \[A = \frac{P_{\tilde{t}}}{d(x,\tilde{t})^\alpha},\ \textrm{and}\ M =
\sum_{\tilde{u}\in T} \frac{P_{\tilde{u}}}{d(x,\tilde{u})^\alpha} + N_0\]
Let $s$ be another transmitter in $T$, and $B =
     \frac{P_{\tilde{s}}}{d(x,\tilde{s})^\alpha}$
\[SINR(x, \tilde{t}) \ge SINR(x, \tilde{s}) \Leftrightarrow
     \frac{A}{M - A} \ge \frac{B}{M - B} \Leftrightarrow A \ge B\]
\[\Leftrightarrow\frac{P_{\tilde{t}}}{d(x,\tilde{t})^\alpha}\ge
\frac{P_{\tilde{s}}}{d(x,\tilde{s})^\alpha}\]
\[\Leftrightarrow P_{\tilde{t}}^\frac{1}{\alpha}d(x,\tilde{t}) \le
 P_{\tilde{s}}^\frac{1}{\alpha}d(x,\tilde{s})\]
Thus, if $\forall \tilde{s},\ SINR(x,\tilde{t})\ge SINR(x,\tilde{s})$, then
$\forall\tilde{s},\ P_{\tilde{t}}^\frac{1}{\alpha}d(x,\tilde{t}) \le
 P_{\tilde{s}}^\frac{1}{\alpha}d(x,\tilde{s}) \Leftrightarrow x\in$
multiplicatively-weighted Voronoi partition of $\tilde{t}$.
\end{proof}
The generalization to unequal powers is similar to the generalization of Voronoi
diagrams to Power diagrams introduced in Chapter \ref{chapter:Fixed}. In
fact, another term for the Power Diagram is ``additively-weighted'' Voronoi
diagram (see \cite{AurenhammerNotes}).

As seen in Figure \ref{fig:mwVoronoi}, the multiplicatively-weighted Voronoi
partition is not convex. Hence, we must relax the constraint of representing the
SINR region by approximating a convex region.
\begin{figure}
\centering
\includegraphics[width=6in]{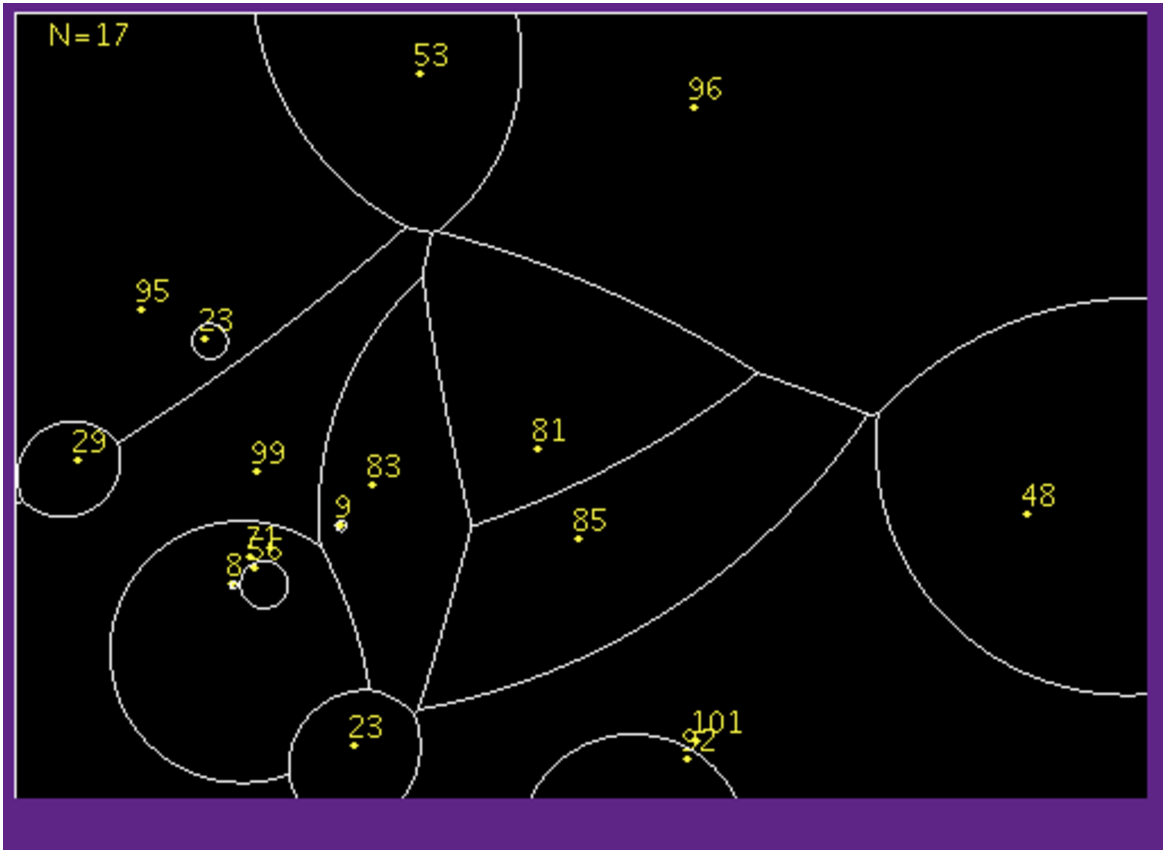}
\caption[Multiplicatively Weighted Voronoi Diagram]{Multiplicatively Weighted 
Voronoi Diagram (courtesy \cite{mwVoronoi}\label{fig:mwVoronoi}}
\end{figure}
\begin{figure}
\centering\includegraphics{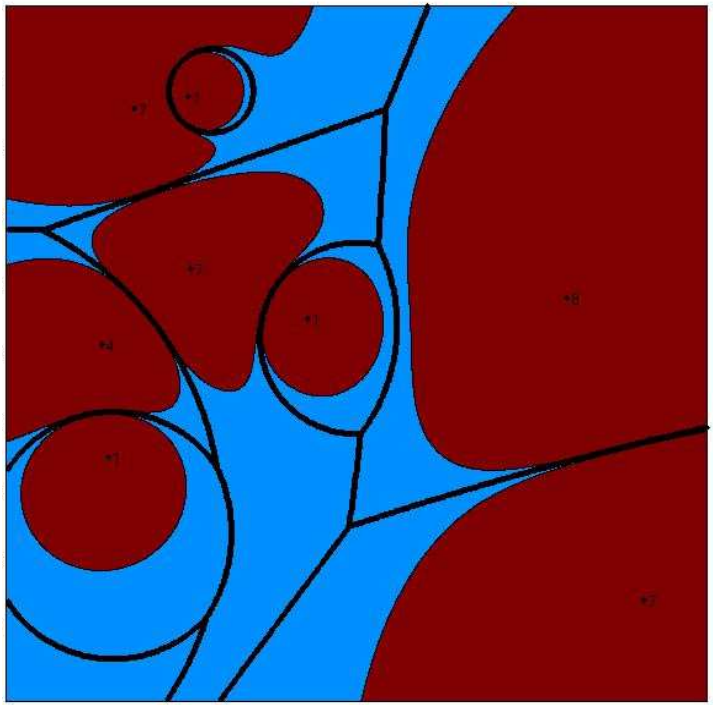}
\caption{\label{fig:non-sc}Non-Star-Convex Coverage Region} 
\end{figure}

Furthermore, as seen in Figure \ref{fig:non-sc}, the coverage region for a
transmitter may not be star-convex if all transmitters do not have the same
power. Thus, it is hard to follow a geometric algorithm approach to coverage
area computation for the SINR model.

\chapter{Coverage Optimization}
\label{chapter:Optimization}
In this chapter we propose an algorithm that optimizes coverage by choosing an
apt transmit power assignment to transmitters operating on the same
channel. Though it is desirable to solve this problem exactly in an analytical
framework, it is observed that this would require solutions to implicit
non-linear equations in many variables (see Section \ref{section:genOpt}). Thus,
to solve this optimization problem, we propose a derivative-free approach with
the Random Hill Climbing algorithm. We demonstrate the efficiency of this
algorithm using the SINR model. An important advantage of this algorithm is that
it uses a coverage estimation procedure that can accommodate any deterministic
coverage model.

In this chapter we presents simulation results to justify the performance
of the proposed algorithm by comparison with optimization results using the
Nelder-Mead method \cite{opti} and exhaustive search.

The rest of the chapter is organized as follows: The remainder of this section
describes related work and notation. Section \ref{section:Approach} initiates
discussion and background for our solution approach. Section \ref{section:Algos}
describes methods we have analyzed for optimizing coverage, Section
\ref{section:Results} presents an experimental comparison of these methods.
\section{Related Work}
Ahmed et al.~\cite{Keshav} study optimum transmit power assignments to access 
points assuming a protocol model. Some optimization problems in link scheduling 
and power control in the SINR
model are closely related to our problem. Goussevskaia et al.~\cite{gous} show
that a discrete problem of `single-shot scheduling' with weighted links is
NP-hard. Lotker et al.~\cite{lotker} and Zander et al.~\cite{zander} give
efficient algorithms for optimizing the maximum achievable SINR in a set of
links. Yates et al.~\cite{yates} report an algorithm to optimize the total
uplink transmit power for users served by a base station, assuming that all
users meet the minimum SINR constraint. A recent study by Altman et 
al.~\cite{Altman} considers {\em SINR games} played co-operatively and 
co-optively between base stations to maximize coverage area for mobile 
receivers or determine optimal placement for base stations themselves.

A more recent study of coverage optimization algorithms for indoor coverage 
appears in Reza et al.~\cite{reza2014comprehensive}. A new optimization model 
based on extrapolation of data collected from measurement tools is given by 
Kazakovtsev in \cite{kazakovtsev2013wireless}.

In contrast to these studies, our work reports a method to obtain the power 
assignment that maximizes the coverage area; that is, the power assignment that 
enables the maximum possible number of receivers to meet the SINR constraint. 
Our Random Hill Climbing procedure is inspired by Mudumbai et 
al.~\cite{madhow}. They report a randomized procedure to synchronize multiple
transmissions to send a common message coherently in a distributed beamforming
system.
\section{Notation}
\label{section:Notation}
We use the following assumptions:
\begin{enumerate}
\item A receiver is in coverage if the SINR due to some transmitter is above a
given threshold.
\item Transmitter locations are fixed; and all transmitters lie in the plane. We measure the coverage area within a bounding rectangle.
\item All receivers have the same receive sensitivity. All transmissions are omni-directional.
\end{enumerate}
The notation we use is as follows:
\begin{enumerate}
\item $T$ denotes the set of transmitters. All transmitters lie in an axis-parallel rectangle $E$ of unit area.
\item The transmit power of transmitter $t\in T$ is denoted $P_t$. The transmit
power is bounded, i.e., $P_{t_{min}} \le P_t \le P_{t_{max}}, \forall t\in T$.
\item $\hat{P}$ is the $|T|$-dimensional vector of power assignments.
\item $d(x, t)$ denotes the Euclidean distance between a point $x$ and
transmitter at $t$.
\item $\alpha$ is the path-loss exponent. In free space $\alpha = 2$, and other
values are experimentally or analytically derived from the propagation
environment \cite[Chapter 2]{tse}.
\item The receive power $R(x, t)$ at point $x$ from transmitter $t$ is
${P_t}/({d(x, t)^\alpha})$.
\item $N_0$ is the the ambient noise power.
\item $SINR(x, t)$ denotes the SINR at point $x$ due to transmitter $t$.
\begin{equation}
\label{eqn:sinr}
SINR(x, t) = \frac{R(x, t)}{\sum_{s\in T \setminus \{t\}}R(x, s) + N_0}
\end{equation}
\item $\beta$ is the minimum SINR required for successful reception.
\item The maximum SINR at a point $x$, denoted $SINR_{max}(x)$, is the maximum
of $SINR(x, t)$ over the set of transmitters $t \in T$.
\item The coverage area $\mathbb{C}$ is the measure of the area of the set $\{x |
SINR_{max}(x) \ge\beta\}\cap E$. Note that $\mathbb{C}$ is a function of the
transmitter locations and their power assignments.
\end{enumerate}
It must be noted that all instances
of $SINR(x, t)$ and $SINR_{max}(x)$, and $\mathbb{C}$ as well, are functions with
domain as the universe of vectors $\{\hat{P}\}$.
\section{Optimization Problem Solution Approach}
\label{section:Approach}
\label{section:genOpt}
We are interested in an optimization method that yields a global maximum for
function $\mathbb{C}$. In this section we describe the characteristics of the objective
function $\mathbb{C}$, and justify our choice of optimization method.
%
Optimization methods for non-linear functions (such as $\mathbb{C}$) are broadly
classified as gradient-based methods and direct-search methods \cite{opti}. In
gradient-based methods, the gradient vector - comprising the first order partial
derivatives of the objective function is required. Thus, the objective function
must be continuous and differentiable over the feasible
set. Further, its partial derivatives must be expressible in an
analytical form (``closed-form expression'').

Optimization of objective functions that are non-differentiable or do not have
closed-form expressions requires a heuristic method to iterate through feasible
solutions efficiently. Such a method is called a direct search method.

Our objective function $\mathbb{C}$ has domain $\hat{P}$ - the vector of power
assignments. $\mathbb{C}$ is the measure of the area of the region 
$\{x | SINR_{max}(x) \ge \beta\}\cap E$. However, the measure of this area is 
not expressible in analytical form; and thereby, the partial derivatives of 
$\mathbb{C}$ are not available in analytical form either.

%
We must choose a direct search method to maximize $\mathbb{C}$. Further, since $\mathbb{C}$
does not have an expression in analytical form, its values required in the
execution of the maximization program must be estimated by numerical
methods. The following section describes our methods for estimating the coverage
area, and direct search methods for coverage area optimization.
\section{Proposed Solution Method Details}
\label{section:Algos}
\subsection{Estimating the Coverage Area}
The coverage area is estimated as follows: We choose an appropriate finite
sample set of points in $E$, and report the estimate of $\mathbb{C}$ as the fraction of
this sample for which $SINR_{max} \ge \beta$. (Note that $E$ is of unit area.)
Computing this fraction involves computations of $SINR_{max}(x)$ for all sample
point $x$. Each computation of $SINR_{max}$ is $O(|T|)$.
We propose two methods for choosing the sample:
\begin{enumerate}
\item The set of points on an axis-parallel grid on $E$.
\item A random finite set of points in $E$.
\end{enumerate}
The accuracy of the estimate depends on the size and distribution of the
sample. However, we can obtain probabilistic guarantees on the number of 
samples required for a desired level of accuracy as follows.

Let the desired estimation error ratio be $\epsilon\in(0,1)$; that is, the 
estimated area lies in $\left[(1-\epsilon) \mathbb{C}, (1+\epsilon) \mathbb{C}\right]$ 
with probability $\delta$.

We present a method to arrive at a `sufficient' number of random samples or 
grid points for a desired accuracy ratio. Let $X\in\{0, 1\}$ be an indicator 
random variable corresponding to the event of a sample point belonging in the 
coverage area. Thus, $Pr(X=0) = 1-\mathbb{C}$ and $Pr(X=1) = E_{X} = \mathbb{C}$. 
Let $S_n$ be the random variable corresponding to the sum of $n$ sampling 
events. In context, $\frac{S_n}{n}$ is the estimated area.

By Chernoff's bound \cite{mitzenmacher}, we know that:
\begin{equation}
\label{eqn:chernoff}
n > \frac{3ln(\frac{2}{\delta})}{\epsilon^{2}\mathbb{C}} \Rightarrow 
Pr(|\frac{S_n}{n} - \mathbb{C}| \le \epsilon \mathbb{C}) \ge 1 - \delta
\end{equation}
Thus, we can choose $n$ by setting the estimation error $\epsilon$ and 
probability guarantee $\delta$ suitably.

The grid comprises of lines parallel to the axes. We could not find an 
analytical method for deriving the number of grid lines required for a desired 
estimation error. However, as a thumb rule, we use four times the number
of points as in the random sample. Using the inequality in \ref{eqn:chernoff}, 
for a desired accuracy of $85\%$ with an probability guarantee of $90\%$, we 
need to sample $400$ points. For these accuracy and guarantee parameters, we 
use $1600$ points in the grid.

In later sections, we refer as {\em EstimateArea} the subroutine that estimates 
the coverage area given the transmitter locations and power assignment. In 
order to give deterministic guarantees in comparisons of optimization 
algorithms, we use the grid method; otherwise, each run of an algorithm will 
use different sample points, and differences of accuracy between two algorithms 
may be due to the random choice of sample points.

We note that this estimation method can be extended to real world propagation 
models, such as espoused by project WINNER \cite{winner}, by coding them as 
subroutines that compute receive powers at a sample point.
\subsection{Direct search methods}
\label{section:Algos-D}
A direct search maximization method iterates through choices of vectors in the
feasible set without recourse to the gradient. We have analyzed two direct
search methods - Random Hill Climbing and Nelder-Mead. We also include an
exhaustive search method as a baseline for comparing the accuracy and power
efficiency of these methods.
\subsubsection{Exhaustive Search} The input to this algorithm is the set of
transmitter locations and their power thresholds. The output of this algorithm
is a power assignment to the transmitters that maximizes the coverage area. This
algorithm discretizes the power assignment into levels, and returns the level
vector that maximizes the coverage area.

%
This algorithm iterates through $k$ levels of power for each transmitter and
executes the subroutine $EstimateArea$ $k^{|T|}$ times. The subroutine $nextVec$
returns the next power vector in some sequence. In our implementation, we have
chosen this sequence in increasing order of total power, such that the added
objective of finding the minimum total power that maximizes the coverage is also
achieved. This algorithm serves as a benchmark to evaluate the accuracy and
total power reported of the other methods we study: Random Hill Climb and
Nelder-Mead.

\subsubsection{Random Hill Climbing} This optimization procedure iteratively
generates random vector increments to the current best known feasible
solution. It then updates the best known solution with the incremented vector if
the objective function value on the increment is better than the best known
value.

We have implemented a variant of Random Hill Climbing for maximizing
the coverage area.
%

We ran a large set of experiments with different parameters to compare this
procedure with Exhaustive Search to evaluate convergence and maxima for a small
set of transmitters. We also verified that, for large networks, it converges 
faster than another well-known deterministic procedure - Nelder-Mead. The 
details of these analyses follow later in Section \ref{section:Results}.

This algorithm also uses the subroutine $EstimateArea$.
\begin{algorithm}[RandomHillClimbing(T, $P_{min}$, $P_{max}$)]
\label{algo:rhc}
\end{algorithm}
\begin{algorithmic}
\STATE $\hat{bestP} \gets P_{min}$
\STATE $bestEstArea \gets 0$

\REPEAT
	\STATE $attempts \gets 0$
	\STATE $scaleUp \gets 1 + scaleFactor$
	\STATE $scaleDown \gets 1 + scaleFactor$
	\STATE $\hat{shrinkIncr} \gets stepSize\cdot (P_{max} - P_{min})$
	\STATE $\hat{stretchIncr} \gets stepSize\cdot (P_{max} - P_{min})$
	\STATE $localMaxima \gets FALSE$

	\WHILE {$(attempts < maxIterations)$ {\bf and} $(localMaxima =
									FALSE)$}
		\STATE $attempts \gets attempts + 1$
		\IF {$attempts > scaleUpIncr$}
			\STATE $scaleUp \gets 1 + 2\cdot scaleFactor$
		\ENDIF
		\IF {$isEvenNumber(attempts)$}
			\STATE $\hat{shrinkIncr} \gets \hat{shrinkIncr} / scaleDown$
			\STATE $\hat{t} \gets \hat{bestP} + getRandom(0, \hat{shrinkIncr})$
		\ELSE
			\STATE $\hat{stretchIncr} \gets \hat{stretchIncr} \cdot scaleUp$
			\STATE $\hat{t} \gets \hat{bestP} + getRandom(0, \hat{stretchIncr})$
		\ENDIF
		\STATE Scale $\hat{t}$ to fit values in range specified by
			$P_{min}$ and $P_{max}$

		\IF {$EstimateArea(T, \hat{t}) >  bestEstArea$}
			\STATE $bestEstArea \gets EstimateArea(T, \hat{t})$
			\STATE $\hat{bestP} \gets \hat{t}$
			\STATE $localMaxima \gets TRUE$
		\ENDIF
	\ENDWHILE
\UNTIL {$attempts \ge maxIterations$}

\RETURN $bestEstArea$, $\hat{bestP}$
\end{algorithmic}
Table \ref{table:rhcParms} describes the parameters used in the Random Hill
Climbing algorithm described in Algorithm \ref{algo:rhc}. Each iteration of the
inner loop checks whether a random increment to the best known vector
($\hat{bestP}$) gives a higher maximum. Random increments increase on odd
numbered attempts and reduce on even numbered attempts. This bi-directional
approach covers both steep and shallow saddle points: a large
increment exits a shallow saddle point, whereas a small increment exits a steep
saddle point. Further, to exit a wider shallow region, the rate of increment
increases after $scaleUpIncr$ attempts.

When an improved best estimate of a local maxima is found, the outer ({\bf
repeat}) loop resets the increments. If no better estimate is found after
$maxIterations$ of the inner loop, the procedure exits with the current
estimate. Since the outer loop iterates through increasing estimates of local
maxima, the returned result is an estimate of the global maxima.
%
%
\begin{table}
\centering
\caption{Random Hill Climbing Parameters}
\begin{tabular}{|l|p{2.8in}|l|}
\hline
{\bf Parameter} & {\bf Description} & {\bf Simulation values}\\
\hline
scaleFactor & Controls magnitude of the next random increment & $0.05\alpha$
\\
\hline
stepSize & Controls initial magnitude of random increment & $0.01$\\
\hline
maxIterations & Maximum number of attempts to exit local maximum & 2000\\
\hline
scaleUpIncr & Number of attempts after which random increment magnitude
increases & 100\\
\hline
\end{tabular}
\label{table:rhcParms}
\end{table}
\subsubsection{Nelder-Mead}
The Nelder-Mead is a well-known direct method for local
optimization. Implementations are available in many public forums, and numerical
packages like MATLAB have standard implementations. Even though it is an
unconstrained optimization method, constraints can be handled by penalty
functions \cite{opti}.

We have adapted an implementation by Flanagan (available online at
\cite{nm}). A detailed description of the algorithm is found in \cite{opti}. We
use a random vector to initialize the method, and maintain the default parameter
values of the algorithm.
%

We describe the Nelder-Mead method informally:
It maintains a simplex of $n+1$
points (vertices). At each iteration, either a vertex is replaced or the simplex
is shrunk. First, the vertex with the least image is chosen as origin, and three
new vertices are generated in the direction of the median of the face formed by
excluding this vertex. These new vertices are called vertex ``contraction
inside'', ``contraction outside'' or ``expansion''. Of these three, the vertex
the highest image is chosen as a potential new vertex. If this vertex has image
higher than the origin (i.e. current least), then this vertex is replaces the
origin. Otherwise, the simplex is ``shrunk'', i.e. the face opposite the vertex
with the highest image is scaled down. This replacement and shrink procedure
continues until the difference between the least and highest image is below a
convergence threshold.

The parameters of the Nelder-Mead algorithm are the distance of the three
potential vertices from the origin, the shrink factor, and the convergence
threshold. We have used the default values for these factors: $0.5$ for
contraction and shrink, and $2$ for expansion.

The Nelder-Mead only returns a local maximum. In order to get the global
maximum, we run the algorithm a number of times with random initial vectors.
\subsubsection{Post-processing}
We use a post-processing routine to try improve the coverage area by forcing
smaller values to $P_{t_{min}}$. This routine iteratively sets the lowest
$i\in\{1\ldots |T|\}$ powers to their corresponding values in $P_{min}$.

Another advantage of the post-processing routine is when the designer wants to
find a subset of the input transmitters that is sufficient to attain maximum
coverage. We can run an optimization procedure by setting $P_{min}$ to
$\hat{0}$. Transmitters that have been assigned power $> 0$ are sufficient, and
those assigned power $0$ may be removed from the network.

The experiments we present in Section \ref{section:Results} use this
post-processing on the output of Random Hill Climbing and Nelder-Mead.

\begin{algorithm}[$postProcess(T, \hat{v}, P_{min})$]
\label{algo:ppr}
\end{algorithm}
\begin{algorithmic}
\STATE $\hat{u} \gets \hat{v}$ in increasing order

\STATE $P_0 \gets P_{min}$ re-ordered to maintain the mapping with $\hat{v}$
in the sorted order

\STATE $bestEstArea \gets 0$
\STATE $\hat{bestP} \gets \hat{u}$

\FOR {$i = 1 \to |T|$}

	\STATE Set first $i$ values in $\hat{u}$ to first $i$ values in $P_0$

	\IF {$EstimateArea(T, \hat{u}) >  bestEstArea$}
		\STATE $bestEstArea \gets EstimateArea(T, \hat{u})$
		\STATE $\hat{bestP} \gets \hat{u}$
	\ENDIF
\ENDFOR
\end{algorithmic}
\begin{figure}
\centering
\includegraphics{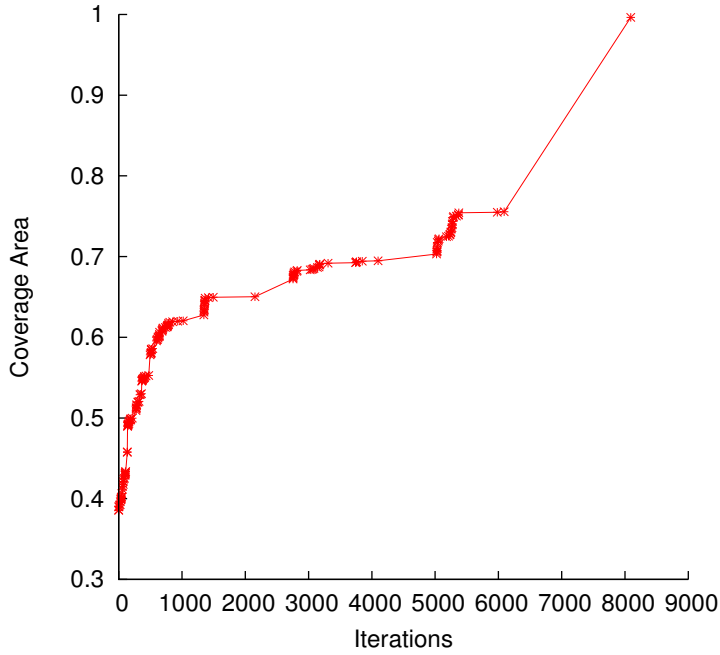}
\caption[Algorithm realization for Random Hill Climbing]{One simulation-based realization of Random Hill Climbing algorithm is
illustrated in this figure.}
\label{figure:Realization}
\end{figure}
Figure \ref{figure:Realization} shows the realization of the algorithm Random
Hill Climbing followed by post-processing. This figure shows the estimates for
the local maximum for the coverage area increasing by iterations of the inner
loop. The final increment is due to the post-processing.
\section{Algorithm Comparisons: Experiments and Results}
\label{section:Results}
\begin{figure}
\centering
\includegraphics[width=4.5in]{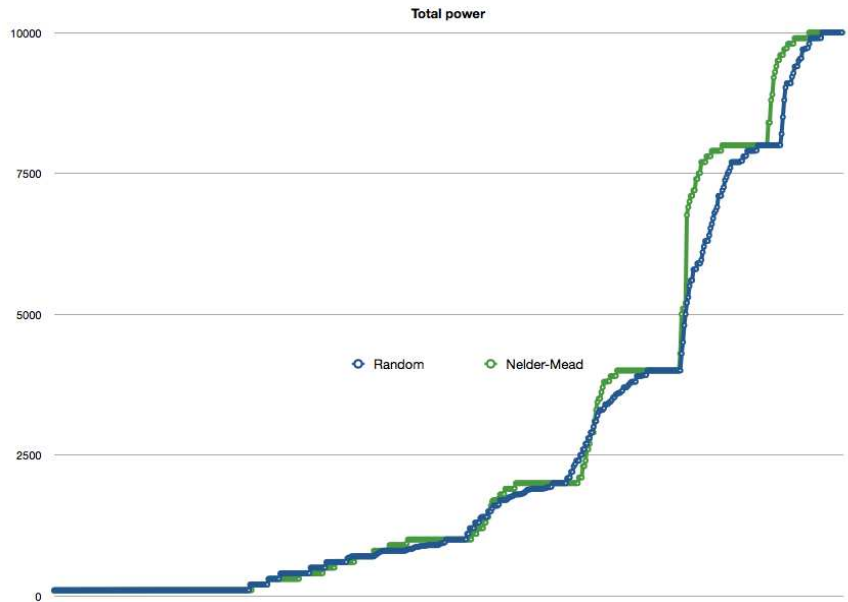}
\caption{Comparisons of the total power used by each algorithm. The X-axis shows an internal variable for the {\it experiment number}. For better readability, the ordering of experiment numbers corresponds with increasing power.}
\label{figure:TotalPower}
\end{figure}
\begin{figure}
\centering
\includegraphics{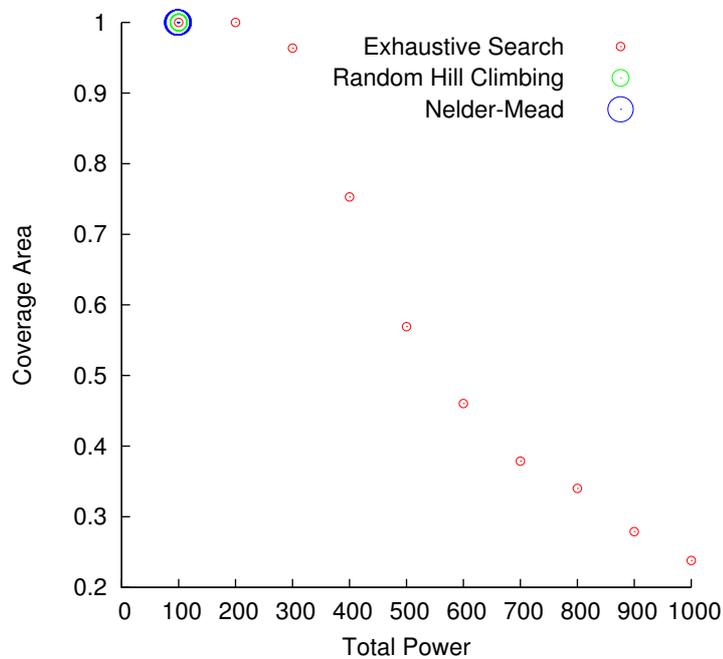}
\caption[Coverage area vs.~ total power]{The tradeoff between coverage-area and total power is illustrated in
this figure with $\alpha = 2$, $N_0 = 10^{-5}$. The Random Hill Climbing
and the Nelder Mead methods achieve the optimum-coverage transmit-power assignment.}
\label{figure:a2n5}
\end{figure}
\begin{figure}
\centering
\includegraphics{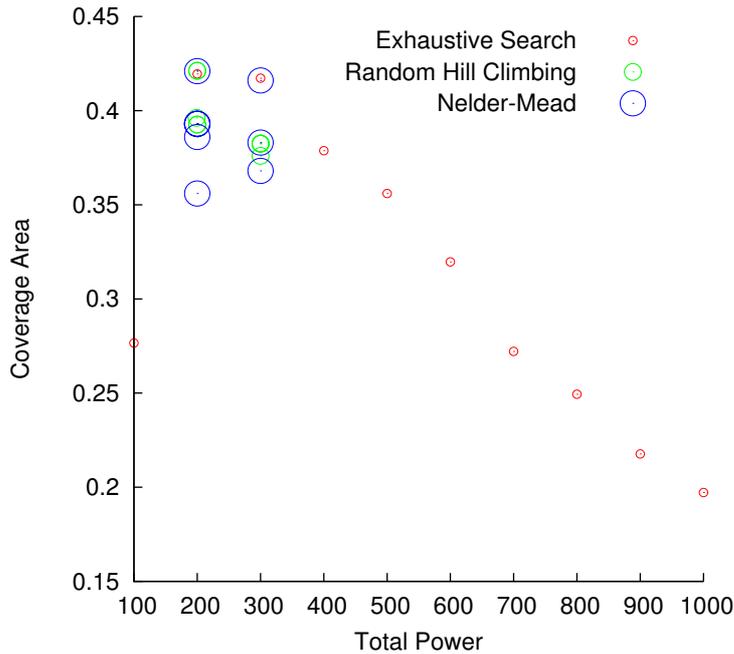}
\caption[Coverage area vs.~ total power]{The tradeoff between coverage-area and total power is illustrated in
this figure with $\alpha = 2$, $N_0 = 10^{-3}$. The coverage-area increases
initially from $0.28$ (noise-limited regime) towards $0.43$, and then it decreases
(interference-limited regime). The Random Hill Climbing and
the Nelder Mead methods achieve the optimum-coverage transmit-power
assignment.}
\label{figure:a2n3}
\end{figure}
We have analyzed the coverage optimization algorithms with the following set of
experiments. All experiments were run on a server with 8 quad-core processors
and 8 GB of RAM. All algorithms were coded in Java, and run on a 64-bit server
VM for J2SE 1.6.
\subsubsection{Comparison with Exhaustive Search}
We have compared the optimum obtained and total transmit power reported by the
three algorithms for the following scenarios:
\begin{itemize}
\item Number of transmitters: 10
\item $p_{min} = 0$ and $p_{max} = 100$ for all transmitters
\item Values of $\alpha$: 2, 3
\item Values of $N_0$: $10^{-3}$, $10^{-5}$
\item Size of Grid: 40 X 40
\end{itemize}

All experiments are run with the same set of transmitter-locations. We show two
of these results in Figures \ref{figure:a2n5} and \ref{figure:a2n3}. We see that
in both cases, both the total power consumed and coverage area reported by the
Random Hill Climbing and the Nelder-Mead methods is comparable to the exhaustive
search. The Nelder-Mead method may give a sub-optimal result, but since its
execution time is small, it can be run multiple times and the best value can be
chosen as the answer. Even with only $10$ transmitters, each run of Exhaustive
Search takes about $11.5$ hours! Hence, we have restricted our comparison
scenarios with Exhaustive Search to 10 transmitters.

\subsubsection{Random Hill Climbing vs Nelder-Mead for 10 to 100 transmitters}

\begin{figure}
\centering
\includegraphics{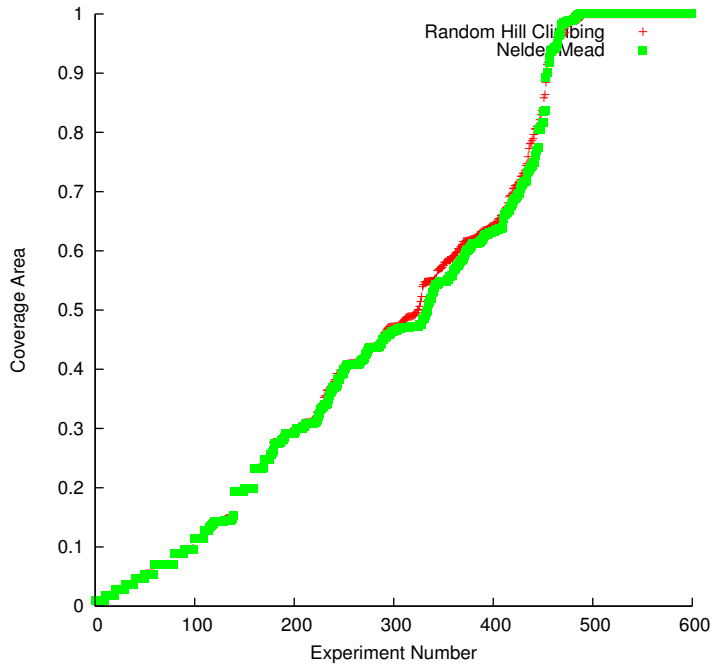}
\caption[Coverage area comparison with Nelder-Mead]{The optimum coverage-area ratio, as found by the Random Hill Climbing and the Nelder-Mead
methods, is illustrated in this figure for up to 100 transmitters. The two
methods are comparable in performance.}
\label{figure:CoverageRatio}
\end{figure}
\begin{figure}
\centering
\includegraphics{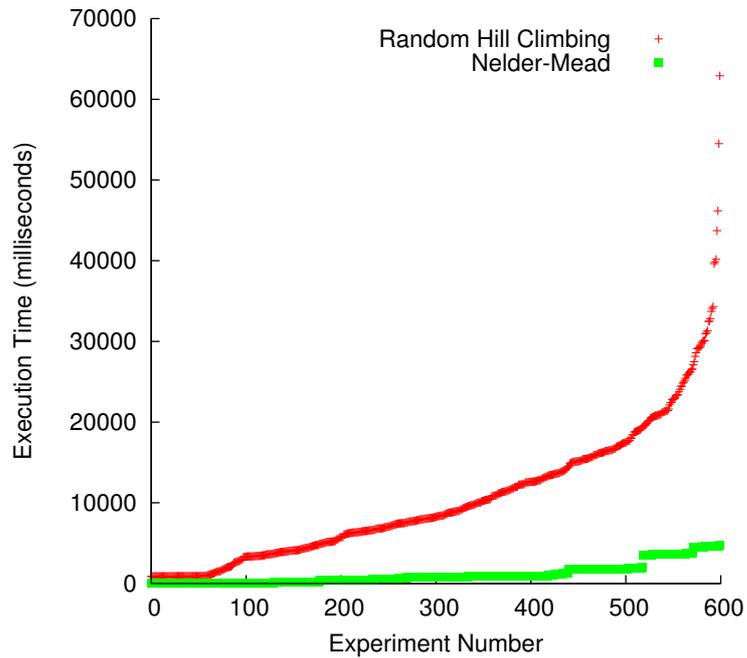}
\caption[Execution time comparison with Nelder-Mead]{Execution time for Random Hill Climbing and
Nelder-Mead for up to 100 transmitters is illustrated in this figure. Nelder-Mead
runs faster.}
\label{figure:ExecutionTime}
\end{figure}
\begin{figure}
\centering
\includegraphics{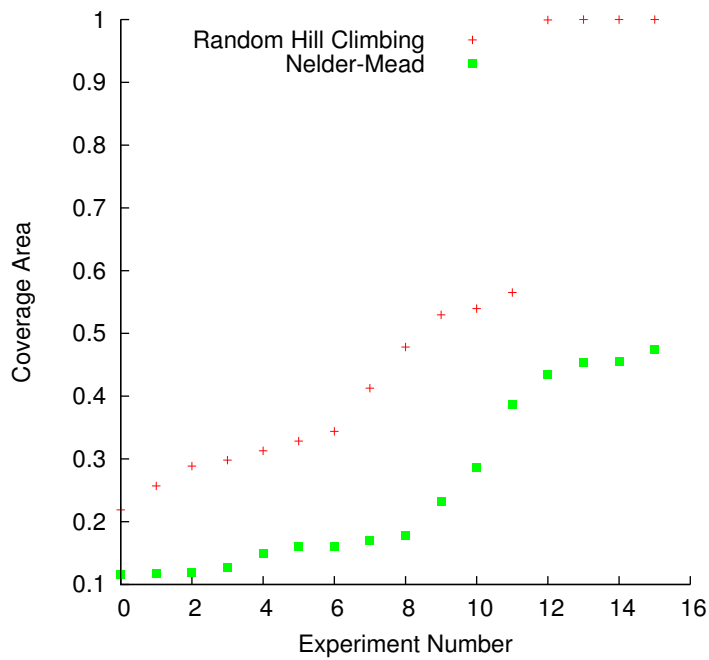}
\caption[Comparing optima for large networks]{Optimum coverage-area found by Random Hill Climbing and Nelder-Mead
for $\ge$ 500 transmitters is illustrated in this figure. Random Hill Climbing
performs better.}
\label{figure:big}
\end{figure}
Since these procedures run faster, we can run them with more transmitters. The
following combinations were run, yielding a total of $600$ experiments:
\begin{itemize}
\item Number of transmitters: 10, 20, 40, 80, 100
\item $p_{min} = 0$ and $p_{max} = 100$ for all transmitters
\item Values of $\alpha$: 2, 3, 4
\item Values of $N_0$: $10^{-3}$, $10^{-4}$, $10^{-5}$, $10^{-6}$
\item Size of Grid: 40 X 40
\end{itemize}
The experiments are run with $5$ fixed sets of transmitter, one set each for
10, 20, 40, 80 and 100 transmitters. Though both methods report nearly the same
maximum area in most cases, the Nelder-Mead
converges faster (Figure \ref{figure:ExecutionTime}).
\subsubsection{Random Hill Climbing vs Nelder-Mead: $\ge$ 500 transmitters}
Our final experiment has the following parameters:
\begin{itemize}
\item Number of transmitters: 500, 1000, 2000, 4000 
\item $p_{min} = 0$ and $p_{max} = 100$ for all transmitters
\item $\alpha$: 2, 3
\item $N_0$: $10^{-3}$, $10^{-5}$
\item Size of Grid: 40 X 40
\end{itemize}
We observed a better optimization result, as shown in Figure \ref{figure:big},
for Random Hill Climbing in each scenario above, though the time taken by both
methods was comparable. The summary of our observations is as follows:
\begin{enumerate}
\item Both Nelder-Mead and Random Hill Climbing report optimum power - this is
verified by comparison with Exhaustive Search. The total power reported at the
optimum point by all three methods is also comparable.
\item Nelder-Mead converges faster than Random Hill Climbing for up to 100
transmitters. However, Random Hill Climbing estimates a better maximum coverage
area for 500 or more transmitters.
\end{enumerate}
\subsection{A note on the asymptotic analysis of presented optimization 
methods}
A key analytic property of an optimization algorithm is its convergence rate. 
In combinatorial optimization (with discrete power values), the asymptotic 
convergence rate is measured by the algorithm's time efficiency. Exhaustive 
search converges in time $O(n k^n)$, where $k$ is the number of power 
levels and $n$ the number of transmitters. We conjecture that optimizing 
$\mathbb{C}$ with discrete power values is NP-hard.

In a general sense, Nelder-Mead and Random Hill Climbing do not necessarily 
converge. Practically, though, the algorithms presented here terminate in 
$O(nm)$ time, where $m$ is the maximum number of iterations. Our experiments 
have not uncovered any non-convergent case and show near-optimal results, and 
further analytical study is required for an asymptotic convergence analysis.

{\em Nelder-Mead}: Few objective functions yield convergence guarantees - for 
example, Singer et al.~\cite{singer} show that strictly convex functions in 
lower dimensions lead to convergence. For our case, further study 
of $\mathbb{C}$ is needed to determine whether it yields convergence for 
Nelder-Mead.

{\em Random Hill Climbing}: Johnson et al.~\cite{johnson} show that convergence 
can be guaranteed if the probability density function for transitioning between 
two feasible solutions is chosen such that paths always exist from local optima 
to global optima. We have used a uniform distribution function for our 
experiments, but further analysis is required to confirm whether this 
guarantees convergence for $\mathbb{C}$.
%
%

\chapter{Conclusions}
\label{chapter:Conc}
In this thesis we present new methods for the computation and optimization of 
an interference-constrained wireless coverage map. We present lower bound 
arguments and algorithms for this coverage problem.

We study this problem in both protocol and SINR models. For the protocol model, 
we exploit the underlying geometric structures by employing Voronoi Diagrams 
and their variants, Power Diagrams, to design efficient algorithms for both 
static and dynamic settings. We present lower bound arguments and optimal 
algorithms meeting these lower bounds for both static and dynamic settings. We 
extend traditional power diagrams and exploit them for building and updating 
coverage maps. The time complexity for the algorithm for the static setting is 
$O(n\log{n})$, and the time complexity for the algorithm to update one 
transmitter in the coverage map in the dynamic setting is $O(\log{n}+k)$ 
(expected) time per update, where $n$ is the total number of transmitters and 
$k$ is the number of neighboring transmitters affected by an update.

In the latter half of the thesis we focus on coverage in the SINR model. We 
relate the analytical difficulty in characterizing the coverage area 
geometrically to the non-convexity of the coverage region of a transmitter. We 
present a probabilistic sampling procedure for estimating the measure of the 
coverage area. We also present a Random Hill Climbing method for coverage area 
optimization by optimal assignment of transmit power to a given set of 
transmitters. The proposed method is flexible, in that the coverage-area 
estimation and optimization can accommodate any computationally tractable 
interference model - including the protocol model. By comparison with 
exhaustive search, we demonstrate that for small networks, the Random Hill 
Climbing method does not get stuck in a local maxima and yields optimum 
coverage, while using the optimal power allocation. By comparison with the 
Nelder-Mead method, we also show that for medium and large networks, the Random 
Hill Climbing method converges fast and yields optimum coverage.

\section{Related recent work and future directions}
The {\em Internet-of-Things} is here, with machine-to-machine (M2M) wireless 
communication networks expected to proliferate, with use in unregulated spectra. 
Coverage mapping and optimization are crucial in this context. Recent work in 
the related literature - for example, Zhang et al.~\cite{zhang2014m2m} 
discusses coverage management using M2M connected wireless devices.

Another associated area of related research is coverage optimization of 
self-organizing and co-operative networks. For example, Gueguen et 
al.~\cite{gueguen2013incentive} discuss co-operative transmission scheduling to 
extend coverage for a wireless network. Altman et al.~\cite{Altman} pose the 
mobile association problem in a game-theoretic framework, with transmitters 
competing and cooperating with one another to maximize network revenue.

More recent tools appearing in the research literature, like those by Kim et 
al.~\cite{kim2014cell}, Chen et al.~\cite{chen2013placement}, and Zhang et 
al.~\cite{zhang2014m2m}, discuss coverage management using measurements from 
wireless devices in the network. These, along with the use of modern data 
analytics tools - for example, Kim et al.~\cite{kim2014cell} and Kazakovtsev 
(\cite{kazakovtsev2013wireless}) who demonstrate methods for estimating and 
optimizing wireless coverage by analyzing radio ``fingerprints'' - appear to be 
candidate tools of the future.

Lastly, we conjecture that ``standard assumptions'', such as those described 
for dynamic addition and deletion of transmitters in the protocol model could 
be removed by tweaks to our algorithms and data structures. We also conjecture 
that a much simpler proof of convexity in geometric SINR (for equal powers) 
exists than is currently known in published literature.

%
\addcontentsline{toc}{chapter}{Appendix I}
\chapter*{Appendix I: A Note on Practical Coverage Planning Tools}
The thesis primarily discusses coverage from an algorithmic perspective. 
However, it was deemed interesting to study practical tools, if only to judge 
the utility of a possible implementation of our algorithms in actual software 
for use as a subroutine in an implemented software product.

Toward this end, we visited {\it AirTight Networks} in Pune (India) to briefly 
survey a practical coverage planning and design tool. {\it AirTight Networks} 
has a software tool called {\it SpectraGuard Planner} that automates placement 
of access points. We discussed the internals of the tool with the developers 
and attempted to identify areas where our ideas may be applied.

Our observations are listed below:
\begin{itemize}
\item The algorithm is based on ray-tracing with obstacles. The input to the
algorithm are the furniture and walls of the room in which coverage is
desired. The algorithm progresses by testing points on a grid for suitability of
placement of access point.
\item The directions in which the rays are traced are arbitrarily chosen. If a
ray hits an obstacle, the obstacle attenuates that ray, and the points on the
grid in coverage are reduced. 
\item Coverage estimates are extrapolated to capacity estimates.
\item The user is allowed to make changes to the access point location to test
whether coverage or capacity improves.
\item We feel that a more structured approach (possibly involving partitioning
using obstacle data) would speed up the algorithm. However, the team maintained 
that their current software suited their business model, and hence an 
additional speed up was not required.
\end{itemize}
Figures I.1 and I.2 show sample outputs of the software.
\begin{figure*}
\begin{center}
\includegraphics[width=5in]{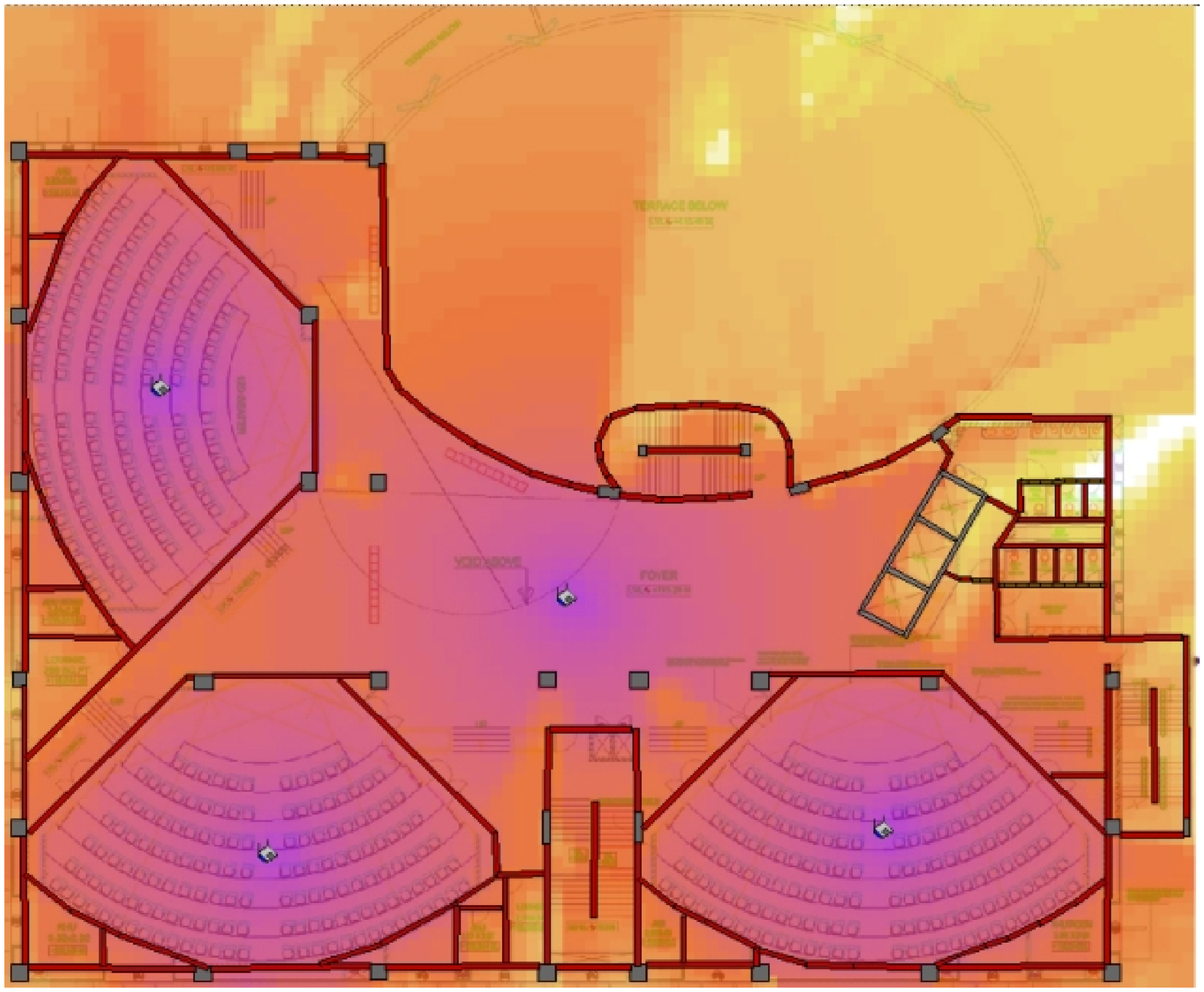}
\end{center}
I.1: WiFi Coverage Map for IITB Convention Center: Created by {\it AirTight 
Networks} using {\em SpectraGuard Planner}
\end{figure*}
\begin{figure*}
\begin{center}
\includegraphics[width=5in]{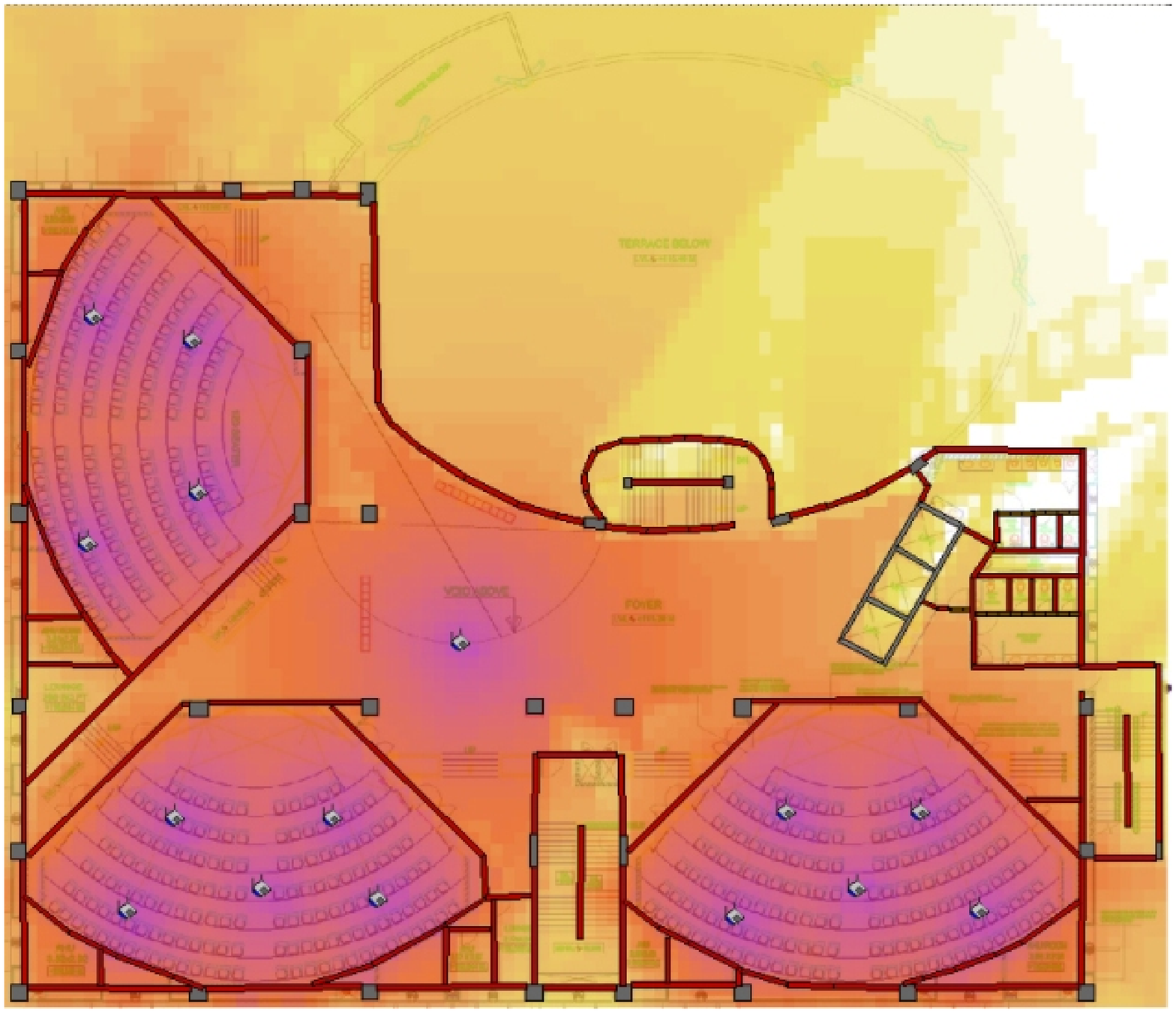}
\end{center}
I.2: WiFi Capacity Map for IITB Convention Center: Created by {\it AirTight 
Networks} using {\em SpectraGuard Planner}
\end{figure*}

%
\bibliographystyle{unsrt}
\bibliography{thesis-bibliography}
\addcontentsline{toc}{chapter}{Acknowledgments and Thanks}
\chapter*{Acknowledgments and Thanks}
This work was done during my affiliations with Reliance Communications
(2005 - 2009), TICET -- the Tata Teleservices IIT-Bombay Center for Excellence
in Telecommunication (2009 - 2010), and Flytxt (2010 to present). Dr.~Vinod
Vasudevan was co-advisor from 2005 to 2009.

To my advisors - Om, Animesh, Vinod: You have always insisted that I challenge 
my comfort zone and reach beyond. Thank you.

A note of thanks to my review committee, Prof.~Ajit Diwan and Prof.~Bhaskaran 
Raman, with special thanks to Prof.~Abhay Karandikar: Your constructive 
feedback, constant support and encouragement made this possible.

Sundar, Amic, Kover: Thank you for being there, on demand, for every demand.

A note of thanks also to Dr.~Pravin Bhagwat and his team at {\em AirTight 
Networks}, for their facilitation to discuss the {\em SpectraGuard Planner} 
tool.

Thanks to Jatin Sharma and Divyanshu Pandey for your programming support with 
MATLAB experiments on hole-filling.

Finally, to one and all associated with me for this project: I'll pay forward.
\end{document}